\theoremstyle{plain}
\newtheorem{theorem}{Theorem}
\newtheorem{lemma}[theorem]{Lemma}
\newtheorem{proposition}[theorem]{Proposition}
\newtheorem{corollary}[theorem]{Corollary}
\newtheorem{definition}{Definition}
\newcommand{\hide}[1]{}
\newcommand{\remove}[1]{}
\newcommand{\suppress}[1]{}
\newcommand{\uu}[1]{\left\lceil\left\lceil{#1}\right\rceil\right\rceil}
\newcommand{\C}{\mathcal{C}}
\newcommand{\Ccal}{\C}
\newcommand{\X}{\mathcal{X}}
\newcommand{\A}{\mathcal{A}}
\newcommand{\eps}{\epsilon}
\newcommand{\aspectR}{\Phi}
\newcommand{\veps}{\epsilon}
\DeclareMathOperator{\cost}{cost}
\DeclareMathOperator{\opt}{\rm OPT}
\DeclareMathOperator{\diam}{diam}
\DeclareMathOperator{\DDim}{ddim}
\newcommand{\ddim}{d}
\newcommand{\neigh}{N}
\newcommand{\olr}{g}
\newcommand{\aspectopt}{\beta}
\newcommand{\netdist}[2]{\epsilon^{#1}_{#2}}
\DeclareMathOperator{\poly}{\rm poly}
\newcommand{\yair}[1]{\textbf{\color{blue}
[YAIR: #1]}\marginpar{\textbf{\color{blue}**}}\typeout{
\the\inputlineno: #1}}
\newcommand{\adi}[1]{\textbf{\color{green}
[ADI: #1]}\marginpar{\textbf{\color{green}**}}\typeout{
\the\inputlineno: #1}}
\newcommand{\alon}[1]{\textbf{\color{purple}
[ALON: #1]}\marginpar{\textbf{\color{purple}**}}\typeout{
\the\inputlineno: #1}}
\newcommand{\sandip}[1]{\textbf{\color{red}
[SANDIP: #1]}\marginpar{\textbf{\color{red}**}}\typeout{
\the\inputlineno: #1}}
\begin{document}
\title{Improved fixed-parameter bounds for Min-Sum-Radii and Diameters $k$-clustering and their fair variants}
\author {
     Sandip Banerjee\footnote{IDSIA USI-SUPSI, Switzerland. Supported in part by SNSF Grant 200021 200731/1},
     Yair Bartal\footnote{The Hebrew University of Jerusalem, Israel. Supported in part by a grant from the Israeli Science Foundation (2253/22).},
     Lee-Ad Gottlieb\footnote{Ariel University, Israel},
     Alon Hovav\footnote{The Hebrew University of Jerusalem, Israel. Supported in part by a grant from the Israeli Science Foundation (2253/22).}
}
\maketitle
\begin{abstract}
We provide improved upper and lower bounds for the Min-Sum-Radii (MSR) and Min-Sum-Diameters (MSD) clustering problems with a bounded number of clusters $k$.
In particular, we propose an exact MSD algorithm with running-time $n^{O(k)}$.
We also provide $(1+\epsilon)$ approximation algorithms for both MSR and MSD with running-times of $O(kn) +(1/\epsilon)^{O(dk)}$ in metrics spaces of doubling dimension $d$.
Our algorithms extend to $k$-center, improving upon previous results, and to $\alpha$-MSR, where radii are raised to the $\alpha$ power for $\alpha>1$.
For $\alpha$-MSD we prove an exponential time ETH-based lower bound for $\alpha>\log 3$. All algorithms can also be modified to handle outliers. 
Moreover, we can extend the results to variants  that observe \emph{fairness} constraints, as well as to the general framework of \emph{mergeable} clustering, which includes many other popular clustering variants. We complement these upper bounds with ETH-based lower bounds for these problems, in particular proving that $n^{O(k)}$ time is tight for MSR and $\alpha$-MSR even in doubling spaces, and that $2^{o(k)}$ bounds are impossible for MSD.

\end{abstract}

\section{Introduction}\label{sec:intro}

In this paper, we consider two basic clustering problems, both of which are well-studied and the subject of very recent interest. These are the \textbf{min-sum radii} (MSR) and \textbf{min-sum diameters} (MSD) clustering problems. For these problems, the input is a set of points equipped with a 
metric distance function along with an integral parameter $k$. The task is to partition the points into $k$ clusters, while minimizing the sum of cluster radii or diameters, respectively. 

These problems have been the subject of study for several decades \cite{PB78,HJ87,MS91,CRW91}, and so it is unsurprising that several of their natural variants have received significant attention in the literature as well. A simple yet challenging one among these is the \emph{outliers} variant, where a solution need only cover $n-g$ of the $n$ input points \cite{BERW24}, where $g$ is the number of outliers. A second, more profound variant is the $\alpha$ version -- that is \textbf{$\alpha$-MSR} and \textbf{$\alpha$-MSD} -- wherein the objective function is sum of the $\alpha$ power of the radius or diameter (where $\alpha>1$) \cite{CP01,BV16}. Additional important variants of these problems, which incorporate various \emph{fairness} constraints, have garnered much recent interest  \cite{AS21,DHLSW23,CXXZ24}. Many of these variants, as well as clustering with lower bound constraints, are captures within the framework of \emph{mergeable} clustering \cite{AS21,DHLSW23}.

In this paper, we consider the general metric setting, and continue in a line of research which studies these problems under the popular \emph{fixed parameter tractable} (FPT) model, wherein $k$ is taken as fixed (see, for example, \cite{BS15,BLS23,CXXZ24}). Algorithms under this model may achieve superior run-time dependence on $n$, 
at the cost of a steep (typically exponential) dependence on parameter $k$. 
We seek both exact and approximate algorithms for MSR and MSD and their variants, and improve upon many previous FPT results. Our algorithms provide fixed parameter polynomial time approximation schemes (PTAS) -- that is $(1+\veps)$-approximations with run-time polynomial in $n$ and dependent on $k$ and $\veps$ -- for all these problems.

Our results for general metric spaces and fixed $k$ (summarized in Tables \ref{tbl:exact_results},\ref{tbl:hardness} and \ref{tbl:results}) are as follows:

\paragraph{Exact algorithms.} 

For MSR, it is easy to see that a brute-force algorithm solves the problem in time $n^{O(k)}$.
This algorithm can also be used for the case of $g$ outliers, or when the distance is raised to the power $\alpha >1$.
Our first contribution is showing that the naive algorithm is in fact optimal: Assuming that the Exponential Time Hypothesis (ETH) holds, MSR cannot be solved in time $n^{o(k)}$. This lower-bound holds for $\alpha$-MSR and in the presence of outliers as well.

For MSD, we improve upon the algorithm presented in \cite{BS15} with run-time $n^{O(k^2)}$, proving that it can be modified to create clusters in increasing order of diameter, as the intersection at most of a constant number of balls (rather than $k$ in the original algorithm). This allows us to match for metric MSD the run-time of $n^{O(k)}$ previously known only for metric MSR and Euclidean MSD \cite{CRW91}. Here too we can handle outliers without increasing the run time.
See Table \ref{tbl:exact_results}.
In terms of hardness, we show that assuming ETH, MSD does not admit algorithms with run-time $2^{o(k)}$. For $\alpha$-MSD, we show that ETH rules out a run-time of $n^{o(k)}$ for $\alpha \in (1,\log_2 3]$ and $2^{o(n)}$ for $\alpha > \log_2 3$. See Table \ref{tbl:hardness}.

\paragraph{Approximation algorithms.} 
We consider $(1+\veps)$-approximation algorithms for a wide of range of settings, also in the presence of outliers. We also give approximation algorithms for 
$\alpha$-MSR (again, also in the presence of outliers), while ruling out a PTAS for $\alpha$-MSD (See Table \ref{tbl:hardness}).
The upper bound can be viewed as an approximation for the $l_\alpha$-norm of the radii version and in particular for the $k$-center problem, in which the cost of the solution is the maximum radius over all solution ball, improving upon existing results \cite{FM18}.

Our approximation run-times all feature exponential dependence on the \emph{doubling dimension} (denoted by $d$), a widely used measure of the growth rate of metric space. We recall that by definition $d \le \log n$ -- so that our results give PTAS for all values of $d$. The assumption that the dimension is low is reasonable in applied settings which allow efficient dimensionality reduction \cite{BFN22}. 

The initial step for all these algorithms is a decomposition method, similar to the technique of \cite{BBGH24}. This partitions the original problem into individual sub-problems -- each sub-problem has bounded aspect ratio and diameter within a fixed factor of the optimal cost of the original problem. This enables us to consider $\veps$-nets of only limited fineness for each sub-problem, and in turn allows us to bound the number of points in each net using the doubling dimension of the space. The solutions on the $\veps$-nets of the individual sub-problems are computed and then merged into a single solution for the original problem.

For each sub-problem, the algorithm first enumerates over a bounded set of guesses for estimates of the optimal cost of the sub-problem. 
For each such choice, it builds candidate solutions in a recursive manner: Given a previously computed partial solution, we iterate over all points contained in yet undiscovered clusters, and for each one compute a candidate covering cluster containing it. The bounded aspect ratio of the space together with the bound on total cost allows us to bound the number of candidate radii for each candidate cluster.
This process becomes more intricate for MSD and its variants, where the choice of the point and the task of creating a valid cluster are much more involved. 

For both MSR and MSD, we give $(1+\eps)$-approximate solutions in time
$\left(\frac{1}{\eps} \right)^{O(kd)} + \min \{ O(kn), 2^{O(d)} n \log n\}$,
where the second term is due to the decomposition step, and the first to the algorithmic run on the individual sub-problems.
The presence of outliers adds a factor of $g^{O(d)}$ to the first term in the MSR and MSD run-times, and for MSD we also have an additional factor of $\binom{k +\olr}{\olr}$. For $\alpha$-MSR, the dependence on $\frac{1}{\veps}$ is replaced with $\frac{\alpha}{\veps}$. See Table \ref{tbl:results}.

\paragraph{Extension to fairness and mergeable clustering.} 
The notion of \emph{fairness} describes a solution which is intuitively balanced. Many different notions of fairness have been suggested for various clustering problems. \cite{CXXZ24} defined a fair version of MSR, wherein we are given as input a coloring of the points, and require that at most a predetermined number of centers may be chosen from each color. We obtain for this Fair MSR problem the same exact and approximation results as for regular MSR.
Previously, only a $(3+\veps)$-approximation was known for general metric space.

The above notion of fairness does not apply to MSD, where clusters are not centered at points. Instead we consider the definition introduced in \cite{AS21}, who defined a class of \emph{mergeable} clustering problems. A clustering problem is mergeable if for every valid solution, merging any two clusters in this solution will again yield a valid solution. It was shown that various popular clustering constraints (including interesting several variants of fairness \cite{DHLSW23} , and lower bound constraints \cite{AS21}), are all particular cases of mergeable clustering. For mergeable MSD we give a $(1+\veps)$-approximation in time $\left(\frac{1}{\eps}\right)^{O(kd)} 2^{O(k/\veps)}+ 2^{O(d)}n \log n + O(kn)$.

\begin{table}
    \centering
    \caption*{FPT exact algorithms for metric MSD}
    \begin{tabular}{|c |c c c|}
     \hline
      & Run-time & Ref. & Previous \\ [0.5ex] 
     \hline
     Exact MSD & $n^{O(k)}$ & Thm \ref{thm:exact-msd} &  $n^{O(k^2)} $ \\
     Exact MSD with outliers & $n^{O(k)}$ & Thm \ref{thm:exact-msd-outliers} &  - \\
     Exact Fair MSD & $n^{O(k)}$ & Thm \ref{thm:exact-mergeable-msd} & - \\
     \hline
     \end{tabular}
    
    \caption{Previous result is due to \cite{BS15}.}
   \label{tbl:exact_results}
\end{table}

\begin{table*}
    \caption*{ETH hardness results}
    \centering
    \begin{tabular}{| c | c c c |}
     \hline
        & Run-time & Ref. & Notes \\
        \hline
        MSR   & $n^{\Omega(k)}$ & Thm \ref{thm:msr-hardness}& \\
        \&   & \& & &  even when $d = O(1)$ \\
        $\alpha$-MSR   & $n^{\Omega(\log(\aspectR))}$ & Thm \ref{thm:msr-hardness-aspect}& \\
        \hline
        MSD & $2^{\Omega(k)}$ & Cor \ref{cor:MSD-hardness}& even for PTAS in Euclidean \\
        \hline
        $\alpha$-MSD  & $n^{\Omega(k)}$ & Thm \ref{thm:hardness-alpha-msd2}& $\forall \alpha>1$, even when $d = O(1)$  \\
         & $2^{\Omega(n)}$ & Thm \ref{thm:hardness-alpha-msd1} & No PTAS, $\forall \alpha > \log_2 3$ and $k \geq 3$ \\ [0.5ex] 
      \hline
     \end{tabular}
    
    \caption{Hardness for exact algorithms, and in some cases also for PTAS. The first bound holds for any $k \leq n^{1- o(n)}$.}
    \label{tbl:hardness}
\end{table*}

\begin{table}[!ht]
\caption*{Approximation Algorithms}
\begin{adjustwidth}{-1.3cm}{} 
\centering
\begin{tabular}{|c |c | c | c | c c c|}
 \hline
  & Algorithm & Decomposition & Ref. && Previous results & \\ 
  & run time & run time && Approx. & Time & Setting  \\ 
 \hline
MSR  & $\left(\frac{1}{\eps}\right)^{O(kd)}$ & $O(kn)$ & Thm \ref{thm:msr-approx} &  $(1+\veps)$ & $2^{O(kd\log (k/\epsilon))}n^3$ & Euclidean \\
& & \emph{or} & & 
$(2+\epsilon)$ & $2^{(k\log k/\epsilon)}n^3$ & Metric \\
  MSD & $\left(\frac{1}{\eps}\right)^{O(kd)}$ & $2^{O(\ddim)}n \log n$ & Thm \ref{thm:msd-approx} & $(6+\eps)$ & $n^{O(1/\eps)}$ & Metric \\
  $\alpha$-MSR & $\left(\frac{\alpha}{\eps}\right)^{O(kd)}$ &  & Thm \ref{thm:approx-alpha-msr} & - && \\
  $k$-center & $\left(\frac{1}{\eps}\right)^{O(kd)}$ &  & Thm \ref{thm:approx-k-center} & $(1+\eps)$ & $n^2k^k\left(\frac{1}{\eps}\right)^{O(kd)}$ & Metric\\
 \hline
  MSR (outliers) & $\olr^{O(d)}\binom{k + \olr}{\olr}\left(\frac{1}{\eps}\right)^{O(kd)}$ & $O((k+g)n)$ & Thm \ref{thm:approx-msr-outliers} & $(3+ \eps)$ & $n^{O(1/\eps)}$ & Metric \\
  MSD (outliers)& $\olr^{O(d)}\binom{k + \olr}{\olr}\left(\frac{1}{\eps}\right)^{O(k\ddim)}$ & \emph{or} & Thm \ref{thm:approximate-msd-outliers} & $(6+ \eps)$ & $n^{O(1/\eps)}$ & Metric \\
  $\alpha$-MSR (outliers)&  $\olr^{O(d)}\binom{k + \olr}{\olr}\left(\frac{\alpha}{\eps}\right)^{O(kd)}$ & $2^{O(\ddim)}n \log n$ & Thm \ref{thm:approx-alpha-msr-outliers} & - && \\
 \hline
 & & $O(kn) + \poly(k)$ &&&& \\
  Fair MSR & $\left(\frac{1}{\eps}\right)^{O(kd)}$ & \emph{or} & Thm \ref{thm:approx-fair-msr} & $(3+\epsilon)$ & $2^{(k\log k/\epsilon)}n^3$ & Metric 
  \\
 & & $2^{O(d)}n \log n + \poly(k)$ &&&& \\
 \hline
  Fair MSD & $\left(\frac{1}{\eps}\right)^{O(kd)}n$ & $2^{O(d)}n \log n + O(kn)$ & Thm \ref{thm:approx-fair-msd} & - && \\
 \hline
 \end{tabular}
\end{adjustwidth}

\caption{All our results are $(1+\eps)$-approximations for metrics with doubling dimension $d$. The first previous result is due to \cite{BLS23}.
The second and seventh are due to \cite{CXXZ24}, who assumed general metric spaces. 
These previous results were probabilistic, while ours are all deterministic.
The third, fifth and sixth results are due to \cite{BERW24}.
The fourth result is due to \cite{FM18}.
In \cite{BBGH24} the decomposition results appeared, as well as some additional results which are better under some parameterizations (see Related Work).
}
\label{tbl:results}
\end{table}

\subsection{Related work}\label{sec:prior-results}

\paragraph{MSR:} 
This problem is known to be NP-hard even for metrics of constant doubling dimension, as well as for metrics induced by weighted planar graphs \cite{GKKPV10}. In the metric setting, an exact algorithm with quasi-polynomial run-time $n^{O(\log n\log \aspectR)}$ is known (where $\aspectR$ is the aspect ratio of the points) \cite{GKKPV10}, as is a related run-time of $n^{O(d\log d+\log \aspectR)}$ \cite{BBGH24}, where $d \leq \log n$ is the doubling dimension. For Euclidean space, a recent result gives a run-time of $n^{O(d \log d)}$ \cite{BBGH24} for the discrete problem (where $d$ is the Euclidean dimension).

A line of work has recently yielded a $(3+\epsilon)$-approximate algorithm for the metric problem \cite{CP01,FJ22,BERW24,BLS23}. In \cite{BBGH24} we have recently provided a $(1+\veps)$-approximation in time $O(kn)+k^{O_\eps(d)}(\log k)^{\Tilde{O}(d^2)}$. 
Our results here improve upon this bound when $k = o(d)$, reducing the dependence of the exponent to (near) linear in $d$.

Considering algorithms with run-time exponential in $k$:
A $(2+\veps)$-approximation to the metric problem was given in \cite{CXXZ24}, and for Euclidean spaces a $(1+\veps)$-approximation is known \cite{BLS23}. The former paper also gave an approximation algorithm for a fair version of MSR.

Turning to $\alpha$-MSR in generally metrics: 
A $c^{\alpha}$-factor algorithm was given in \cite{CP01} (for some constant $c$), and a $(1+\veps)$-approximation with run-time $O(kn) + 2^{(\frac{\alpha \ddim \log k}{\eps})^{O(\min\{\alpha,\ddim\})}}$ was given in \cite{BBGH24}. Here too, our results improve upon this bound when $k=d^{o(\min\{\alpha,\ddim\})}$,  reducing the dependence of the exponent in $d$.
In \cite{BV16} a bi-criteria quasi-polynomial time algorithm was given for general metrics.
They also show that for large $\alpha \geq \log n$, it is NP-hard to achieve approximation $o(\log n)$.

\paragraph{MSD:}
In contrast to MSR, MSD is NP-complete even for constant $k\geq 3$ \cite{PB78}. This is true even when the space is restricted to graph metrics. 
In Euclidean space, an exact algorithm with run-time $n^{O(k)}$ was known \cite{CRW91}, while for metric space only $n^{O(k^2)}$ was previously known \cite{BS15}.

Turning to approximation algorithms, the metric MSR algorithm of \cite{BERW24} immediately gives 
a ($6+\epsilon$)-factor approximation for metric MSD. 
A $2$-approximation is known for constant $k$ \cite{DMTW00}. The same paper gave a bi-criteria approximation. They also showed that it is NP-hard to obtain a $(2-\epsilon)$-approximation in the metric case for general $k$.
A $(1+\veps)$-approximation algorithm with run-time $O(kn)+k^{O(d)}(\log k)^{\Tilde{O}(d^2)} 2^{(1/\veps)^{O(\ddim)}}$ was given in \cite{BBGH24}. Our results, when $k=(1/\veps)^{o(\ddim)}$, reduce the dependence on $d$ from doubly exponential in $d$.

\section{Preliminaries and definitions}\label{sec: def-notation}

\paragraph{Definitions and notation.}
Throughout, we take $(X,d)$ to denote the input metric space ($n=|X|$), and $d(x,y)$ to denote the pairwise distance metric between $x,y \in X$.
The diameter of the point set 
is denoted $\diam(X)$. 
The aspect ratio of the space $\aspectR$ is the ratio between the diameter of the space and the minimum inter-point distance in the space.
The distance between a point $y \in Y$ and set $X \subset Y$ is
$d(y,X) = \min_{x \in X} d(x,y)$.
Let $B(x,r)$ define a ball centered at $x\in X$ of radius $r\geq 0$.
We also make use of following notation: For $u \in \mathbb{N}, [u]=\{1,\ldots,u\}$.
Define the operator: $\uu{\cdot} :\mathbb{R}^{\geq 0}\to\mathbb{R}^{\geq 0}$, which rounds $x \in \mathbb{R}^{+}$ to the smallest power of $2$ larger than $x$: $\uu{x} = 2^{\lceil\log x \rceil}$. Also set $\uu{0} = 0$.

The \emph{doubling dimension} of a metric space $X$, denoted $\ddim = \DDim(X)$, is the smallest $m>0$ such that every ball of radius $r$ in $X$ can be covered by at most $2^m$ balls of radius $r/2$. In the context of the Euclidean space, we will use $d$ to denote the dimension of the space, noting that its doubling dimension is $O(d)$.

The optimal solution is denoted $\opt$ and its cost is denoted $\cost(\opt)$.

\paragraph{Nets and point hierarchies.}
An \emph{$\eps$-net} of $X$ is a subset $S \subset X$ with the following properties:
(i) Packing: $S$ is $\eps$-separated, i.e. all distinct $u,v \in S$ satisfy $d(u,v) \ge \eps$; and
(ii) Covering: 
every point $x \in X$ is strictly within distance
$\epsilon$ of some point $z \in S$, that is $d(x,z) < \eps$.
A \emph{point hierarchy} \cite{KL04} consists of a series of nets $S_{2^i}$ for $i = 0,\ldots,\lceil \log \diam(X) \rceil$, where each $S_{2^i}$ is a $2^i$-net of $S_{2^{i-1}}$, and $S_1=X$. We may refer to $S_{2^i}$ as the $i$-th level of the hierarchy.
We say that two points $x,y \in S_{2^i}$ are $c$-\emph{neighbors} if $d(x,y) < c \cdot 2^i$. 
A hierarchy for $X$ which also maintains all $c$-neighbor pairs (for constant $c$) can be constructed in time $2^{O(\DDim(X))}n \log n$ \cite{HM06,CG06}.

\paragraph{Min-sum radii clustering.}
Given a metric $(X,d)$ and an integral parameter $k$, our task is to choose a set  of at most $k$ balls $\mathcal B = \{ B_1, B_2, \ldots B_k\}$, where $B_i = B(x_i,r_i)$,
such that their union covers $X$, i.e. $\cup_{i\in [k]} B_i=X$. The objective to be minimized is the sum of the radii 
$\sum_{i\in [k]} r_i$.
In $\alpha$-MSR ($\alpha \geq 1$), the cost is $\sum_{i\in[k]}r_i^\alpha$.
The above definition requires that center $x_i$
be a point of $X$; this is the 
{\it discrete} version of the MSR problem.

\paragraph{Fair-MSR}
We study the notion of fairness defined in \cite{CXXZ24}. We are given two additional inputs: the first is a coloring, represented by a disjoint partition $Y_1,...,Y_m$ of $X$, and the second is a set of integers $k_1,...,k_m$, such that $\sum_{i=1}^m k_i = k$.
The cost function is defined the same way as in MSR, but a solution is feasible if and only if at most $k_i$ of its center points are from $Y_i$ for every $i \in [m]$.

\paragraph{Min-sum diameters clustering.}
Given a metric $(X,d)$ and an integral parameter $k$, our task is to segment the points into $k$ disjoint clusters $\mathcal C = \{ C_1,\ldots,C_k \}$, while minimizing the sum of their diameters: $\sum_{i\in [k]} \diam (C_i)$. 
In $\alpha$-MSD ($\alpha \geq 1$), the cost is defined as $\sum_{i\in [k]}(\diam (C_i))^\alpha$.

\paragraph{Mergeable min-sum diameters clustering.}
We study the notion of mergeability presented in \cite{AS21} and studied for MSR in \cite{DHLSW23}.
We say that a clustering problem is \emph{mergeable} if for any feasible solution, a solution obtained by merging two clusters is still feasible.
We assume that there is an efficient procedure which checks the feasibility of a solution in time $f(n)$. 

This framework includes many important clustering variants such as clustering with lower bound constraints \cite{AS21}.
In \cite{DHLSW23} it was shown that several clustering constraints, and in particular fairness constraints, are mergeable. We refer to these variants as Fair MSD. In particular, let us consider one definition of \emph{balanced} clustering, originally defined in 
\cite{CKLV17}:
Given a partition of $X$ into two colors $X_1, X_2$ and a parameter $b \in [0,1]$, we say that a cluster $C$ is $b$-balanced if $\min\left\{\frac{|C \cap X_2|}{|C \cap X_1|}, \frac{|C \cap X_1|}{|C \cap X_2|}\right\} \geq b$. This constraint is clearly mergeable and can be validated in time $f(n)=O(n)$.

\paragraph{Clustering with outliers.}
A common extension to clustering problems is to allow the solution to include up to $\olr$ outliers - points in $X$ which are not covered by any cluster. MSR and MSD with outliers are known to have constant factor polynomial approximations with factors $(3+\eps)$ and $(6+\eps)$ respectively~\cite{BERW24}, based on the primal-dual rounding approach.

\section{Exact algorithms}

Our main contribution in terms of exact algorithms is an $n^{O(k)}$ time algorithm for the MSD problem.
For completion of the discussion we begin by recalling that a similar bound trivially holds for MSR.

\paragraph{MSR.}
For MSR, there is a simple brute-force exact algorithm achieving the following bound:

\begin{proposition}\label{prop:exact-msr}
The MSR problem and the $\alpha$-MSR problem with $k$ clusters  can be solved exactly in time $n^{O(k)}$ in general metric spaces. This holds also in the case of $g$ outliers.
\end{proposition}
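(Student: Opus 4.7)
The plan is to prove this via a direct brute-force enumeration. The proposition is essentially a folklore observation, so the proof is short and the ``main obstacle'' is really just identifying the right small candidate set of balls.

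First I would observe that in the discrete MSR problem, any feasible solution can be modified without increasing cost so that every ball $B(x_i, r_i)$ has its center $x_i \in X$ (given by definition) and its radius $r_i \in \{0\} \cup \{d(x_i, y) : y \in X\}$. Indeed, we can shrink $r_i$ until the ball's boundary touches the farthest covered point without losing any coverage. This shows the set of ``candidate balls'' has size at most $n \cdot (n+1) = O(n^2)$.

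Next I would enumerate over all $k$-tuples of candidate balls; there are $\binom{O(n^2)}{k} = n^{O(k)}$ such tuples. For each tuple $(B_1, \ldots, B_k)$ we verify in time $O(kn)$ whether $\bigcup_i B_i = X$, and if so, compute $\sum_i r_i$ (or $\sum_i r_i^\alpha$ in the $\alpha$-MSR variant). Returning the cheapest feasible tuple gives the exact optimum. The total running time is $n^{O(k)}$ as claimed.

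For the $g$-outliers variant, the only modification is the feasibility check: instead of demanding $\bigcup_i B_i = X$, we demand $|\bigcup_i B_i| \geq n - g$, which can still be tested in $O(kn)$ time. Correctness follows from the same ``shrink-the-radius'' argument applied only to non-outlier points: in an optimal outlier solution, each radius can again be assumed to equal $d(x_i, y)$ for some covered $y \in X$. Hence the same $n^{O(k)}$ bound holds, completing the proof.
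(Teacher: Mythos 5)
Your proof is correct and follows essentially the same route as the paper: brute-force enumeration over the $O(n^2)$ candidate balls (centers in $X$, radii equal to center-to-point distances), checking coverage---or coverage of at least $n-g$ points in the outlier case---for each of the $n^{O(k)}$ tuples. The shrink-the-radius observation you make explicit is exactly what justifies the paper's restriction to point-to-point distances as candidate radii.
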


The algorithm enumerates all possible ball centers and radii. As there are $O(n)$ possible centers and $O(n^2)$ possible radii, the bound follows by enumerating all possible solutions of $k$ clusters.
For the outliers variant, we consider all solutions which cover at least $n - \olr$ points

\paragraph{MSD.}
The case of MSD is considerably more involved.
For this problem we adapt and improve upon the algorithm of \cite{BS15}.
The running time of their proposed algorithm was $n^{O(k^2)}$, and we improve this to $n^{O(k)}$.
The key observation of \cite{BS15} is that for all pairs of clusters 
$C_i,C_j$ in $\opt$ there exists a pair of points $c^{(i)}_j \in C_i, c_i^{(j)} \in C_j$ for which $d(c^{(i)}_j, c^{(j)}_i) > \diam(C_i) + \diam(C_j)$ holds, or else the two clusters may be joined into a single cluster without increasing the cost.
We say that $c^{(i)}_j$ is a \emph{witness} for cluster $C_i$ with respect to $C_j$.
The algorithm in \cite{BS15} iterates through all possible combinations of diameters, and for each combination enumerates all possible candidate witness sets for each cluster.
For each combination it constructs each cluster as the set of points whose distance from all witnesses is bounded by the chosen diameter bound. The optimal solution is the minimum cost solution wherein every point is assigned to some cluster.

Our improvement over that of \cite{BS15} comes from bounding the number of clusters close to a cluster. We use the following notation and its corresponding property, enabling us to enumerate only a constant number of witnesses per cluster, even in the approximation algorithm:

\begin{definition}[Neighborhood]
Let $\Ccal$ be a solution for the MSD problem.
The \emph{-neighborhood} of a cluster $C \in \C$,
denoted $\neigh^\C(C)$, is the set $\{C'|C' \in \C \setminus\{C\}, d(C,C') \leq \diam(C), \diam(C) \leq \diam(C')\}$.
\end{definition}

\begin{lemma}\label{lem:msd_intersection_bound}
Let $\C$ be a solution for MSD. There is a solution for MSD $\C^\star$ such that $\cost(\C^\star) \leq \cost(\C)$ and for every $C \in \C^\star$, $\neigh^{\C^\star}(C) \leq 4$.
\end{lemma}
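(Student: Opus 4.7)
The plan is to iteratively merge clusters until no cluster has a neighborhood of size at least $5$. Concretely, while some $C \in \C$ satisfies $|\neigh^\C(C)| \geq 5$, I replace $C$ together with all the clusters in $\neigh^\C(C) = \{C_1, \dots, C_m\}$ (with $m \geq 5$) by the single merged cluster $\tilde C = C \cup \bigcup_{i=1}^m C_i$, and show that this operation does not increase the total cost. The resulting solution after the process terminates will be the desired $\C^\star$.

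The key step is a diameter bound on $\tilde C$ obtained by routing every pair of points through the common bridge $C$. Since $d(C,C_i) \leq \diam(C)$ for each neighbor, I can pick witnesses $u_i \in C_i$ and $v_i \in C$ with $d(u_i,v_i) \leq \diam(C)$. Then for any $x \in C_i$ and $y \in C_j$ with $i \neq j$, a five-hop path $x \to u_i \to v_i \to v_j \to u_j \to y$ gives
\[
d(x,y) \;\leq\; \diam(C_i) + \diam(C) + \diam(C) + \diam(C) + \diam(C_j) \;=\; \diam(C_i) + \diam(C_j) + 3\diam(C).
\]
The other pair types (within a single cluster, or across $C$ and some $C_i$) yield weaker bounds. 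Denoting by $D_1 \geq D_2$ the two largest diameters among $C_1, \dots, C_m$, this shows $\diam(\tilde C) \leq D_1 + D_2 + 3\diam(C)$.

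Next I compute the change in cost. The pre-merge contribution is $\diam(C) + \sum_i \diam(C_i)$, so
\[
\diam(\tilde C) - \diam(C) - \sum_{i=1}^m \diam(C_i) \;\leq\; 2\diam(C) - \sum_{i \notin \{(1),(2)\}} \diam(C_i) \;\leq\; (4-m)\diam(C),
\]
where the final inequality uses the defining property $\diam(C_i) \geq \diam(C)$ of a neighbor for each of the $m-2$ remaining terms. For $m \geq 5$ this is strictly negative (note $\diam(C) > 0$, else $\neigh^\C(C)$ must be empty in a partition since $d(C,C_i) \leq 0$ cannot occur), so the merge strictly reduces cost. Termination is immediate since each merge also shrinks $|\C|$ by $m \geq 5$, so at most $|\C|/5$ iterations occur before no cluster has $\geq 5$ neighbors. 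At that point, $|\neigh^{\C^\star}(C)| \leq 4$ for every $C \in \C^\star$ and $\cost(\C^\star) \leq \cost(\C)$.

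The main thing to verify carefully is the cross-cluster shortcut bound: the asymmetric definition of neighborhood lets me use \emph{the same} cluster $C$ as a bridge for every pair of neighbors, so the cross-distance between $C_i$ and $C_j$ costs only $3\diam(C)$ beyond $\diam(C_i)+\diam(C_j)$. That is precisely what makes merging profitable at $m=5$ and gives the threshold $4$.
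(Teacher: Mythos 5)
Your proof is correct and follows essentially the same route as the paper: merge $C$ with its entire neighborhood whenever $|\neigh^{\C}(C)|\geq 5$, bound $\diam(\tilde C)\leq D_1+D_2+3\diam(C)$ by bridging through $C$, compare with the pre-merge cost using $\diam(C_i)\geq\diam(C)$, and terminate because each merge reduces the number of clusters. Your explicit five-hop routing and the $\diam(C)>0$ remark just spell out details the paper leaves implicit.
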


\begin{proof}
We initialize $\C^\star$ to be $\C$.
If there is $C \in \C^\star$ such that $\neigh^{\C^\star}(C) > 4$, denote by $r_1,r_2$ the diameters of the two largest clusters in $\neigh^{\C^\star}(C)$.
$\diam(C\cup(\bigcup_{C' \in \neigh^{\C^\star}(C))}C') \leq r_1 + r_2 + 3\diam(C)$, and $\diam(C) + \sum_{C' \in \neigh^{\C^\star}(C))}\diam(C') \geq r_1 + r_2 + \diam(C) + (|\neigh^{\C^\star}(C))| - 2)\diam(C) > r_1 + r_2 + 3\diam(C)$, and we may replace the clusters of $\neigh^{\C^\star}(C))$ and $C$ with their union without increasing the cost.
Since this process reduces the number of clusters, it can be done only a finite number of times, after which the condition holds.
\end{proof}

\begin{theorem}\label{thm:exact-msd}
The MSD problem with $k$ clusters can be solved exactly in time $n^{O(k)}$ in general metric spaces.
\end{theorem}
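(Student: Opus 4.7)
The plan is to refine the witness-based enumeration algorithm of \cite{BS15} so that each cluster requires only a constant number of witnesses instead of $\Theta(k)$. I begin by arguing that we may assume, without loss of generality, that the optimal solution $\C^\star$ satisfies two structural properties simultaneously: (i) BS15's mergeability property, that every pair $C_i, C_j$ admits witnesses $c_j^{(i)} \in C_i, c_i^{(j)} \in C_j$ with $d(c_j^{(i)}, c_i^{(j)}) > \diam(C_i) + \diam(C_j)$, since otherwise $C_i \cup C_j$ has diameter at most $\diam(C_i) + \diam(C_j)$ and we may merge; and (ii) the bounded-neighborhood property $|\neigh^{\C^\star}(C)| \leq 4$ guaranteed by Lemma~\ref{lem:msd_intersection_bound}. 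Since each modification only reduces the number of clusters without increasing cost, iterating both terminates at a solution satisfying both properties and of cost $\leq \cost(\opt)$.

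The algorithm enumerates: (a) a non-decreasing diameter sequence $\delta_1 \leq \cdots \leq \delta_k$ whose entries are chosen from $\{0\} \cup \{d(x,y):x,y \in X\}$, giving $O(n^{2k})$ guesses; and (b) for each $i$, a witness set $W_i \subseteq X$ with $|W_i| \leq 5$, giving $O(n^{5k})$ guesses. For each joint guess, process clusters in order $i=1,\ldots,k$, maintaining the set $U$ of as-yet-unassigned points. At step $i$, let
$$ \hat{C}_i \;=\; U \,\cap\, \bigcap_{w \in W_i} B(w, \delta_i), $$
verify that $W_i \subseteq \hat{C}_i$ and $\diam(\hat{C}_i) \leq \delta_i$, assign $\hat{C}_i$ as cluster $i$, and update $U \leftarrow U \setminus \hat{C}_i$. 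Accept the guess iff $U = \emptyset$ after $k$ steps, and output the minimum-cost accepted clustering.

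For correctness, consider the guess corresponding to $\C^\star$ processed in increasing order of diameter, with $W_i$ consisting of one BS15 witness $c_j^{(i)} \in C_i$ per \emph{later-processed} neighbor $C_j \in \neigh^{\C^\star}(C_i)$, plus an arbitrary anchor point of $C_i$ if that set is empty; by the lemma, $|W_i| \leq 5$. The inclusion $C_i \subseteq \hat{C}_i$ is immediate since $W_i \subseteq C_i$, $\diam(C_i) \leq \delta_i$, and (by induction on the processing order) $C_i \subseteq U$. The reverse inclusion is the crux and the main obstacle: any $p \in U \setminus C_i$ belongs to some $C_j$ with $j > i$, and we must exhibit $w \in W_i$ with $d(w,p) > \delta_i$. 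If $C_j \in \neigh^{\C^\star}(C_i)$, then $W_i$ contains the witness $w_j=c_j^{(i)}$, and BS15's property combined with $\diam(C_j) \geq \diam(C_i)$ and $p \in C_j$ yields $d(w_j,p) > \diam(C_i) = \delta_i$; otherwise $C_j \notin \neigh^{\C^\star}(C_i)$ with $\diam(C_j) \geq \diam(C_i)$ forces $d(C_i,C_j) > \diam(C_i)$, so any $w \in W_i \subseteq C_i$ separates $p$. Multiplying the $n^{O(k)}$ diameter guesses by the $n^{O(k)}$ witness guesses and the polynomial per-guess processing time yields the claimed $n^{O(k)}$ bound.
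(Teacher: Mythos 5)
Your proposal is correct and follows essentially the same route as the paper's proof: the bounded-neighborhood lemma to cap witness sets at a constant size, non-decreasing diameter guesses, sequential construction of each cluster as an intersection of witness balls restricted to the not-yet-assigned points, and an inductive exact-recovery argument whose case analysis (neighbor vs.\ non-neighbor of larger diameter) matches the paper's. The only differences are cosmetic: the paper enumerates the number of clusters $q\le k$ rather than padding with zero diameters (worth spelling out how a $q<k$ optimum is encoded in your scheme), and it uses witness sets of size at most $4$ rather than $5$.
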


\begin{proof}
Let us begin by presenting the algorithm of \cite{BS15} which computes $\opt$ via brute-force enumeration.
Denote by $\mathcal{D}$ the set of all distances between pairs of points in $X$ (including duplicate distances).
As $\opt$ may have less than $k$ clusters, we consider every possible number of clusters $q$ between $2$ and $k$, and for each $q$ we iterate through all the possible choices of $q$ values $D_1,...,D_q$ from $\mathcal{D}$. For every such a choice, we enumerate all possible witness sets, with at most $q-1$ witnesses per cluster. Let $S_1,...,S_q$ be a candidate set of witnesses: For every $S_i$ we create the set $V_i = \{ x \in X | d(x,S_i) \leq D_i \}$, which defines the cluster for this witness set. If $\diam(V_i) > D_i$ we discard the solution as invalid. The algorithm chooses the minimum cost solution from the created solutions covering all points.

We now describe our improved algorithm. As in the original algorithm, we compute $\opt$ via brute force enumeration, but we do so with a few crucial changes.
The first change is that we require the candidate diameters $D_1,...,D_q$, to be non-decreasing: $D_1 \leq...\leq D_q$.
The second change is that we restrict the cardinality of each candidate witness set $S_1,...,S_q$ to be at most $q^\star$ (and below we will take $q^\star=4$).
The third change is to restrict the choice of witnesses and cluster points: First set $P_1=X$. Then for every $i$ we choose $S_i$ from $P_i$ only, and construct $V_i$ as before while restricting to $P_i$: $V_i = \{ x \in P_i | d(x,S_i) \leq D_i \}$. We then set $P_{i+1}=P_i\setminus V_i$.

Now for the run-time: There are $k$ choices for $q$ and $O(\binom{n}{2}^k) = O(n^{2k})$ choices for the diameters. At each iteration we choose at most $q^\star$ witnesses per cluster, and so we have $O(\binom{n}{q^\star-1}^k) = O(n^{k(q^\star)})$ total choices per iteration. The total number of iterations is $O(k n^{2k})$, hence the total run-time is $O(k^2 n^{k(1+q^\star)})$.

We now choose the value of $q^\star$:
Let $\opt=\{C_1,...,C_q\}$ be an optimal solution to MSD on $X$ with cluster diameters $\diam(C_1)\leq...\leq \diam(C_q)$ and with $q \leq k$.
By applying Lemma~\ref{lem:msd_intersection_bound}, we may assume without loss of generaility that for every
$C_i \in \opt$, $|\neigh^{\opt}(C_i)| \leq 4$,
hence we set $q^\star = 4$.

Now, considering $\opt$ as defined above, consider an iteration in which the distances satisfy $D_i = \diam(C_i)$, and $S_i$ is the set of witnesses for $C_i$ with respect to the clusters in $\neigh^{\opt}(C_i)$ if $\neigh^{\opt}(C_i) \neq \emptyset$, and an arbitrary point from $C_i$ otherwise.

To complete the proof, we prove by induction that for every $1 \leq i \leq q$ it holds that $V_i = C_i$. Assume by induction that $V_l = C_l$ for every $l < i$, and then we will show that $V_i=C_i$. 

First note that since $C_i \cap C_l = \emptyset$ for all $l<i$, we have that $P_i =X\setminus \{C_1,C_2,\ldots C_{i-1}\} \supseteq C_i$.
Since $S_i \subseteq C_i$, all points in $C_i$ are within distance $D_i$ from all points in $S_i$.
Since $S_i \neq \emptyset$ and $C_i \subseteq P_i$, we have by the definition of $V_i$ that $C_i \subseteq V_i$.
Now, assume by contradiction that $V_i \neq C_i$, and let $u \in V_i\setminus C_i$. Recalling that $V_i \subseteq P_i = X\setminus \{C_1,C_2,\ldots C_{i-1}\}$, it must be that $u \in C_j$ for some $j > i$, implying that $\diam(C_j) \geq \diam(C_i)$.
As $S_i \subseteq C_i$ and $u \in C_j$ we have that $d(C_i, C_j) \leq d(S_i,u) \leq D_i = \diam(C_i)$, where the second inequality follows by the definition of $V_i$, and noting that $u \in V_i$. This means that $C_j \in \neigh^{\opt}(C_i)$, implying by our assumption on the $S_i$ chosen in the inspected iteration, that $S_i$ contains $c_j^{(i)}$, the witness for $C_i$ with respect to $C_j$.
Its matching witness is $c_i^{(j)} \in C_j$, and from the triangle inequality we obtain $d(c_j^{(i)},c_i^{(j)}) \leq d(C_i,C_j) +  \diam(C_j) \leq \diam(C_i) + \diam(C_j)$, which is a contradiction. We conclude that $V_i = C_i$.
\end{proof}
We also have the following theorem for MSD with outliers. which relies on methods and properties presented in the MSD approximation algorithm (proof in Section~\ref{sec:outliers}):
\begin{restatable}{theorem}{exactmsdoutliers}\label{thm:exact-msd-outliers}
The MSD problem with $k$ clusters and $\olr$ outliers can be solved exactly in time $n^{O(k)}$ in general metric spaces.
\end{restatable}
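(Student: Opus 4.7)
The plan is to adapt the exact MSD algorithm of Theorem~\ref{thm:exact-msd} to accommodate up to $g$ outliers while preserving the $n^{O(k)}$ running time. The overall template---inductively constructing clusters via enumeration of the cluster count $q$, diameters $D_1 \leq \ldots \leq D_q$, and witness sets $S_1, \ldots, S_q$ of size at most $4$ (as guided by Lemma~\ref{lem:msd_intersection_bound})---is preserved, with additional enumerations and construction modifications to handle outliers.

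First, I would enumerate a distribution of the $g$ outliers across the $q$ clusters: non-negative integers $g_1, \ldots, g_q$ with $\sum_i g_i \leq g$. The number of such distributions is $\binom{k+g}{k} \leq (k+g)^k \leq n^{O(k)}$ (using $g \leq n$), so this added enumeration stays within budget.

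Next, I would address the core difficulty introduced by outliers: the naive candidate cluster $V_i' = \{x \in P_i : d(x, S_i) \leq D_i\}$, as used in Theorem~\ref{thm:exact-msd}, may now contain outliers lying close to the witnesses $S_i$. This would inflate $\diam(V_i')$ beyond $D_i$ and cause the correct iteration to be discarded. The key structural observation is that for every outlier $o \in V_i' \setminus C_i$ there must exist some $y_o \in C_i$ with $d(y_o, o) > D_i$; otherwise $o$ could be merged into $C_i$ without increasing the cost, contradicting the optimality of $\opt$. Motivated by this, I would augment each witness set $S_i$ with a small set $T_i \subseteq C_i$ of ``outlier-exclusion'' anchors and redefine the candidate cluster as $V_i = \{x \in P_i : d(x, S_i) \leq D_i \text{ and } \max_{t \in T_i} d(x,t) \leq D_i\}$. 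This preserves $C_i \subseteq V_i$ (since $T_i \subseteq C_i$ and $\diam(C_i) \leq D_i$) while excluding any outlier separated from some $t \in T_i$ by distance exceeding $D_i$.

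The main obstacle will be bounding $|T_i|$ by a constant, so that the enumeration of $T_i$ per cluster costs $n^{O(1)}$ and the total enumeration remains $n^{O(k)}$. In the worst case, different outliers in $V_i' \setminus C_i$ might require different exclusion anchors, so this bound is not immediate. I would aim to prove a structural lemma analogous to Lemma~\ref{lem:msd_intersection_bound}: in a suitably modified optimal solution, each cluster has only $O(1)$ exclusion-essential outliers, with the remainder absorbed by local modifications (e.g., redistributing outliers across neighboring clusters or merging clusters along lines similar to Lemma~\ref{lem:msd_intersection_bound}), drawing on the techniques developed for the MSD approximation algorithm with outliers (Theorem~\ref{thm:approximate-msd-outliers}), as the theorem's statement signals. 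Once this structural lemma is in place, correctness follows from the induction on $i$ of Theorem~\ref{thm:exact-msd}, now recovering $C_i$ exactly in the correct iteration while charging its $g_i$ associated outliers against the budget; the total running time becomes $\binom{k+g}{k} \cdot n^{O(k)} = n^{O(k)}$.
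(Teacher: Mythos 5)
Your proposal has a genuine gap at exactly the point you flag yourself: the constant bound on the size of the ``outlier-exclusion'' anchor sets $T_i$ is the heart of the argument, and you neither prove it nor is it clear it holds in the form you need. Different outliers admitted by $V_i' = \{x \in P_i : d(x,S_i) \le D_i\}$ may each be excludable only by a \emph{distinct} point of $C_i$ (e.g.\ each outlier $o_j$ at distance $D_i+\epsilon$ from one specific $y_j \in C_i$ and at distance $\le D_i$ from everything else, including all witnesses), so $|T_i|$ can a priori be as large as $\olr$, and enumerating $T_i$ then costs $n^{\Omega(\olr)}$ per cluster rather than $n^{O(1)}$. The hoped-for analogue of Lemma~\ref{lem:msd_intersection_bound} is not obviously available: that lemma's mechanism is merging clusters, whereas the troublesome points here are outliers that belong to no cluster, so ``redistributing'' or ``merging'' does not clearly reduce the number of exclusion-essential outliers per cluster. (A smaller imprecision: an outlier that fits inside $C_i$ at no extra cost does not contradict optimality --- it just yields another optimal solution; what you want is a WLOG maximality assumption on $\opt$, which is fine but should be stated as such.)

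For comparison, the paper does not patch the static enumeration of Theorem~\ref{thm:exact-msd} at all; it instead runs the recursive machinery of the MSD approximation algorithm with exact candidate diameters and no net. At each step a point is chosen (an uncovered point, or one of the two points realizing the diameter of an \emph{enlarged} created cluster), and the algorithm branches on whether that point is an outlier (decrementing the budget and deleting it from existing clusters) or is covered by a new cluster built from at most $q^\star$ witnesses; the refinement step removes smaller non-enlarged clusters from larger ones. This branching is what replaces your anchors: a problematic point near a cluster is simply tried as an outlier in one branch, and the interleaving of at most $\olr$ outlier declarations with at most $k$ cluster creations contributes only a $\binom{k+\olr}{\olr} \cdot \olr^{O(k)} \le n^{O(k)}$ factor. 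Your outlier-distribution enumeration $g_1,\dots,g_q$ is also not what is needed (the algorithm never needs to know how many outliers sit ``near'' each cluster); the missing ingredient is a mechanism for certifying, cluster by cluster, that the constructed set equals $C_i$ despite nearby outliers, and that is precisely what your unproven structural lemma was supposed to supply.
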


\section{Approximation algorithms}

\subsection{Point set decompositions}

Our approximation algorithms will require the following decomposition property, first defined in \cite{BBGH24}:

\begin{definition}[\textbf{Decomposability}]
A problem is $\psi$-composable in time $g$ if there is an algorithm with run-time $g$ which given $X$ produces a set of components $\mathcal{X}$ of cardinality at most $k$ satisfying:
\begin{enumerate}
    \item Point partition: $\cup_i X_i \in  \mathcal{X} = X$ 
    and $X_i \cap X_j = \emptyset$ for all $X_i \neq X_j \in \X$.
    \item Cluster partition: There exists an optimal solution  wherein each cluster $C$ is a subset of some component $X_i \in \X$.
    \item Component diameter: For all $X_i \in \mathcal{X}$, $\diam(X_i) \le \psi \cdot \cost(\opt(X))$.
\end{enumerate}
\end{definition}

If a problem is decomposable, we can create a set of components with favorable properties and treat each component as a separate problem, computing approximate solutions for all values $k' \in [k]$.
In \cite{BBGH24}, in Theorem 14 and Corollary 69, the following bounds are presented:

\begin{lemma}\label{lem:decomp-all}
\begin{enumerate}
    \item MSR, $\alpha$-MSR and MSD are $O(k^2)$-decomposable in time $\min \{O(kn), 2^{O(\DDim(X))}n \log n \}$.
    \item MSR, $\alpha$-MSR and MSD \emph{with outliers} are $O((k+\olr)^2)$-decomposable in time $\min \{O((k+\olr)n), 2^{O(\DDim(X))}n \log n \}$.
    \item Fair-MSR is $O(k^2)$-decomposable in time $\min\{2^{O(d)}n\log n, O(kn)\} + \poly(k)$.
    \item Mergeable MSD is $O(k)$ decomposable in time $2^{O(d)}n \log n + O(k)f(n)$, where $f(n)$ is the run time of the solution validation process.
\end{enumerate}
\end{lemma}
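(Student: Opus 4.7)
The plan is to invoke and lightly extend the hierarchical decomposition framework of \cite{BBGH24}. At the heart of that framework is the following scheme: build a point hierarchy of $X$ in time $\min\{O(kn), 2^{O(\ddim)}n\log n\}$, and then for each power-of-two guess $\beta$ of $\cost(\opt)$, take as components the connected components of the graph whose vertex set is the net at level $\lceil\log\beta\rceil$ and whose edges connect pairs at distance $O(\beta)$. For the correct guess one argues (i) each $\opt$-cluster has diameter at most $O(\beta)$ and so lies entirely in one component, (ii) the number of non-trivial components is at most $k$, and (iii) each component, which can chain up to $k$ $\opt$-clusters glued by connecting edges of length $O(\beta)$, has diameter $O(k\beta)$, hence $O(k^2\cdot \cost(\opt))$ when combined with the per-cluster diameters. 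This yields part~(1) for MSR, $\alpha$-MSR and MSD directly; the three problems differ only in how the cluster diameter is charged against the cost, not in the decomposition procedure.

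For part~(2), outliers are folded into the framework by treating each of the $\olr$ outliers as a virtual cluster of radius/diameter $0$, so that the effective cluster count becomes $k+\olr$ and the decomposition factor inflates to $O((k+\olr)^2)$. The hierarchy construction is identical, but the connectivity sweep in the linear variant now runs in time proportional to $k+\olr$.

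For part~(3), Fair-MSR, I would execute the MSR decomposition verbatim, since the component structure is oblivious to the color constraints. A post-processing pass in $\poly(k)$ time records, for each component, the distribution $|X_i \cap Y_j|$ and assigns a color sub-budget so that summing the sub-budgets across components yields the global budget $(k_1,\dots,k_m)$; this preserves a fair optimum that respects the component partition, relying on the fact that since each $\opt$-cluster sits in one component its $Y_j$-membership is unchanged.

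For part~(4), the improvement from $O(k^2)$ down to $O(k)$ relies crucially on mergeability. Within each component, the $\opt$-clusters it contains may all be collapsed into a single merged cluster without violating feasibility, so the chain overhead vanishes: each component need only have diameter $O(\cost(\opt))$, and summing over at most $k$ components yields the $O(k)$ factor. The $O(k)f(n)$ overhead pays for one feasibility-oracle call per candidate merge during the sweep. The main obstacle I expect is the diameter-chain analysis underlying part~(1), and correspondingly the careful justification that mergeability cleanly eliminates this chain in part~(4); the outlier and fair adaptations are essentially bookkeeping on top of the unmodified BBGH24 decomposition.
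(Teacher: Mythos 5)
You should first note that the paper does not prove this lemma at all: it is imported wholesale from \cite{BBGH24} (Theorem 14 and Corollary 69, plus the fair decomposition of Lemma 60 there), so your task is really to reconstruct that external argument, and your reconstruction has genuine gaps. The most basic one is structural: a decomposition, as defined, must output a \emph{single} partition of $X$, but your scheme produces one candidate partition per power-of-two guess $\beta$ of $\cost(\opt)$ with no way to certify the correct guess without already knowing $\opt$. The actual construction fixes the scale using a cheap $O(k)$-factor estimate of the optimum (e.g.\ a Gonzalez-style farthest-point sweep in $O(kn)$ time, or the hierarchy in $2^{O(\ddim)}n\log n$ time), and it is precisely this $O(k)$-loose estimate, multiplied by the length-$k$ chain of clusters inside a component, that produces the $O(k^2)$ factor; your accounting ("$O(k\beta)$, hence $O(k^2\cdot\cost(\opt))$ when combined with the per-cluster diameters") would give $O(k)$ if $\beta\approx\cost(\opt)$ and does not explain where $k^2$ comes from.

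Parts (3) and (4) have more serious problems. For Fair-MSR you run the unconstrained MSR decomposition "verbatim" and claim the fair optimum respects the components, but the cluster-partition property is exactly what fails: the fair optimum can be arbitrarily more expensive than the unconstrained one (e.g.\ when the only center of an admissible color is far from the mass of points), so a fair-optimal cluster can have radius far exceeding the separation threshold, which was calibrated to the unconstrained estimate, and can therefore straddle components. This is why the cited fair decomposition needs a separate matching-based argument and carries the extra $\poly(k)$ term; your "bookkeeping" pass does not address it. For mergeable MSD, your argument conflates the per-component diameter bound with a sum over components: the definition requires $\diam(X_i)\le\psi\cdot\cost(\opt)$ for every component, so if merging really forced each component to have diameter $O(\cost(\opt))$ you would have proved $O(1)$-decomposability, not $O(k)$; and the claim that all optimal clusters inside a component may be merged "without the chain overhead" is asserted, not argued (mergeability guarantees feasibility of the merge, not control of its cost). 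You also drop the $O(kn)$-time option in part (4) without explaining why only the hierarchy-based construction (plus $O(k)$ feasibility checks at cost $f(n)$ each) survives in the mergeable setting.
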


\subsection{Approximation algorithms -- Preliminaries}

In this section we describe recursive approximation algorithms which for MSR and MSD with run-time linear or near-linear in $n$ and with additional term, depending on $1/\eps$, exponential in $k$ and in $\ddim$.
Our algorithms use Lemma \ref{lem:decomp-all} to create a net for $X$ and to obtain bounds on the costs of optimal solutions to both problems.
Given the lemma's output to be $X_1,...,X_{k'}$ with $k' \leq k$, we denote $R = \max_{i\in[k]}\diam(X_i)$.
We denote the lower bound on the optimal cost by $L$, and it is given by $L=\frac{1}{64k^2}\sum_{i=1}^k\diam(X_i)$.
We denote the upper bound on the optimal cost by $\aspectopt L$, where $\aspectopt = 64k^2$.
This notation is used for the algorithm to be flexible in case better bounds of the same nature are found, maybe bounds which require different computation time.
We use the following notation:
\begin{definition}
With the context of some $T,\eps,k > 0$ and a net hierarchy of a component $X'$, let $\netdist{T}{k} = \uu{\frac{\eps T}{k}}$ and let $X^{(T)}$ be the $\netdist{T}{k}$-net of $X'$ that is: $X^{(T)} = S_{\uu{\frac{\eps T}{k}}}$.
For every $x \in X^{(T)}$ we denote by $\tau_T(x)$ the set of points from $X'$ mapped to $x$ in the net $X^{(T)}$.
\end{definition}

For each component $X'$ we create an approximate solution for every $q \in [k]$.
We iterate over powers of two which are possible bounds on the cost of the optimal solution.
For each such bound, $T$, we create a set of possible approximations for the ranges of radii/diameters in the solution, starting from $\frac{\eps T}{k}$ and going upwards in powers of $2$.
The recursive [MSR/MSD]Subroutine is used in order to obtain a cover for the net $X^{(T)}$, which is then extended to $X'$.
We denote $T^\star = \uu{\cost(\opt)}$ where $\opt$ is the optimal cost on \emph{the whole space}, and $X^\star = X^{(T^\star)}$.
For each problem, we will show that the extension of the solution found on $X^\star$ is a good approximation of the optimal solution on the component.

When all the approximations are computed, an optimal assignment of $q$ values to components can be found in $\poly(k)$ time using a dynamic program: given the optimal solutions for every $q \in [k]$ on two components, for every $q \in [k]$ we find $q_1,q_2$, such that $q = q_1 + q_2$, and the sum of the costs for the best found solutions using $q_1$ clusters from one of the components and $q_2$ clusters from the other components is minimal.
We repeat this process iteratively, adding one component at a time.
Each step runs in $O(k^2)$ time, and there are at most $k$ steps hence the total time for the process is $O(k^3)$.
The cost of the solution in which each component $X'$ is assigned $q = |\opt'|$ clusters is a $(1+O(\eps))$ approximation of the optimal cost, hence the framework will produce a good approximation.

The following lemma bounds the number of possible sequences of approximate radii/diameters over which the subroutines iterate:

\begin{restatable}{lemma}{radiichoicebound}\label{lem:radii-choice-bound}
For every $\delta,\eps > 0, k \in \mathbb{N}$, there are $O\left(\frac{\delta^k}{\eps^k}\right)$ ways to choose $i_1,...,i_q$ s.t. $q \leq k$, $i_j \in \mathbb{Z}^{\geq 0}$ for every $1 \leq j \leq q$, with and s.t. $\sum_{j=1}^q\frac{\eps2^{i_j}}{k} \leq \delta$.
\end{restatable}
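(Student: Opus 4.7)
The plan is to substitute $s_j = 2^{i_j}$ and reduce to a standard composition-counting problem. Setting $M = \lfloor \delta k/\eps \rfloor$, the constraint $\sum_{j=1}^q \frac{\eps 2^{i_j}}{k} \leq \delta$ becomes $\sum_{j=1}^q s_j \leq M$, where each $s_j$ is a positive integer (in fact a power of two, but we discard this restriction in order to obtain an upper bound). It therefore suffices to count ordered tuples of positive integers of length $q \leq k$ with sum at most $M$; note also that because each $s_j \geq 1$, any valid $q$ automatically satisfies $q \leq M$.

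For each fixed $q$, I would invoke a standard stars-and-bars argument: adding a slack variable $s_0 \geq 0$ turns $\sum_{j \geq 1} s_j \leq M$ into $s_0 + \sum_{j=1}^q s_j = M$ with $s_j \geq 1$ for $j \geq 1$; after shifting the positive variables to non-negative ones, the number of solutions is $\binom{M}{q}$. Summing over $q = 0, 1, \ldots, k$ and invoking the Vandermonde-type identity gives $\sum_{q=0}^{k}\binom{M}{q} \leq \binom{M+k}{k}$.

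Finally, I would apply the standard estimate $\binom{n}{k} \leq (en/k)^k$ to conclude $\binom{M+k}{k} \leq \left(\tfrac{e(M+k)}{k}\right)^k = \left(e + \tfrac{e\delta}{\eps}\right)^k$, which matches the claimed $O\!\left((\delta/\eps)^k\right)$ bound in the convention used throughout the paper for such exponential expressions (the absolute constants being absorbed into the base of the exponent).

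The argument is purely combinatorial, so no deep obstacle is anticipated. The only mild subtlety is recognizing that the power-of-two restriction on $s_j$ may be dropped without loss when aiming for an upper bound, after which the problem collapses to counting integer compositions and the standard binomial estimates finish the job.
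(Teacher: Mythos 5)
Your proposal is correct and follows essentially the same route as the paper: both relax the power-of-two values $2^{i_j}$ to arbitrary positive integers, count ordered tuples of length at most $k$ with sum at most $\lfloor \delta k/\eps\rfloor$ by a stars-and-bars argument, and finish with the standard binomial estimate, absorbing constant factors such as $e^k$ or $2^k$ into the base exactly as the paper does (its bound is only ever used as $(1/\eps)^{O(k)}$). Your per-$q$ count $\binom{M}{q}$ summed via the Vandermonde bound is just a cleaner bookkeeping of the paper's single partition-point encoding, so there is no substantive difference.
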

\begin{proof}
$\sum_{j=1}^q\frac{\eps2^{i_j}}{k} \leq \delta$ if and only if $\sum_{j=1}^q 2^{i_j} \leq \frac{\delta k}{\eps}$, so the problem's solution is bounded by the number of ways to choose up to $k$ ordered numbers s.t. their sum is $\leq \frac{\delta k}{\eps}$.
Consider a list of $\lfloor\frac{\delta k}{\eps}\rfloor$ elements.
Every ordered partition of the list to $k+1$ parts is defined by choosing $k$ partition points out of $\lfloor\frac{\delta k}{\eps}\rfloor + 1$.
Every ordered choice of up to $k$ elements whose sum is $\leq \lfloor\frac{\delta k}{\eps}\rfloor$ can be represented by such an ordered partition of the list, where if the number of elements is $q$, $k-q$ of the points are the first partition point (i.e. the point before all the other elements), the other partition points are chosen s.t. for the $i$th term of the sum there is a sequence of the term's size.
the last part is of the size of $\lfloor\frac{2\delta}{\eps}\rfloor$ minus the sum of the terms.
There are $\binom{\lfloor\frac{\delta k}{\eps}\rfloor + 1}{k}$ ways to choose such partitions, and hence we obtain the bound $\binom{\lfloor\frac{\delta k}{\eps}\rfloor + 1}{k} = O\left(\left(\frac{\lfloor\frac{\delta k}{\eps}\rfloor + 1}{k}\right)^k\right) = O\left(\left(\frac{\frac{\delta k}{\eps}}{k}\right)^k\right) = O\left(\frac{\delta^k}{\eps^k}\right)$
\end{proof}

\subsection{Approximation algorithm for MSR for bounded $k$}\label{sec:approx-msr}

Our algorithm will first apply the decomposition presented in Lemma~\ref{lem:decomp-all}.
Afterwards, for each $q \in [k]$ and for each component in the decomposition, it will run an approximation.

Now we explain the process of approximating MSR on a specific component $X'$ using a specific $q$.
Pseudocode for the algorithm is presented in Algorithms~\ref{alg:algorithmMSR},\ref{alg:subroutineMSR}.
Our algorithm is as follows: we iterate over possible candidates $T$ for $T^\star$.
For each guess, we build a solution in a recursive manner: given previously created cover balls, an uncovered point $z$ will be chosen.
We iterate over candidate approximate radii for the ball containing the uncovered point.
The radii are chosen from the range $\left[\netdist{T}{k},T\right]$.
For each candidate radius $r'$, we examine the ball $B(z,2r')$ in $X^{(T)}$, and iterate over pairs of points $(x,y)$ from it.
We create a ball with center $x$ and radius defined according to be $d(x,y)$.
We add the ball to the solution and continue to create the cover recursively.
When $r'$ is chosen to be larger than the radius of the ball containing $z$, one of the created balls will be an approximation of the ball containing $z$.
During the run, we keep track of the sum of the approximate radii used so far.
If their sum is too high, we stop the recursive process, hence we may use Lemma~\ref{lem:radii-choice-bound} to bound the possible number of different approximation sequences.

\begin{algorithm}[!ht]
\caption{ApproximateMSR$(X', k, q, L, \aspectopt, R, \epsilon)$}\label{alg:algorithmMSR}
\begin{algorithmic}[1]
\STATE $\mathcal{T} \gets \{T=2^s | s \in \mathbb{N}\cup \{0\}\ and\ L \leq 2^s \leq \aspectopt L\}$
\STATE $\A \gets \emptyset$
\FOR{$T \in \mathcal{T}$}
    \STATE $\mathcal{R} \gets \{\uu{\frac{\eps 2^i T}{k}}| i \in \mathbb{Z}^{\geq 0}, \netdist{T}{k} \leq \uu{\frac{\eps 2^i T}{k}} \leq \uu{\min\{R,T\}}\}$
    \STATE $\A' \gets MSRSubroutine(X^{(T)}, X^{(T)}, \mathcal{R}, T, 4T, q)$
    \STATE $\A' \gets \tau_T(\A')$
    \IF{$\A = \emptyset \;\;\algorithmicor\; \cost(\A') < \cost(\A)$}
        \STATE $\A \gets \A'$
    \ENDIF
\ENDFOR
\RETURN $\A$
\end{algorithmic}
\end{algorithm}

\begin{algorithm}[!ht]
\caption{MSRSubroutine$(X, Y,\mathcal{R}, T, T',q)$}\label{alg:subroutineMSR}
\begin{algorithmic}[1]
\IF{$q = 0$}
    \RETURN $\emptyset$
\ENDIF
\STATE $z \gets\ a\ point\ from\ Y$
\STATE $Q = \emptyset$
\FOR{$r' \in \mathcal{R}$}
    \IF{$r' \leq T'$}
        \STATE $B \gets B\left(z, 2r'\right)$
        \FOR{$(x,y) \in B\times B$}
            \STATE $C \gets B(x, d(x,y))$
            \IF{$z \in C$}
                \STATE $\A \gets \emptyset$
                \IF{$Y \subseteq C$}
                    \STATE $\A \gets \{\{C\}\}$
                \ELSIF{$q > 1$}
                    \STATE $\A \gets MSRSubroutine(X, Y\setminus C, \mathcal{R}, T, T' - r', q - 1, \eps)$
                    \IF{$A' \neq \emptyset$}
                        \STATE $\A \gets \A \cup \{\{C\}\}$
                    \ENDIF
                \ENDIF
                \IF{$A' \neq \emptyset \;\;\algorithmicand\; \cost(\A) < \cost(Q)$}
                    \STATE $Q \gets \A$
                \ENDIF
            \ENDIF
        \ENDFOR
    \ENDIF
\ENDFOR
\RETURN $A$
\end{algorithmic}
\end{algorithm}

Denote the optimal solution on the component by $\opt'$, and by $\opt^\star$ the solution on $X^\star$ obtained by moving all the center points of $\opt'$ to center points of $X^\star$, and adding $\netdist{T}{k}$ to each radius:

\begin{restatable}{lemma}{msrapproxoncomp}\label{lem:msr-comp-correctness}
When the initial call to MSRSubroutine is performed with $q = |\opt'|$ and $T = T^\star$, it produces a solution to MSR on $X^\star$ with $\leq q$ balls and cost $\leq \cost(\opt^\star)$.
\end{restatable}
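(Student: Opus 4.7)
My plan is to prove this by induction on $q$, with a strengthened inductive invariant. Specifically, I will show: whenever MSRSubroutine$(X^\star, Y, \mathcal{R}, T^\star, T', q)$ is called with $Y$ coverable by some subset $\mathcal{B}^\star \subseteq \opt^\star$ of at most $q$ balls and $T' \geq 2\cdot\cost(\mathcal{B}^\star)$, the algorithm returns a valid cover of $Y$ using at most $q$ balls with total cost at most $\cost(\mathcal{B}^\star)$. Applying this with $\mathcal{B}^\star = \opt^\star$ and $T' = 4T^\star$ yields the lemma, once we check $4T^\star \geq 2\cost(\opt^\star)$, which follows from $\cost(\opt^\star) \leq \cost(\opt') + k\cdot\netdist{T^\star}{k} \leq (1+2\eps)T^\star$ for $\eps$ sufficiently small.

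For the inductive step, the algorithm picks some $z \in Y$, and by the invariant $z \in B^\star$ for some $B^\star = B(x^\star, \rho) \in \mathcal{B}^\star$. I will exhibit a triple $(r', x, y)$ that the search is guaranteed to examine and that produces a ``good'' ball. Take $r'$ to be the smallest element of $\mathcal{R}$ with $r' \geq \rho$: since every ball in $\opt^\star$ has radius at least $\netdist{T^\star}{k}$ by construction, and $\mathcal{R}$ is a doubling sequence starting at $\netdist{T^\star}{k}$, such $r'$ exists and satisfies $\rho \leq r' \leq 2\rho$. Take $x = x^\star$ (a net point by construction of $\opt^\star$) and $y \in B^\star \cap X^\star$ a point maximizing $d(x^\star, y)$. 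Routine applications of the triangle inequality show $x, y \in B(z, 2r')$, the ball $C = B(x^\star, d(x^\star, y))$ contains $z$ and all of $B^\star \cap X^\star$, and $\cost(C) = d(x^\star, y) \leq \rho$.

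Two verifications remain. First, $r' \leq T'$, so this iteration is actually considered: this follows from $r' \leq 2\rho \leq 2\cost(\mathcal{B}^\star) \leq T'$. Second, the recursive call on parameters $(Y\setminus C, \ldots, T'-r', q-1)$ satisfies the invariant with $\mathcal{B}^{\star\prime} = \mathcal{B}^\star \setminus \{B^\star\}$: we have $Y\setminus C \subseteq \bigcup \mathcal{B}^{\star\prime}$ because $C$ already covers $B^\star \cap X^\star \supseteq B^\star \cap Y$, and $T'-r' \geq 2(\cost(\mathcal{B}^\star) - \rho) = 2\cost(\mathcal{B}^{\star\prime})$. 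The inductive hypothesis then gives a cover of $Y\setminus C$ by at most $q-1$ balls of cost at most $\cost(\mathcal{B}^{\star\prime})$; appending $C$ yields a cover of $Y$ by at most $q$ balls of total cost at most $\rho + \cost(\mathcal{B}^{\star\prime}) = \cost(\mathcal{B}^\star)$. Since the algorithm returns the minimum-cost cover found, the returned solution is at least this good.

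The main subtlety is the choice of invariant. The algorithm's budget $T'$ is decremented by $r'$, which may be as large as $2\rho$, while the ``true'' cost decreases only by $\rho$. An invariant of the form $T' \geq \cost(\mathcal{B}^\star)$ would therefore not be preserved by the recursion — the test $r' \leq T'$ could fail in later calls. The factor-of-two slack ($T' \geq 2\cdot\cost$), together with the initial budget $T' = 4T^\star$ (twice the naive bound), is exactly what is needed to carry the induction through while still yielding the final cost bound of $\cost(\opt^\star)$.
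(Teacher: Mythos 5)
Your proof is correct and follows essentially the same route as the paper's: trace a favorable branch of the recursion in which each chosen candidate radius is within a factor two of the radius of the $\opt^\star$-ball containing the selected point $z$, and check that the initial budget $4T^\star$ never blocks these choices. Your explicit invariant $T' \ge 2\cost(\mathcal{B}^\star)$ and your use of the created ball $B(x^\star, d(x^\star,y))$ of radius at most $\rho$ (rather than asserting that the balls of $\opt^\star$ are reconstructed exactly) is a slightly more careful bookkeeping of the same argument the paper gives.
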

\begin{proof}
We show that the algorithm creates $\opt^\star$, and hence returns a solution with at most the same cost.
We claim inductively that at each call to MSRSubroutine, each ball $B(x,r)$ created so far is from $\opt^\star$, and was created with candidate radius $r'$ was equal to $\max\{\uu{r}, \netdist{T}{k}\}$.
The base case, in which no balls were created yet, is trivial.
Now, assume the inductive claim, and consider the chosen point $z$, and the ball containing $z$ from $\opt^\star$, $B(x^\star,r)$.
We show that $T' \geq \max\{\uu{r}, \netdist{T^\star}{k}\}$:
the sum of approximate radii used for the creation of the previous balls is bounded by $2\cost(\opt^\star) + q\netdist{T^\star}{k} \leq 3T$, hence in this call we will choose the candidate radius $r' = \max\{\uu{r}, \netdist{T^\star}{k}\}$.
By the triangle inequality, the ball created for this radius contains $B(x^\star,r)$, and at some iteration $B(x^\star,r)$ will be created.

All in all, we have shown that $\opt^\star$ is found exactly, and this suffices in order to prove the lemma.
\end{proof}

We have the following observation: If $\C$ be a solution to MSR on $X^{(T)}$, then its extension to $X$ has cost at most $\cost(\C) + |\C|\netdist{T}{k}$.
This observation implies that the cost of the extension of the solution from Lemma~\ref{lem:msr-comp-correctness} to $X'$ is $\leq \cost(\opt^\star) + |\opt'|\netdist{T}{k} \leq \cost(\opt') + 2|\opt'|\netdist{T}{k}$.
When summing over all components, with each component $X'$ assigned $|\opt'|$ clusters, we are guaranteed that the combined cost is $\cost(\opt) + 2k\netdist{T}{k} = (1+O(\eps))\cost(\opt)$.

It is now time to bound the run time of the algorithm:
\begin{lemma}\label{lem:approximate-msr}
The run-time of $ApproximateMSR$ is $\left(\frac{1}{\eps}\right)^{O(k\ddim)}$.
\end{lemma}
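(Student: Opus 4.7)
The plan is to bound the total work by bounding the size of the recursion tree of \emph{MSRSubroutine}, since the running time is dominated by the total number of (call, $r'$, $(x,y)$) triples that the subroutine enumerates. The outer loop in \emph{ApproximateMSR} runs over $O(\log \aspectopt) = O(\log k)$ choices of $T$, so I first fix a single $T$. Within one such iteration, each call to \emph{MSRSubroutine} iterates over the $O(\log(k/\eps))$ candidates in $\mathcal{R}$, and for each candidate radius $r' = \uu{\eps 2^i T/k}$ examines pairs $(x,y) \in B\times B$ where $B = B(z,2r')\cap X^{(T)}$. Because $X^{(T)}$ is an $\netdist{T}{k}$-net with $r'/\netdist{T}{k} = 2^i$ after the power-of-two rounding, the doubling-dimension property of the space yields $|B| = 2^{O(\ddim \cdot i)}$, and so the number of $(x,y)$ pairs at recursion level $j$ with index $i_j$ is at most $2^{O(\ddim \cdot i_j)}$.

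Next I would bound the total number of complete enumeration sequences $(i_1,(x_1,y_1)),\ldots,(i_q,(x_q,y_q))$ with $q \le k$ that the subroutine can generate. Since the budget variable $T'$ is initialized at $4T$ and decremented by $r'_j$ at each recursive call, every surviving sequence satisfies $\sum_j r'_j \le 4T$, i.e.\ $\sum_j 2^{i_j} \le 4k/\eps$. For a fixed $q$, Jensen's inequality applied to the concave map $x\mapsto\log_2 x$ gives
$$\sum_{j=1}^q i_j \;\le\; q\,\log_2\!\brac{\tfrac{1}{q}\sum_{j=1}^q 2^{i_j}} \;\le\; q\,\log_2\!\brac{4k/(q\eps)},$$
and a one-line calculus check shows the right-hand side is maximized over $q \in [1,k]$ at $q=k$ (for $\eps \le 1$), yielding $\sum_j i_j = O(k\log(1/\eps))$. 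Hence the product of per-level $(x,y)$-branchings along any root-to-leaf path is at most $\prod_j 2^{O(\ddim \cdot i_j)} = (1/\eps)^{O(\ddim\, k)}$.

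Finally, Lemma~\ref{lem:radii-choice-bound} applied with $\delta = 4$ bounds the number of distinct index sequences $(i_1,\ldots,i_q)$ with $q \le k$ by $(1/\eps)^{O(k)}$. Summing the per-sequence branching over all such index sequences yields at most $(1/\eps)^{O(k)} \cdot (1/\eps)^{O(\ddim\, k)} = (1/\eps)^{O(\ddim\, k)}$ complete enumeration sequences, and multiplying by the outer $T$-loop factor $O(\log k)$, the per-call factor $|\mathcal{R}| = O(\log(k/\eps))$, and the per-node bookkeeping absorbs only into the constants in the exponent, giving the claimed $(1/\eps)^{O(k\ddim)}$ bound. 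The main obstacle is precisely this joint accounting step: naively multiplying ``number of radii choices'' by ``max ball size raised to $2k$'' would introduce an extraneous $k^{O(\ddim\, k)}$ factor, and it is the Jensen-style convexity argument exploiting the cost-budget constraint $\sum_j 2^{i_j} \le 4k/\eps$ that trims this down to $(1/\eps)^{O(k\ddim)}$.
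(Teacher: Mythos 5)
Your proposal is correct and follows essentially the same route as the paper: it counts radius-level sequences via Lemma~\ref{lem:radii-choice-bound} with $\delta=4$, bounds the per-level ball size in the net by $2^{O(\ddim\, i_j)}$ using the doubling dimension, and controls the product of branchings along a path via the budget constraint $\sum_j 2^{i_j} \le 4k/\eps$ (your Jensen step is exactly the paper's implicit AM-GM bound $\prod_i r_i \le (T/k)^k$), then multiplies in the $O(\log\aspectopt)$ outer loop. The only cosmetic difference is that the paper additionally makes the per-node bookkeeping explicit by bounding $|X^{(T)}| = \left(\frac{\aspectopt}{\eps}\right)^{O(\ddim)}$, which you absorb into the exponent's constants.
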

\begin{proof}
For some $T \in \mathcal{T}$, a recursive choice of levels $l_1,...,l_q$ is possible only if $\sum_{i=1}^q\frac{\eps 2^{l_i} T}{k} \leq 2T$, hence by using Lemma \ref{lem:radii-choice-bound} with $\delta = 4$ we obtain that there are $\left(\frac{2}{\eps}\right)^{O(k)}$ possible choices of levels for each iteration of the main algorithm.

Consider a given sequence of levels, corresponding to a sequence of radii $r_1,...,r_k$.
The number of points in the ball $B$ created for radius $r_i$ is $\left(\frac{k r_i}{\eps T}\right)^{O(d)}$, and this is also the number of possible ways to choose two points from this ball.
The total number of choices is hence: $\prod_{i=1}^k\left(\frac{k r_i}{\eps T}\right)^{O(d)}  = \left(\frac{k^k \prod_{i=1}^kr_i}{(\eps T)^k}\right)^{O(d)} \leq \left(\frac{k^k \left(\frac{T}{k}\right)^k}{(\eps T)^k}\right)^{O(d)} = \left(\frac{1}{\eps}\right)^{O(kd)}$, and this is the dominant factor in the run time.

By the decomposition $\diam(X') \leq R$, hence $|X^{(T)}|$ can be bounded - the net distance is bounded below when $T$ is minimal: $\netdist{T}{k} \geq \frac{\eps L}{k}$, so the net size is $\left(\frac{R}{\frac{\eps L}{k}}\right)^{O(\ddim)} = \left(\frac{kR}{\eps L}\right)^{O(\ddim)}$, but $kR \leq \aspectopt L$, so $|X^{(T)}| = \left(\frac{\aspectopt}{\eps} \right) ^{O(\ddim)}$.

By considering every call tree, and associating with every call from it the creation of $B$ from its parent call, and its own creation of $C$, we cover all the operations done by the subroutine.
By the calculation above of $|X^{(T)}|$, the creation of both $B,C$ takes $k\left(\frac{\aspectopt}{\eps} \right) ^{O(\ddim)}$ each time, so the run-time of the call tree $k^2\left(\frac{\aspectopt}{\eps} \right) ^{O(\ddim)}$.

As a result, for each choice of radii the total run-time is $k^2\left(\frac{\aspectopt}{\eps}\right) ^{O(\ddim)}\left(\frac{2}{\eps}\right)^{O(\ddim k)}$, and the run-time for a specific $T \in \mathcal{T}$ is 
$k^2\left(\frac{\aspectopt}{\eps}\right) ^{O(\ddim)}\left(\frac{2}{\eps}\right)^{O(\ddim k)}\left(\frac{2}{\eps}\right)^{O(k)} = k^2\left(\frac{\aspectopt}{\eps}\right) ^{O(\ddim)}\left(\frac{2}{\eps}\right)^{O(\ddim k)}$.

Since  $|\mathcal{T}| = O(\log(\aspectopt))$, and since $\aspectopt = O(k^2)$, the total run-time is $k^2\log(\aspectopt)\left(\frac{\aspectopt}{\eps}\right) ^{O(d)}\left(\frac{2}{\eps}\right)^{O(\ddim k)} = \left(\frac{1}{\eps}\right)^{O(\ddim k)}$.
\end{proof}

Combining the running time of Lemma \ref{lem:decomp-all} and Lemma~\ref{lem:approximate-msr} we obtain:
\begin{theorem}\label{thm:msr-approx}
A $(1 + \eps)$ approximation of MSR can be obtained in $\min\{O(kn),2^{O(\ddim)}n\log n\} + \left(\frac{1}{\eps}\right)^{O(k\ddim)}$.
\end{theorem}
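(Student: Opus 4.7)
The plan is to combine the decomposition of Lemma~\ref{lem:decomp-all} with the per-component approximation guarantees of Lemmas~\ref{lem:msr-comp-correctness} and~\ref{lem:approximate-msr}, and then glue the per-component solutions together via the dynamic program described in the preliminaries. There is no single hard step; the main care is in bookkeeping the error accumulation so that it stays additive across components.

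First I would apply Lemma~\ref{lem:decomp-all}(1) to $X$ in time $\min\{O(kn), 2^{O(\ddim)} n \log n\}$, producing components $X_1, \ldots, X_{k'}$ with $k' \le k$ together with the parameters $L$, $\aspectopt$, $R$ required by $ApproximateMSR$. By the cluster-partition property of decomposability, there is an optimal MSR solution $\opt$ whose clusters each lie inside a single component, inducing optimal restrictions $\opt_i$ on each $X_i$ with $\sum_i \cost(\opt_i) = \cost(\opt)$.

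Next, on every component $X_i$ and every candidate cluster-count $q \in [k]$, I would invoke $ApproximateMSR(X_i, k, q, L, \aspectopt, R, \eps)$. Lemma~\ref{lem:approximate-msr} bounds each call by $(1/\eps)^{O(k \ddim)}$, and since there are at most $k^2$ pairs $(i,q)$, the whole phase still costs $(1/\eps)^{O(k\ddim)}$. For the analysis I single out, for each component $X_i$, the iteration with $q = |\opt_i|$ and outer guess $T = T^\star = \uu{\cost(\opt)}$; this guess is actually enumerated because $T^\star \in \mathcal{T}$ (since $L \le \cost(\opt) \le \aspectopt L$). Lemma~\ref{lem:msr-comp-correctness} then guarantees a cover of $X_i^\star$ of cost at most $\cost(\opt_i^\star)$, and $\cost(\opt_i^\star) \le \cost(\opt_i) + |\opt_i|\, \netdist{T^\star}{k}$ by definition of $\opt^\star$. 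Applying the net-to-original extension (inflating each radius by at most $\netdist{T^\star}{k}$) yields a cover of $X_i$ of cost at most $\cost(\opt_i) + 2|\opt_i|\, \netdist{T^\star}{k}$.

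Finally I would run the $O(k^3)$ dynamic program from the preliminaries across the components, which selects, for the total budget of $k$ clusters, the minimum-cost allocation of cluster-counts to components. In particular, the allocation that assigns $|\opt_i|$ clusters to $X_i$ is considered, so the dynamic program outputs a solution of total cost at most
\[
\sum_i \bigl(\cost(\opt_i) + 2|\opt_i|\, \netdist{T^\star}{k}\bigr) \;\le\; \cost(\opt) + 2k \cdot \tfrac{2\eps T^\star}{k} \;\le\; (1 + 8\eps)\cost(\opt),
\]
using $\netdist{T^\star}{k} \le 2\eps T^\star/k$ and $T^\star \le 2\cost(\opt)$. Rescaling $\eps$ by a constant yields a $(1+\eps)$-approximation, and summing the decomposition, algorithmic, and dynamic-program running times gives the claimed bound $\min\{O(kn), 2^{O(\ddim)} n \log n\} + (1/\eps)^{O(k\ddim)}$.
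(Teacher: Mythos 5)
Your proposal is correct and follows essentially the same route as the paper: decompose via Lemma~\ref{lem:decomp-all}, run $ApproximateMSR$ for every component and every $q\in[k]$, invoke Lemma~\ref{lem:msr-comp-correctness} at the iteration with $q=|\opt_i|$ and $T=T^\star$, extend from the net back to the component with an additive $|\opt_i|\netdist{T^\star}{k}$ loss, merge with the dynamic program, and charge the total run time to Lemma~\ref{lem:approximate-msr}. Your explicit $(1+8\eps)$ bookkeeping and rescaling of $\eps$ is just a more detailed version of the paper's $(1+O(\eps))$ accounting, so no gap to report.
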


\subsection{Approximation algorithm for MSD for bounded $k$}\label{sec:approx-msd}

The approximation algorithm for MSD combines ideas from the exact algorithm with the framework presented for MSR.
As in the MSR approximation, our algorithm will first apply the decomposition presented in Theorem~\ref{lem:decomp-all}. Afterwards, for each $k' \in [k]$ and for each component in the decomposition, it will run an approximation, and the solutions will be merged using a dynamic program as described in \ref{sec:approx-msr}.

\begin{algorithm}

\caption{ApproximateMSD$(X,k,q,L, \aspectopt, R, \epsilon)$}\label{alg:algorithmMSD}
\begin{algorithmic}[1]
\STATE $\mathcal{T} \gets \{T=2^s | s \in \mathbb{N}\cup \{0\}\ and\ L \leq 2^s \leq 2\uu{min\{\aspectopt L,(k-1)R\}}\}$
\STATE $A \gets \emptyset$
\FOR{$T \in \mathcal{T}$}
    \STATE $\mathcal{R} \gets \{\uu{\frac{\eps 2^i T}{k}}| i \in \mathbb{Z}^{\geq 0}, \netdist{T}{k} \leq \uu{\frac{\eps 2^i T}{k}} \leq \uu{\min\{R,T\}}\}$
    \STATE $A' \gets MSDSubroutine(X^{(T)}, X^{(T)}, \mathcal{R}, 3T, q)$
    \STATE $A' \gets \tau_T(A')$
    \IF{$A = \emptyset\ \;\;\algorithmicor\; \cost(A') < \cost(A)$}
        \STATE $A \gets A'$
    \ENDIF
\ENDFOR
\RETURN $A$
\end{algorithmic}
\end{algorithm}

\begin{algorithm}
\caption{Refine$(\mathcal{A})$}\label{alg:refine}
\begin{algorithmic}[1]
\STATE $\mathcal{A}' \gets \mathcal{A}$
\WHILE{$\exists (C_1,r_1),(C_2,r_2) \in \mathcal{A}' \text{ such that } C_1 \cap C_2 \neq \emptyset, C_1\neq C_2, \diam(C_1) \leq r_1, r_1 \leq r_2$}
    \STATE $\mathcal{A}' \gets (\mathcal{A}'\setminus\{(C_2,r_2)\}) \cup (C_2\setminus C_1,r_2)$
\ENDWHILE
\RETURN $\mathcal{A}'$
\end{algorithmic}
\end{algorithm}

\begin{algorithm}[!ht]
\caption{MSDSubroutine$(X, Y, \A, \mathcal{R}, T',q)$}\label{alg:subroutineMSD}
\begin{algorithmic}[1]

\STATE $\A' \gets Refine(\A)$.
\IF{$q = 0$}
    \RETURN $\emptyset$
\ENDIF

\IF{$Y \neq \emptyset$}
    \IF{$q = 0$}
        \RETURN $\emptyset$.
    \ENDIF
    \STATE $Z \gets \{\text{a point from Y}\}$
\ELSE
    \STATE $E \gets \{(C,r)|(C,r) \in \A', \diam(C) > r\}$.\label{lst:enlarged-choice}
    \IF{$E = \emptyset$}
        \RETURN $\{C|(C,r) \in \A'\}$.
    \ELSIF{$q = 0$}
        \RETURN $\emptyset$.
    \ELSE
        \STATE $C \gets \arg \min_{(C,r) \in E}r$  
        \STATE $Z \gets \{\text{Two points of distance $\diam(C)$ from $C$}\}$.
    \ENDIF
\ENDIF

\STATE $Q = \emptyset$
\FOR{$z \in Z$} \label{lst:msd-loop-start}
    \FOR{$r' \in \mathcal{R}$}
        \IF{$r' \leq T'$}
            \STATE $B \gets B(z, r')$
            \FOR{$(x,y) \in B^2$\} }
                \STATE $r \gets d(x,y)$ \label{lst:second-diam-approx}
                \FOR{Every $C^\star \subseteq B(z,r)$ of size $ \leq q^\star$}
                    \STATE $C \gets \bigcap_{x \in C^\star}B(x,r)$. \label{lst:cluster-creation}
                    \IF{$z \in C$}
                        \STATE $Q' = MSDSubroutine(X, Y \setminus C, \A'\cup\{(C,r)\} , \mathcal{R}, T' - r', q - 1)$. \label{lst:msd-subcall}
                        \IF{$Q' \neq \emptyset \land (Q = \emptyset \lor \cost(Q) > \cost(Q')$}
                            \STATE $Q \gets Q'$
                        \ENDIF
                    \ENDIF
                \ENDFOR
            \ENDFOR
        \ENDIF
    \ENDFOR
\ENDFOR
\RETURN $Q$
\end{algorithmic}
\end{algorithm}

From now on, we refer to the approximation of MSD on a specific component $X'$ using a specific $q$.
Our approximation algorithm is as follows:
As in MSR, we iterate over possible bounds on the optimal cost, and for each such cost $T$ we consider the net $X^{(T)}$ as defined above.
We aim to obtain an exact solution for MSD on this net: 
at each call to MSDSubroutine, we choose a point $z$ for which it is possible that the respective cluster, that is the cluster which contains $z$ in the final approximate solution, wasn't created yet.
We iterate through candidate diameters within factor $2$ of each other for the diameter of the cluster from the optimal solution containing $z$.
For each candidate diameter $r$, we create a ball $B$ with radius $r$ around $z$ in $X^{(T)}$, and iterate over all the candidate pairs of points from $X^{(T)}$ which define the cluster's diameter, and candidate choices of $q^\star$ witnesses from this ball, as in Theorem~\ref{thm:exact-msd}.
When trying to create a cluster corresponding to $C \in \C$, we aim to choose witnesses for the clusters of $\neigh^{\C}(C)$.
By the choice of the witnesses, the created cluster doesn't intersect larger clusters.
While in the exact algorithm it was ensured that smaller clusters were already created and hence the new cluster is exactly the required cluster, here it might not be the case.
If the diameter of the created cluster is larger than the chosen candidate diameter we say that the cluster is \emph{enlarged}.

Given a set of created clusters, we perform the following operation iteratively: as long as there is a non-enlarged cluster $C$ created with candidate diameter $r$, which intersects a cluster $C'$ created with candidate diameter $r' \geq r$, we replace $C'$ with $C' \setminus C$.
We call the resulting solution a \emph{refined} solution.

At each stage, the point $z$ is chosen in the following manner: if there are uncovered points, one of them is chosen.
Otherwise, if non of the clusters in the refined solution is enlarged we return the refined solution.
If there is at least one enlarged cluster in the refined solution, we examine an enlarged cluster with minimal candidate diameter.
We choose a pair of points from the cluster which are at maximal distance between each other, and iterate over possible cluster creations with regard to both these points.
In the algorithm's analysis we show that by the refinement process and by the choice of the witnesses, one of these two points is from a cluster of $\C$ for which a respective cluster wasn't created yet.
When the algorithm finishes, we again rely on the refinement process and the choice of witnesses to ensure that the refined solution is the exact solution on $X^{(T)}$.

As in MSR, when processing $X^\star$, with $q = |\opt'|$, the extension of the solution approximates $\opt'$ within an additive factor of $O(\netdist{T^\star}{k})$, and the combination of these extensions is an approximate solution on the whole space.

Full pseudo-code is presented in Algorithms~\ref{alg:algorithmMSD},\ref{alg:subroutineMSD},\ref{alg:refine}, and the analysis is given in the following section.
The run time analysis given in Lemma~\ref{lem:approximate-msd} along with the decomposition run time from \ref{lem:decomp-all}, implies:
\begin{theorem}\label{thm:msd-approx}
A $(1+\eps)$-approximation for $MSD$ can be obtained in $\min\{O(kn),2^{O(\ddim)}n\log n\} + \left(\frac{1}{\eps}\right)^{O(kd)}$ time.
\end{theorem}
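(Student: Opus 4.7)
The plan is to mirror the structure used for MSR: first invoke the decomposition of Lemma~\ref{lem:decomp-all} (item 1 for MSD) to split $X$ into at most $k$ components $X_1,\dots,X_{k'}$, so that some optimum assigns each cluster to a single component, each component has diameter $\le R \le \aspectopt L$, and $L \le \cost(\opt) \le \aspectopt L$. Then, for every component $X'$ and every $q\in[k]$, invoke \textsc{ApproximateMSD} and stitch the results back together via the $O(k^3)$ dynamic program described in the MSR subsection. The overall cost bound $(1+O(\eps))\cost(\opt)$ then reduces to two per-component claims: (i) the component-level algorithm run with $q=|\opt'|$ and $T=T^\star$ returns a clustering of $X^\star$ whose cost is at most that of $\opt^\star$ (the image of $\opt'$ under the net projection), and (ii) extending a solution from the net $X^{(T)}$ back to $X'$ increases the cost by at most $q\cdot 2\netdist{T}{k}$, summing to $O(\eps)\cost(\opt)$ across components.

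Claim (ii) is immediate from the definition of $\netdist{T}{k}$ and the triangle inequality, as each net point is within $\netdist{T}{k}$ of its preimage, so adding its preimages to a cluster grows the diameter by at most $2\netdist{T}{k}$. Claim (i) is the real content: I would prove by induction on the recursion depth that, when run with $T=T^\star$, the subroutine actually reproduces, cluster by cluster, the net-projection of $\opt'$. The induction hypothesis is that after $j$ recursive calls, the partial solution equals the projections $C_1^\star,\dots,C_j^\star$ of $j$ clusters of $\opt^\star$, listed in an order consistent with Lemma~\ref{lem:msd_intersection_bound} (increasing diameter). For the inductive step, I need that the recursion eventually selects, as its point $z$, either an uncovered point of the next cluster $C_{j+1}^\star$ or a pair of points witnessing the diameter of an \emph{enlarged} already-chosen cluster. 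By Lemma~\ref{lem:msd_intersection_bound} we may restrict to solutions where every cluster has $\neigh^{\opt}(C)\le 4$, so the enumeration over $q^\star=4$ witnesses inside $B(z,r')$ will at some iteration pick exactly the witnesses of $C_{j+1}$ with respect to its neighborhood; the cluster $\bigcap_{x\in C^\star} B(x,r)$ produced in line~\ref{lst:cluster-creation} then contains $C_{j+1}^\star$, and after \textsc{Refine} separates it from strictly smaller, earlier-produced clusters, it equals $C_{j+1}^\star$. The refinement step is crucial precisely because smaller-diameter optimal clusters are not yet built when a large cluster's candidate is first enumerated; the argument that refining in order of increasing candidate diameter produces the right cluster is the heart of the proof and the main obstacle.

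For the runtime, Lemma~\ref{lem:approximate-msd} (whose proof mirrors that of Lemma~\ref{lem:approximate-msr}) bounds the cost of \textsc{ApproximateMSD} per component by $(1/\eps)^{O(kd)}$: the number of candidate sequences $(r'_1,\dots,r'_q)$ with $\sum r'_j = O(T)$ is $(1/\eps)^{O(k)}$ by Lemma~\ref{lem:radii-choice-bound}; for each sequence, each ball $B(z,r')\cap X^{(T)}$ has size $(r'k/(\eps T))^{O(d)}$ by the doubling dimension, and the choice of $(x,y)\in B^2$ together with the $q^\star=O(1)$ witnesses inside $B(z,r)$ contributes $|B|^{O(1)}$ per level; multiplying over the $q$ levels telescopes, since $\prod_j (r'_j k/(\eps T))^{O(d)} \le ((k/\eps)^k)^{O(d)}=(1/\eps)^{O(kd)}$, and the refinement and $|X^{(T)}|$-sized overhead are absorbed into this. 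Summing over the $O(\log \aspectopt)=O(\log k)$ choices of $T$ and the $O(k^2)$ total subproblems is subsumed by the exponential term.

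Combining the decomposition cost $\min\{O(kn), 2^{O(d)}n\log n\}$ with the $k\cdot k$-fold execution of \textsc{ApproximateMSD} at cost $(1/\eps)^{O(kd)}$ each, and the $O(k^3)$ merging DP, yields the total $\min\{O(kn),2^{O(d)}n\log n\}+(1/\eps)^{O(kd)}$ and completes the proof. The approximation ratio is $1+O(\eps)$, which after rescaling $\eps$ gives a $(1+\eps)$-approximation as claimed.
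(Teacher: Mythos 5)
Your overall architecture (decomposition, per-component runs of \textsc{ApproximateMSD} for every $q$, dynamic-program merging, the $2\netdist{T}{k}$-per-cluster extension cost, and the runtime accounting via Lemma~\ref{lem:radii-choice-bound} and the doubling bound on $|B(z,r')\cap X^{(T)}|$) matches the paper. The gap is in the correctness claim (i), and it is exactly the part you yourself flag as ``the heart of the proof and the main obstacle.'' Your proposed invariant --- that after $j$ recursive calls the partial solution \emph{equals} the projections $C_1^\star,\dots,C_j^\star$ of $j$ optimal clusters, built in increasing order of diameter --- is not something this algorithm maintains. The subroutine picks an \emph{arbitrary} uncovered point $z$, so the cluster it builds next can be any optimal cluster, not the next one by diameter; and the cluster produced in line~\ref{lst:cluster-creation} may strictly contain $C^z$'s projection, because it can swallow points of \emph{smaller} optimal clusters that have not been built yet (by Lemma~\ref{lem:msd-proper-call-cluster-properties} it only avoids \emph{larger} clusters). \textsc{Refine} cannot fix this at step $j+1$, since it only subtracts smaller non-enlarged clusters that already exist in the partial solution. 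Note also that under your invariant no cluster would ever be enlarged, so the ``pair of diameter-witness points of an enlarged cluster'' branch you invoke would be vacuous --- the invariant and the mechanism you rely on are mutually inconsistent.

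The paper's proof works with a deliberately weaker invariant, a \emph{proper partial solution}: each created cluster is proper (it contains its matching optimal cluster and is disjoint from all larger optimal clusters) but may be enlarged. Three ingredients you omit are then needed. First, Lemma~\ref{lem:msd-refine}: when the space is covered but the refined solution still has an enlarged cluster, taking the two diameter-realizing points of the minimal-$r$ enlarged cluster guarantees that one of them lies in an optimal cluster with no proper cluster yet, so progress continues; this uses a budget argument (Lemma~\ref{lem:msd-ends-properly}, with $T'=3T$ so that $\sum_C \uu{\diam(C)} + k\netdist{T}{k}$ fits). Second, and crucially, one must first replace the optimal solution by a \emph{packed} one (Lemma~\ref{lem:msd-packed-solution}, compatible with Lemma~\ref{lem:msd_intersection_bound}); packedness is what makes the terminal argument of Lemma~\ref{lem:msd-proper-is-exact} go through --- without it one cannot rule out the case $|\A'|<|\C|$, where some optimal cluster's points are entirely absorbed by other created clusters and the refined solution never coincides with the optimum. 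Third, the final guarantee is that the refined solution equals this packed modification of $\opt^\star$ (hence cost at most $\cost(\opt^\star)$), not that the algorithm reproduces $\opt^\star$ cluster by cluster. So the structure you sketch is the right one, but the step you defer is precisely where the new ideas of the paper (proper/enlarged clusters, refinement against core intersections, and packed solutions) are required, and as written your induction would fail at the first cluster that overlaps a smaller, not-yet-built one.
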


\subsubsection{Analysis}
In our algorithm, we choose a point $z$ and try to create an approximate solution cluster containing it, by approximating the cluster from $\C$ containing $z$, denoted by $C^z$. We say that $C^z$ is the \emph{matching cluster} for $z$.
We use the following notions to describe a solution constructed by the algorithm at some point:

\begin{definition}[Proper cluster]
A cluster $C$ created in MSDSubroutine in line \ref{lst:cluster-creation} is \emph{proper} if it satisfies the following conditions with regard to the matching cluster $C^z$ if:
\begin{itemize}
    \item The approximation $r$ for the cluster's diameter, chosen by the algorithm in line \ref{lst:second-diam-approx}, is the minimal $r$ satisfying $r \geq \diam(C^z)$,
    so either $r = \uu{\frac{\eps T}{k}}$ or $\diam(C^z) \geq \frac{1}{1+\eps}r$.
    \item For each $C' \in \neigh^\C(C^z)$, the witness set $C^\star$ contains the point in $X^{(T)}$ corresponding to the witnesses $x$ of $C^z$ with regard to $C'$ such that $d(x,C')$ is maximized, or a single point from $X^z$ if $\neigh^\C(C^z) = \emptyset$.
\end{itemize}
\end{definition}

\begin{lemma}\label{lem:msd-proper-call}
When using $q^\star = O(1)$, if in a call to MSDSubroutine a point $z \in Z$ satisfies $\uu{\diam(C^z)} \leq T'$, then a proper cluster with regard to $C^z$ is created.
\end{lemma}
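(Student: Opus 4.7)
The plan is to verify both bullets of the definition of a \emph{proper cluster} by exhibiting specific iterates of the nested loops of \texttt{MSDSubroutine} that the algorithm is guaranteed to visit, and to show that the cluster produced at those iterates has the required properties. Throughout, the standard net-discretisation estimate is used: every point of $X'$ has a representative in $X^{(T)}$ within distance $\netdist{T}{k}$, so the triangle inequality transfers distances between $X'$ and $X^{(T)}$ up to an additive $2\netdist{T}{k}$ error.

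The first step is to isolate the candidate radius $r' \in \mathcal{R}$ at the scale of $\diam(C^z)$. Note that $\mathcal{R}$ consists of the powers of two from $\netdist{T}{k}$ up to $\uu{\min\{R,T\}}$, so the value $r_0 := \max\{\netdist{T}{k},\uu{\diam(C^z)}\}$ lies in $\mathcal{R}$, and by the hypothesis $\uu{\diam(C^z)}\le T'$ together with $\netdist{T}{k}\le T'$ (which is maintained invariantly along the recursion), it also satisfies $r_0\le T'$. Hence the outer loop on $r'\in\mathcal{R}$ reaches the iterate $r'=r_0$, and for this iterate one of the two alternatives in the first bullet of the proper-cluster definition is forced to hold once $r$ is assigned below.

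The second step is to produce, inside $B = B(z,r')$, both the diametral pair and the prescribed witness set. Let $u,v \in C^z$ realise $\diam(C^z)$ and let $\bar u,\bar v \in X^{(T)}$ be their net representatives; the triangle inequality together with $z$ lying in $C^z$ places $\bar u,\bar v$ inside $B$ and gives $|d(\bar u,\bar v)-\diam(C^z)| \le 2\netdist{T}{k}$, so the assignment $r \gets d(\bar u,\bar v)$ on line~\ref{lst:second-diam-approx} produces the approximation required, once the $2\netdist{T}{k}$ slack is absorbed into the $(1+\eps)$-factor. For the witnesses, invoke Lemma~\ref{lem:msd_intersection_bound} to assume $|\neigh^\C(C^z)| \leq 4$, and for each $C'\in\neigh^\C(C^z)$ push the prescribed distance-maximising witness in $C^z$ to its net representative; this yields a candidate set $C^{\star}\subseteq B(z,r)$ of size at most $q^{\star}=4$. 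Because the innermost loop enumerates \emph{every} subset of $B(z,r)$ of size at most $q^{\star}$, this exact $C^{\star}$ is tried, and the resulting cluster $C=\bigcap_{x\in C^{\star}} B(x,r)$ contains $z$ by direct distance bookkeeping. By construction $C$ satisfies both bullets of the proper-cluster definition.

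The main obstacle will be the enlarged-cluster branch on line~\ref{lst:enlarged-choice}, where $Z$ is the diametral pair of a previously produced cluster rather than an uncovered point, and neither element of $Z$ is a priori known to lie in $C^z$. The plan is to exploit the guarantees of \texttt{Refine}, which removes every intersection between a non-enlarged cluster and a larger one, to argue that the selected (minimum-radius) enlarged cluster was enlarged only because some point of it belongs to an as-yet-unmatched cluster of the reference solution $\C$; combined with the minimality of the chosen enlarged cluster and the definition of $\neigh^\C$, this will place at least one of the two elements of $Z$ inside such an unmatched $C^z$. Since line~\ref{lst:msd-loop-start} iterates over both elements of $Z$, the preceding argument applies for the good choice and a proper cluster is produced.
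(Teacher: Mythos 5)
Your first step coincides with the paper's: the smallest $r'\in\mathcal{R}$ with $r'\geq\diam(C^z)$ is reached by the loop and fits in the budget $T'$. The gap is in your second step, and it comes from misplacing where the net approximation enters. In the paper's analysis the reference solution $\C$ is a clustering of the net $X^{(T)}$ itself (Lemma~\ref{lem:msd-proper-is-exact} speaks of a packed solution $\C$ to MSD \emph{on} $X^{(T)}$, and the lemma is ultimately applied to $\opt^\star$, the solution induced on $X^\star$). Hence $C^z\subseteq X^{(T)}$, its diametral pair and its witnesses are already net points lying in $B(z,r')$, the loops enumerate them \emph{exactly}, and the created cluster has $r=\diam(C^z)$ exactly --- which is precisely what the first bullet of the proper-cluster definition demands ($r$ minimal with $r\geq\diam(C^z)$) and what the downstream lemmas need ($C^z\subseteq C$ in Lemma~\ref{lem:msd-proper-call-cluster-properties}, and $\A'=\C$ exactly in Lemma~\ref{lem:msd-proper-is-exact}); the $(1+\eps)$ loss is charged only when passing from $\opt'$ to $\opt^\star$ and when extending the net solution back to $X'$. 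Your version instead treats $C^z$ as a cluster of the original component, replaces its diametral pair $u,v$ and its witnesses by net representatives, and sets $r=d(\bar u,\bar v)$, accepting a $\pm 2\netdist{T}{k}$ slack to be ``absorbed into the $(1+\eps)$-factor.'' That does not prove the statement: with your $r$ one can have $r<\diam(C^z)$, so the created cluster is not proper in the sense of the definition, the containment $C^z\subseteq C$ and the exact-recovery argument break, and even $\bar u,\bar v\in B(z,r')$ can fail when $\uu{\diam(C^z)}$ is barely above $\diam(C^z)$. The slack cannot be deferred here; exactness on the net is what the surrounding analysis uses.

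Your final paragraph about the enlarged-cluster branch is outside the scope of this lemma: the statement is conditional on a given $z\in Z$ and its matching cluster $C^z$, irrespective of which branch produced $Z$; the claim that $Z$ contains a point whose matching cluster has no proper counterpart is Lemma~\ref{lem:msd-refine}, and in any case you only sketch a plan for it rather than an argument. The witness portion of your step two is otherwise in the right spirit and matches the paper: by Lemma~\ref{lem:msd_intersection_bound} one may assume $|\neigh^{\C}(C^z)|\leq 4=q^\star$, so the prescribed witness set (a subset of $C^z\subseteq B(z,r)$ of size at most $q^\star$) is among the enumerated candidates --- but with the witnesses taken as the net points themselves, not as representatives of outside points.
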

\begin{proof}
Since $\uu{\diam(C^z)} \leq T'$, we know that at some point $\diam(C^z) \leq r'$.
Since $C^z \subseteq B(z,r')$, at some iteration $(x,y)$ are the two points defining the diameter of $C^z$.
At this iteration, $C^z \subseteq B(z,r)$.
By Lemma~\ref{lem:msd_intersection_bound} $|\neigh^\C(C^z)| = O(1)$, hence at some point the witnesses of $C^z$ with regard to the clusters in $\neigh^\C(C^z)$ are chosen, and a proper cluster is formed.
\end{proof}

The following lemma corresponds to the claims in the proof of Theorem~\ref{thm:exact-msd}, and can be proved in the same manner:
\begin{lemma}\label{lem:msd-proper-call-cluster-properties}
Let $C$ be a proper cluster created with regard to a point $z$ using a candidate diameter $r$, then:
\begin{itemize}
    \item $C^z \subseteq C$.
    \item For every cluster $C' \in \C$ such that $\diam(C') \geq \diam(C)$, $C' \cap C = \emptyset$
\end{itemize}
\end{lemma}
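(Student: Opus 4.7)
The plan is to adapt the inductive argument from the proof of Theorem~\ref{thm:exact-msd} to the present approximate, net-based setting. The key observation is that a proper cluster $C$ is constructed from a witness set $C^\star$ that plays the role of $S_i$ in the exact algorithm, only with the witnesses taken as their representatives in $X^{(T)}$; the candidate diameter $r$ plays the role of $D_i$, with an additive slack of order $\netdist{T}{k} = \uu{\eps T/k}$ absorbed by the definition of proper.

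For the first claim $C^z \subseteq C$, the argument mirrors the ``$C_i \subseteq V_i$'' step of the exact proof. Each witness $\tilde{w} \in C^\star$ is the representative in $X^{(T)}$ of a point of $C^z$, so for any $p \in C^z$ the triangle inequality gives $d(p, \tilde{w}) \leq \diam(C^z) + \netdist{T}{k} \leq r$, where the last bound follows from the first bullet of the properness definition (which forces $r \geq \diam(C^z)$, and the $\netdist{T}{k}$ error is absorbed because $r$ is the minimal admissible net distance dominating $\diam(C^z)$). Hence $p$ lies in every ball $B(\tilde{w}, r)$, i.e.\ $p \in C = \bigcap_{\tilde{w} \in C^\star} B(\tilde{w}, r)$.

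For the second claim, fix $C' \in \C$ with $\diam(C') \geq \diam(C)$; since the first bullet already gives $C^z \subseteq C$, the content of this claim is really for $C' \neq C^z$, paralleling the contradiction step of Theorem~\ref{thm:exact-msd}. Suppose for contradiction that $u \in C \cap C'$ with $C' \neq C^z$. Through any witness $\tilde{w} \in C^\star$ representing some $w \in C^z$ we have $d(C^z, C') \leq d(w,\tilde{w}) + d(\tilde{w}, u) \leq \netdist{T}{k} + r$, which up to the net slack is at most $\diam(C^z)$; combined with $\diam(C^z) \leq \diam(C) \leq \diam(C')$, this places $C' \in \neigh^\C(C^z)$. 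Properness then guarantees that $C^\star$ contains the representative $\tilde{c}^{C^z}_{C'}$ of the witness $c^{C^z}_{C'}$ of $C^z$ against $C'$. Letting $c^{C'}_{C^z} \in C'$ be its matching witness, the triangle inequality through $u$ gives
\[
d(c^{C^z}_{C'}, c^{C'}_{C^z}) \leq d(c^{C^z}_{C'}, \tilde{c}^{C^z}_{C'}) + d(\tilde{c}^{C^z}_{C'}, u) + d(u, c^{C'}_{C^z}) \leq \netdist{T}{k} + r + \diam(C'),
\]
contradicting the witness property $d(c^{C^z}_{C'}, c^{C'}_{C^z}) > \diam(C^z) + \diam(C')$ once the $\netdist{T}{k}$ terms are charged to the slack of $r$ over $\diam(C^z)$.

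I expect the main obstacle to be the careful bookkeeping of the $\netdist{T}{k}$ errors against the strict witness inequality: the exact argument rests on a strict gap $> \diam(C_i) + \diam(C_j)$, whereas here we only have an additive margin of order $\netdist{T}{k}$. This is handled either by enlarging the construction balls by an additional $\netdist{T}{k}$, or by observing that this lemma feeds into an approximate reconstruction and so the $\netdist{T}{k}$-order errors can be charged to the final $(1+\eps)$ guarantee; in either case the proof reduces to the same triangle-inequality chase as in Theorem~\ref{thm:exact-msd}.
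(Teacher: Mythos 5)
Your overall instinct --- rerun the witness/triangle-inequality argument of Theorem~\ref{thm:exact-msd} --- is the right one, and it is in fact all the paper does (it explicitly proves this lemma ``in the same manner'' as the claims inside Theorem~\ref{thm:exact-msd}). But you have misplaced where the net error lives, and this creates a genuine gap. In the paper's analysis, $\C$ is a packed solution to MSD \emph{on the net} $X^{(T)}$ itself: the subroutine is run with $X^{(T)}$ as its point set, $C^z$ is a cluster of net points, the witness set $C^\star$ consists of net points of $C^z$ (there are no ``representatives'' of witnesses from $X'$), and the candidate value $r=d(x,y)$ ranges over distances between net points, so properness can be achieved with $r=\diam(C^z)$ exactly (cf.\ Lemma~\ref{lem:msd-proper-call}). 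The $\netdist{T^\star}{k}$ losses enter only later, when the solution found on $X^{(T)}$ is extended back to the component via $\tau_T$ (i.e., when comparing $\opt^\star$ to $\opt'$). Consequently the lemma is an exact statement with no error terms: $C^\star\subseteq C^z$ and $r\geq\diam(C^z)$ give $C^z\subseteq\bigcap_{x\in C^\star}B(x,r)=C$; and if $u\in C\cap C'$ with $\diam(C')\geq\diam(C)\geq\diam(C^z)$ and $C'\neq C^z$, then $d(C^z,C')\leq d(\mathrm{witness},u)\leq r=\diam(C^z)$ places $C'\in\neigh^\C(C^z)$, so the chosen witness and its partner in $C'$ give $d(c,c')>\diam(C^z)+\diam(C')$ while the triangle inequality through $u$ gives $d(c,c')\leq r+\diam(C')=\diam(C^z)+\diam(C')$ --- the same contradiction as in the exact algorithm, verbatim.

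Once you introduce $\netdist{T}{k}$ slippage, your concluding step is not a proof. The lemma asserts exact containment and exact disjointness, and these are consumed downstream (Lemmas~\ref{lem:msd-refine} and~\ref{lem:msd-proper-is-exact}) to conclude that the refined solution \emph{equals} $\C$ on the net; an additive error here cannot be ``charged to the final $(1+\eps)$ guarantee,'' because what is needed is a combinatorial identity, not a cost bound. Moreover the slack you invoke need not exist: properness takes the \emph{minimal} admissible $r\geq\diam(C^z)$, which can equal $\diam(C^z)$, so neither your inequality $\diam(C^z)+\netdist{T}{k}\leq r$ in the first bullet nor the absorption of the two $\netdist{T}{k}$ terms in the second bullet is justified. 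Your alternative fix, enlarging the construction balls by $\netdist{T}{k}$, changes the algorithm and would itself break the disjointness claim (an enlarged ball may capture points of a larger cluster whose witness separation exceeds the sum of diameters by less than the enlargement). The repair is simply to note that everything in this lemma lives on $X^{(T)}$; with that observation your intended argument closes with no error terms and coincides with the paper's proof.
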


When creating a cluster, it still might intersect smaller clusters.
At a given state of the algorithm, the set of the created clusters is denoted by $\A$.
$\A$ contains pairs, and each pair is of the form $(C,r)$, where $C$ is the created cluster, and $r$ is the diameter with which it was created. If $\diam(C) > r$, we say that $C$ is \emph{enlarged}.

We are now ready to define a proper solution:

\begin{definition}[Proper partial solution]
When a call to MSDSubroutine begins, we say that $\A$ is a \emph{proper partial solution} if all the clusters created by calls to MSDSubroutine leading to it created proper clusters with respect to unique clusters from $\C$.
\end{definition}

We denote by $\A'$ the refinement of our proper partial solution, created by the Refine subroutine.
Since any cluster in $\A'$ originates from a cluster in $\A$, we say that a cluster in $\A'$ is proper with regard to a solution cluster if the cluster from $\A$ from which it originates is proper with regard to it.

We use the following notion, and follow it by proving a property of $\A'$:

\begin{definition}[Core intersection]
Consider two clusters $(C,r),(C',r) \in \A'$, such that $C'$ is proper with regard to a clusters $\bar{C} \in \C$.
We say that $C$ \emph{intersects the core} of $C'$ or that $C$ is \emph{core intersecting} $C'$ if $C \cap \bar{C} \neq \emptyset$.
\end{definition}

\begin{lemma}\label{lem:msd-refine}
If on a call to MSDSubroutine $\A$ is a proper partial solution covering the space, and its refinement $\A'$ contains an enlarged cluster, the set of candidate points $Z$ contains a point $z$ such that $\A$ contains no proper cluster with regard to $C^z$.
\end{lemma}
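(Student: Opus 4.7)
The plan is to argue by contradiction: I will assume that every point of $Z = \{z_1, z_2\}$ has a matching cluster $C^z \in \C$ for which $\A$ does contain a proper counterpart, and derive a contradiction with the hypothesis that the chosen cluster $(C,r) \in \A'$ is enlarged, i.e.\ $\diam(C) > r$. The starting observation is that $C$ descends via the refinement procedure from a proper cluster $C_0 \in \A$ created with candidate diameter $r$ with respect to a matching $\bar{C}_0 \in \C$; by Lemma~\ref{lem:msd-proper-call-cluster-properties}, $\bar{C}_0 \subseteq C_0$ and $C_0 \cap \bar{C}' = \emptyset$ for every $\bar{C}' \in \C \setminus \{\bar{C}_0\}$ with $\diam(\bar{C}') \geq \diam(\bar{C}_0)$. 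Since Refine only shrinks clusters, $C \subseteq C_0$, so every cluster of $\C$ meeting $C$ either equals $\bar{C}_0$ or has strictly smaller diameter than $\diam(\bar{C}_0) \leq r$.

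I then rule out $C^{z_1} = C^{z_2}$: both points would lie in one cluster $\bar{C} \subseteq C_0$, hence $d(z_1,z_2) \leq \diam(\bar{C}) \leq \diam(\bar{C}_0) \leq r$, contradicting $d(z_1,z_2) = \diam(C) > r$. So the two matchings differ, and at least one of them --- say $\bar{C}_2 := C^{z_2}$ --- differs from $\bar{C}_0$, whence $\diam(\bar{C}_2) < \diam(\bar{C}_0) \leq r$. Let $C_2^0 \in \A$ denote the assumed proper counterpart with candidate diameter $r_2$; since $r_2$ is the smallest power-of-two at least $\diam(\bar{C}_2)$, I get $r_2 \leq r$.

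Next I would show $\bar{C}_2 \subseteq C_2'$, where $C_2' \in \A'$ is the refinement of $C_2^0$: every cluster subtracted from $C_2^0$ during Refine is itself proper for some $\bar{C}''$ with $\diam(\bar{C}'') < \diam(\bar{C}_2)$, and by Lemma~\ref{lem:msd-proper-call-cluster-properties} it is disjoint from $\bar{C}_2$, so $\bar{C}_2$ survives refinement. In particular $z_2 \in \bar{C}_2 \subseteq C_2' \cap C$. Because $r$ is the minimum candidate diameter among enlarged clusters in $\A'$ and $r_2 \leq r$, $C_2'$ must be non-enlarged whenever $r_2 < r$. The Refine rule then applies to the pair $(C_2', r_2),(C,r)$ and would subtract $C_2'$ from $C$, removing $z_2$ --- contradicting $z_2 \in C$.

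The main obstacle I expect is closing the boundary case $r_2 = r$: a priori $C_2'$ could also be enlarged with the same minimum candidate diameter $r$, and Refine cannot separate two enlarged clusters. To handle this I would exploit the geometry of the diameter witnesses $z_1, z_2$ achieving $d(z_1,z_2) > r$ together with the exclusion guarantee $C_2^0 \cap \bar{C}_0 = \emptyset$ from Lemma~\ref{lem:msd-proper-call-cluster-properties} --- which forces $z_1 \notin C_2^0$ whenever $z_1 \in \bar{C}_0$ --- to show that $r_2 = r$ with $C_2'$ enlarged is incompatible with the proper construction $C_2^0 \subseteq \bigcap_{w \in W_2} B(w,r)$ applied at the two separated points $z_1, z_2$, thereby forcing the strict inequality $r_2 < r$ and completing the argument.
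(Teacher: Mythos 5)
Your plan shares the paper's toolkit (minimality of the candidate diameter $r$, the properties of proper clusters, and the behaviour of Refine), recast as a contradiction, and the part you actually carry out is fine: the two diameter witnesses of $C$ cannot share a matching cluster, so some $z_2\in Z$ has $C^{z_2}=\bar C_2\neq\bar C_0$ with $\diam(\bar C_2)<\diam(\bar C_0)\le r$, and a hypothetical proper counterpart with candidate $r_2<r$ would be non-enlarged by minimality and hence separated from $C$ by Refine. But the case you yourself flag, $r_2=r$, is a genuine gap, and the repair you propose cannot succeed: the candidate diameters are rounded powers of two, so $\diam(\bar C_2)<\diam(\bar C_0)$ is perfectly compatible with both diameters rounding up to the same candidate $r$; nothing in the construction "forces the strict inequality $r_2<r$", so an argument whose endpoint is that inequality is a dead end. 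This is precisely where the paper's proof does something different from yours: it argues directly that the chosen enlarged cluster $C$ intersects the core (the matching $\C$-cluster) of no cluster of $\A'$ --- for clusters with candidate diameter at least $r$ this comes from the second property of Lemma~\ref{lem:msd-proper-call-cluster-properties} rather than from Refine, and only for candidate diameter below $r$ does it use minimality plus the refinement rule --- and then concludes that a diameter witness outside $C$'s own matching cluster lies in no matched cluster at all. Your sketch contains no substitute for the "candidate $\ge r$" half of that argument; the witness geometry you invoke only gives facts like $z_1\notin C_2^0$, which does not preclude $z_2\in C\cap\bar C_2$ with $r_2=r$ and $C_2'$ enlarged.

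A secondary weak point: your survival claim $\bar C_2\subseteq C_2'$ rests on the assertion that every cluster subtracted from $C_2^0$ during Refine is proper for some $\bar C''$ with $\diam(\bar C'')<\diam(\bar C_2)$. The subtraction rule only requires the subtrahend's candidate diameter to be at most $r_2$, so a subtrahend whose matching cluster has larger diameter but lies in the same rounding bucket (candidate exactly $r_2$) is possible, and for it Lemma~\ref{lem:msd-proper-call-cluster-properties} gives no disjointness from $\bar C_2$. This step is repairable --- for instance, if $z_2$ is ever removed from (a partial refinement of) $C_2^0$, it is removed by a non-enlarged cluster of candidate at most $r$, and chasing that chain of subtractions one finds at termination a non-enlarged cluster of candidate at most $r$ containing $z_2$, which Refine would have separated from $C$, again contradicting $z_2\in C$ --- but even with this fix the equal-candidate case with $C_2'$ enlarged remains open, and that is the essential missing piece of your proposal.
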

\begin{proof}
Let $(C,r) \in \A'$ be an enlarged cluster with minimal $r$, as chosen in line~\ref{lst:enlarged-choice} of MSDSubroutine.
Since $\A$ is a proper partial solution, $(C,r)$ is a proper cluster with regard to a solution cluster $C' \in \C$.
Consider a cluster $(\bar{C},\bar{r}) \in \A'$.
By Lemma~\ref{lem:msd-proper-call-cluster-properties},if $\bar{r} \geq r$ then $\C$ doesn't intersect the core of $\bar{C}$
If $\bar{r} < r$, then by the minimality of $r$ the cluster $\bar{C}$ isn't enlarged, and by the refinement process $C \cap \bar{C} = \emptyset$, and in particular due to the first property of Lemma~\ref{lem:msd-proper-call-cluster-properties} $C$ doesn't intersect the core of $\bar{C}$.
As a result, $C$ doesn't intersect the core of any cluster from $\A'$.

Since $\diam(C) > r$ and $\diam(C') = r$, then at least one of the points at distance $\diam(C)$ in $C$ is not from $C'$, but it is also not from any cluster for which a proper cluster was created.
\end{proof}

Lemmas~\ref{lem:msd-proper-call},\ref{lem:msd-refine}, induce the following lemma:
\begin{lemma}\label{lem:msd-ends-properly}
Given $\C$, if the first call to MSDSubroutine is performed with  $T^\star$ and $T'=3T$, then the there is a call tree which ends when $\A$ is a proper partial solution with regard to $\C$.
\end{lemma}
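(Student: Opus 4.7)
The plan is to prove Lemma~\ref{lem:msd-ends-properly} by induction on the ``uncreated'' measure $\mu(\A) = |\C| - |\{C \in \C : C \text{ has a proper cluster in } \A\}|$. The base case is the initial call, where $\A = \emptyset$ is vacuously a proper partial solution with $\mu(\A) = |\C|$. For the inductive step I will show that if a call is reached with $\A$ being a proper partial solution and some branch of MSDSubroutine is about to recurse, then one can choose branches so that the new $\A \cup \{(C,r)\}$ passed to the child call is again a proper partial solution with $\mu$ strictly smaller. Since $\mu$ takes nonnegative integer values and $q$ was initialized to $|\C|$ (so recursion depth is bounded by $|\C| \le q$), the chain must eventually reach a call whose termination branch ($Y = \emptyset$ and no enlarged cluster in the refinement) fires, returning a proper partial solution.

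The inductive step has two sub-cases determined by how the subroutine selects the point set $Z$. When $Y \neq \emptyset$, I pick any $z \in Y$: by the induction hypothesis every cluster in $\A$ is proper, and Lemma~\ref{lem:msd-proper-call-cluster-properties} would force $z$ to be covered if a proper cluster for $C^z$ already existed in $\A$, contradicting $z \in Y$. Hence $C^z$ has no proper cluster yet, so the inductive invariant will be preserved once we add a proper cluster for $C^z$. When $Y = \emptyset$ but some refined cluster is enlarged, I apply Lemma~\ref{lem:msd-refine} (whose hypotheses are exactly the current state) to produce a point $z \in Z$ whose matching cluster $C^z$ is not yet properly processed. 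In either case, I then invoke Lemma~\ref{lem:msd-proper-call}, which guarantees that some branch of the nested loops over $r' \in \mathcal{R}$, pairs $(x,y) \in B^2$, and witness sets $C^\star$ of size at most $q^\star = O(1)$ constructs a proper cluster for $C^z$; this is the branch the call tree will follow.

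The main obstacle is verifying that the running budget $T'$ is always large enough to admit the desired candidate radius, i.e.\ that $\uu{\diam(C^z)} \le T'$ so Lemma~\ref{lem:msd-proper-call} actually applies. I will track $T'$: it starts at $3T^\star$ and is decremented by the chosen $r'$ at each accepted sub-call. For a proper cluster corresponding to $C \in \C$, the chosen $r'$ equals $\max\{\uu{\diam(C)}, \netdist{T^\star}{k}\}$, so $r' \le 2\diam(C) + 2\eps T^\star/k$. Summing over the at most $|\C| \le k$ proper clusters that will ever be created, the total consumption is bounded by $2\sum_{C \in \C}\diam(C) + 2\eps T^\star \le 2\cost(\opt') + 2\eps T^\star \le (2 + 2\eps)T^\star$, which is strictly less than $3T^\star$ for $\eps \le 1/2$. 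Consequently, at every intermediate call, $T'$ still exceeds the remaining budget and in particular exceeds the $r'$ required for the next proper cluster, so the desired branch is never pruned by the $r' \le T'$ test in MSDSubroutine. This closes the induction and yields the claimed call tree.
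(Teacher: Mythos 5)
Your proposal is correct and follows essentially the same route as the paper's proof: it combines Lemma~\ref{lem:msd-proper-call} (a proper cluster can be created whenever the budget admits the needed candidate radius) with Lemma~\ref{lem:msd-refine} (an enlarged cluster in the refinement yields a point whose matching cluster is not yet processed), together with the same budget accounting showing the total consumption $\max\{\uu{\diam(C)},\netdist{T^\star}{k}\}$ summed over clusters is at most $2T^\star+2\eps T^\star\leq 3T^\star$. Your explicit induction on the number of unprocessed clusters of $\C$ and the explicit treatment of the $Y\neq\emptyset$ case just spell out details the paper leaves implicit.
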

\begin{proof}
This lemma follows directly from Leammas~\ref{lem:msd-proper-call},\ref{lem:msd-refine}, and from the fact that on each a proper call we reduce $T'$ by $r' = \max\{\uu{\diam(C)},\uu{\frac{\eps T}{k}}\}$ for a unique $C \in \C$.
Since $\sum_{C\in \C}\uu{\diam(C)} \leq 2\cost(\opt) \leq 2T$, and $k\uu{\frac{\eps T}{k}} \leq 2\eps T \leq T$, we are always able to create the next proper cluster.
\end{proof}

We would like to show that a proper partial solution which created after the refinement is equal to the solution $\C$.
In order to do so, we require the following property from $\C$:

\begin{definition}[packed solution]
A solution $\C$ to MSD is \emph{packed} if for any subset of its clusters $\C'$, $\sum_{C\in \C'}\diam(C) < \diam\left(\bigcup_{C\in \C'}C\right)$.
\end{definition}

The following lemma is trivial:
\begin{lemma}\label{lem:msd-packed-solution}
Let $\C$ be a solution to MSD.
There is a packed solution $\C'$ in which every cluster is a union of clusters from $\C$, and $\cost(\C') \leq \cost(\C)$.
\end{lemma}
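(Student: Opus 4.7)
The plan is to construct $\C'$ from $\C$ by an iterative merging procedure that enforces the packing condition. Start with $\C' \gets \C$ and repeat the following step: if there exists a subset $\mathcal{S} \subseteq \C'$ with $|\mathcal{S}| \geq 2$ violating the packing inequality, i.e.\ satisfying
\[
\sum_{C \in \mathcal{S}} \diam(C) \;\geq\; \diam\!\left(\bigcup_{C \in \mathcal{S}} C\right),
\]
then replace the clusters in $\mathcal{S}$ by their union, obtaining a new candidate $\C'$ with $|\C'|$ strictly smaller. This is well-defined since the union of clusters of a partition of $X$ together with the remaining clusters still forms a partition of $X$.

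First I would verify that each merge step is cost-non-increasing. The change in cost equals $\diam(\bigcup_{C \in \mathcal{S}} C) - \sum_{C \in \mathcal{S}} \diam(C)$, which is $\leq 0$ precisely when the packing inequality is violated, so cost never increases over the run. Next I would observe that each iteration strictly decreases the number of clusters in $\C'$, so the procedure terminates after at most $|\C| - 1$ merges. By a trivial induction on merges, each cluster in the final $\C'$ is a union of clusters from the original $\C$, since every merge takes unions of current clusters, which are themselves unions of original ones.

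Finally, I would argue that the resulting $\C'$ is packed. At termination, no subset $\mathcal{S} \subseteq \C'$ with $|\mathcal{S}| \geq 2$ satisfies the violation inequality (otherwise the loop would continue), so every such subset satisfies the strict packing inequality $\sum_{C \in \mathcal{S}} \diam(C) < \diam(\bigcup_{C \in \mathcal{S}} C)$, which is exactly the packing condition. Combined with the cost-monotonicity, this gives $\cost(\C') \leq \cost(\C)$.

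No substantive obstacle is anticipated; the argument is a standard greedy termination argument. The only minor subtlety is that the packing condition is interpreted as applying to subsets of size at least $2$ (since for singletons the strict inequality is vacuously unachievable), which is the natural reading consistent with the merging viewpoint underlying the definition.
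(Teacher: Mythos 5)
Your argument is correct and is precisely the greedy-merging argument the paper has in mind (the paper omits the proof, declaring the lemma trivial): repeatedly merge any violating subset, note each merge is cost-non-increasing and strictly reduces the number of clusters, and conclude termination with a packed solution whose clusters are unions of original ones. Your remark that the packing condition must be read as applying to subsets of size at least $2$ is also the right reading of the paper's definition.
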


Since in Lemma~\ref{lem:msd_intersection_bound} we also rely solely on uniting clusters, we may apply the lemmas in an alternating manner until the conditions of both of them are met.
As a result, we may assume without loss of generality that the solution $\C$ satisfies the requirements of both lemmas.

\begin{lemma}\label{lem:msd-proper-is-exact}
Consider a call to MSDSubroutine in which $\A$ is a proper partial solution with regard to a packed solution $\C$ to MSD on $X^{(T)}$.
If no additional calls to MSDSubroutine are made by the call, $\A' = \C$.
\end{lemma}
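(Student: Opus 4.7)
The plan is to show that the refined proper partial solution $\A'$ is exactly the packed solution $\C$, by contradiction: any point $p \in C^{(c)} \setminus \bar{C}^{(c)}$ (where $\bar{C}^{(c)} \in \C$ is the cluster matched to $C^{(c)} \in \A'$ by properness) would force $C^{(c)}$ to be enlarged, violating the hypothesis that the subroutine made no recursive call. I would proceed in four steps.

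\textbf{Step 1 (Partition structure).} Since no recursive call is made, the subroutine reached the branch where $Y = \emptyset$ and $E = \emptyset$. The first condition gives $\bigcup_{(C,r)\in\A} C = X^{(T)}$, and since Refine only moves points between clusters (by set subtraction), also $\bigcup_{(C,r)\in\A'} C = X^{(T)}$. The second condition, combined with the termination of Refine, forces the clusters of $\A'$ to be pairwise disjoint: if $(C_1,r_1),(C_2,r_2)\in\A'$ intersected with $r_1 \le r_2$, then $\diam(C_1) \le r_1$ (since $E = \emptyset$), so the while-loop condition of Refine would still fire, a contradiction. Hence $\A'$ is a partition of $X^{(T)}$.

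\textbf{Step 2 (Containment $\bar{C}^{(c)} \subseteq C^{(c)}$).} Each $(C_c, r_c) \in \A$ is proper with respect to a unique $\bar{C}^{(c)}\in\C$, so by Lemma~\ref{lem:msd-proper-call-cluster-properties} we have $\bar{C}^{(c)} \subseteq C_c$, and by properness $r_c = \diam(\bar{C}^{(c)})$ (or $r_c = \netdist{T}{k}$ in the tiny/singleton case, which I handle separately). Refine can only delete from $C_c$ points that lie in some other $(C_b,r_b) \in \A$ with $\diam(C_b) \le r_b \le r_c$. For any such $C_b$ we have $\diam(C_b) \le r_c = \diam(\bar{C}^{(c)})$, so Lemma~\ref{lem:msd-proper-call-cluster-properties} applied to $C_b$ yields $\bar{C}^{(c)} \cap C_b = \emptyset$. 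Thus no point of $\bar{C}^{(c)}$ is ever removed, and $\bar{C}^{(c)} \subseteq C^{(c)}$.

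\textbf{Step 3 (Main contradiction via packedness).} Suppose some $p \in C^{(c)} \setminus \bar{C}^{(c)}$. Then $p \in \bar{C}^{(j)}$ for some $\bar{C}^{(j)} \ne \bar{C}^{(c)}$. Applying the packed property to the two-element subset $\{\bar{C}^{(c)}, \bar{C}^{(j)}\}$ yields
\[
\diam(\bar{C}^{(c)}) + \diam(\bar{C}^{(j)}) < \diam\bigl(\bar{C}^{(c)} \cup \bar{C}^{(j)}\bigr),
\]
so there exist $q^* \in \bar{C}^{(c)}$ and $q'^* \in \bar{C}^{(j)}$ with $d(q^*, q'^*) > \diam(\bar{C}^{(c)}) + \diam(\bar{C}^{(j)})$. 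Since $p, q'^* \in \bar{C}^{(j)}$ we have $d(p, q'^*) \le \diam(\bar{C}^{(j)})$, and the triangle inequality gives
\[
d(q^*, p) \;\ge\; d(q^*, q'^*) - d(p, q'^*) \;>\; \diam(\bar{C}^{(c)}) \;=\; r_c.
\]
Both $q^*$ and $p$ lie in $C^{(c)}$, so $\diam(C^{(c)}) > r_c$, i.e., $(C^{(c)}, r_c)$ is enlarged, contradicting $E = \emptyset$. For the tiny case where $\bar{C}^{(c)} = \{z\}$ is a singleton, the witness set collapses to $\{z\}$, so $C^{(c)} \subseteq B(z, \netdist{T}{k}) \cap X^{(T)} = \{z\}$ by the net's packing property, and the claim is immediate.

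\textbf{Step 4 (Conclusion).} Step~3 shows $C^{(c)} = \bar{C}^{(c)}$ for every $(C^{(c)}, r_c) \in \A'$. Since $\A'$ is a partition of $X^{(T)}$ by Step~1 and $\C$ is also a partition, the injective map $C^{(c)} \mapsto \bar{C}^{(c)}$ must be a bijection, giving $\A' = \C$.

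The key difficulty, and where I spent most of my planning, is Step~3: the packed inequality only produces a distant pair $(q^*, q'^*)$ of the \emph{full} clusters, not a bound involving the specific extra point $p$ that witnesses $C^{(c)} \ne \bar{C}^{(c)}$. The triangle-inequality trick transferring the distance from $q'^*$ to $p$ at the cost of exactly $\diam(\bar{C}^{(j)})$ is what makes the packed slack cancel cleanly with $r_c = \diam(\bar{C}^{(c)})$.
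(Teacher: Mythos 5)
Your proof is correct on the termination branch it analyzes, and its key step takes a genuinely different route from the paper's. The paper splits on whether $|\A'| = |\C|$ or $|\A'| < |\C|$: in the latter case it picks an unmatched cluster $C \in \C$, observes that each (non-enlarged) refined cluster meeting $C$ keeps its core within that core's own diameter of a point of $C$, and contradicts packedness applied to the whole family consisting of $C$ together with those cores. You instead argue locally that every refined cluster equals its matched core: pairwise packedness of $\{\bar{C}^{(c)},\bar{C}^{(j)}\}$ plus one triangle inequality shows that any extra point $p$ would make $(C^{(c)},r_c)$ enlarged, and then coverage plus disjointness (your Step 1) rules out unmatched clusters of $\C$ via the bijection in Step 4. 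This only ever invokes packedness on two-element subsets, makes the role of the non-enlargement check explicit, and absorbs the paper's $|\A'|<|\C|$ case; it is a valid and arguably cleaner argument.

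Two caveats. First, ``no additional calls are made'' does not force the execution to reach the $Y=\emptyset$, $E=\emptyset$ return: the call may also terminate because $q=0$, a case to which the paper devotes half of its proof, and there your premise $E=\emptyset$ must be established rather than read off the code path. It can be patched along the paper's lines: with $q=0$ the partial solution contains proper clusters matched to \emph{all} clusters of $\C$, so the cores cover $X^{(T)}$, and an enlarged refined cluster would, by Lemma~\ref{lem:msd-refine}, produce a point whose matching cluster is unmatched, a contradiction; after that your Steps 1--4 apply verbatim. Second, in Step 2 the remover in Refine is the \emph{current} (possibly already shrunk) cluster, so you may not assume $\diam(C_b)\le r_b$ for the original $C_b$, which may itself be enlarged. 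The fix is to argue through candidate diameters: $r_b\le r_c$ gives $\diam(\bar{C}_b)\le\diam(\bar{C}^{(c)})$, and the witness-based disjointness behind Lemma~\ref{lem:msd-proper-call-cluster-properties} (read with respect to the matched cluster's diameter $r_b$, as in the exact algorithm's proof) already yields $C_b\cap\bar{C}^{(c)}=\emptyset$, which is all your step uses. Neither issue affects the substance of your main argument.
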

\begin{proof}
If $q = 0$, then since $\A$ is proper and contains $k$ clusters, each cluster in $\C$ has a corresponding cluster in $\A$.
By Lemma~\ref{lem:msd-refine}, none of these clusters may be enlarged, hence they don't core-intersect, but each of them contains its respective cluster from $\C$, and $\A' = \C$.

If $q > 0$, by the stopping criteria, none of the clusters is enlarged and the space is covered.
If $|\A'| = |\C|$, again since the clusters don't core intersect and contain their respective cluster from $\C$, $\A' = \C$.
Assume by contradiction that, $|\A'| < |\C|$.
There is a cluster $C \in \C$ for which no proper cluster was created, which intersects a subset of the clusters of $\A'$, denoted $\A^\star$.
Let $\C^\star$ be the set of cluster from $\C$ corresponding to the clusters of $\A^\star$.
Since the clusters of $\A'$ are not enlarged, for each $C' \in \C^\star$, there is a point $x \in C$ such that $\max_{y\in C'}d(x,y) \leq \diam(C')$.
By the triangle inequality, $\diam\left(C \cup \bigcup_{C'\in \C^\star}C'\right) \leq \diam(C) + \sum_{C'\in \C^\star}\diam(C')$, in contradiction to the fact that $\C$ is packed
\end{proof}

Recall that $\opt'$ is the solution induced by $\opt$ on the component $X'$, $T^\star = \uu{\cost(\opt)}$, $X^\star = X^{(T^\star)}$, and let $\opt^\star$ be the solution induced by $\opt'$ on $X^\star$.
When the lemma above is applied with regard to $\opt^\star$, we are guaranteed that when $q = |\opt'|$, the algorithm provides a solution of cost $\leq \cost(\opt') + |\opt'|\netdist{T^\star}{k}$, implying that the combination of the soluitions on the components provide a $(1+O(\eps))$ approximation of $\opt$.

\begin{lemma}\label{lem:approximate-msd}
The run-time of $ApproximateMSD$ is $\left(\frac{1}{\eps}\right)^{O(kd)}$.
\end{lemma}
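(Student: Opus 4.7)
The plan is to mirror the accounting used in the proof of Lemma~\ref{lem:approximate-msr}, decomposing the running time of ApproximateMSD into a product of three factors: (i) the number of outer iterations over $T \in \mathcal{T}$, (ii) the number of valid sequences of candidate radii realized along any recursion path of depth at most $q \le k$, and (iii) the combinatorial work performed at a single call for a fixed candidate radius. The correctness of bounding each factor separately is justified by the fact that each recursion path in MSDSubroutine decreases $q$ by one, so paths have length at most $k$, and every path corresponds to a unique sequence $r_1,\dots,r_q$ of chosen candidate radii.

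For (i), since $\aspectopt=O(k^2)$ and $T$ ranges over powers of two in $[L, O(\aspectopt L)]$, we have $|\mathcal{T}|=O(\log \aspectopt)=O(\log k)$. For (ii), the subroutine only recurses while the remaining budget $T'$ is positive, so any realized sequence satisfies $\sum_i r_i \le 3T$; applying Lemma~\ref{lem:radii-choice-bound} with $\delta=O(1)$ therefore bounds the number of sequences by $(1/\eps)^{O(k)}$. For (iii), at the $i$-th recursion level the call chooses $z$ from a set $Z$ of size at most $2$, iterates over pairs $(x,y)\in B\times B$ with $B=B(z,r_i)\subseteq X^{(T)}$, and enumerates witness sets $C^\star\subseteq B(z,r)$ of cardinality at most $q^\star=O(1)$. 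The doubling dimension bound gives $|B|\le (r_i/\netdist{T}{k})^{O(d)}=(kr_i/(\eps T))^{O(d)}$, so the enumeration over pairs and over the constantly many witnesses contributes $(kr_i/(\eps T))^{O(d)}$; the Refine call inside MSDSubroutine runs in $\poly(k)$ time on $|\A|\le k$ clusters and is absorbed into this factor.

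Multiplying the per-level work across the at most $k$ levels, and using the constraint $\sum_i r_i \le 3T$ (so the product is maximized when $r_i=O(T/k)$, exactly as in the MSR calculation), one obtains
\[
\prod_{i=1}^{q}\left(\frac{k r_i}{\eps T}\right)^{O(d)} \le \left(\frac{1}{\eps}\right)^{O(kd)}.
\]
Combining this with factors (i) and (ii), together with the $O(\log(k/\eps))$ factor for iterating over $r'\in\mathcal{R}$ at each call (which is absorbed into $(1/\eps)^{O(k)}$), yields a total of $O(\log k)\cdot (1/\eps)^{O(k)}\cdot (1/\eps)^{O(kd)} = (1/\eps)^{O(kd)}$, matching the claimed bound.

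The main new element compared to the MSR analysis is the enumeration over witness subsets $C^\star$; this is where one might worry about a blowup, since naively $|B|^{q^\star}$ could be prohibitive. The key reason it is not an obstacle is that Lemma~\ref{lem:msd_intersection_bound} lets us fix $q^\star=O(1)$, so $\binom{|B|}{q^\star}=|B|^{O(1)}$ remains a constant power of $|B|$ and is swallowed by the $O(d)$ in the exponent. The only other subtlety is ensuring that the recursion depth really is bounded by $k$ despite the two possible choices of $z$ when an enlarged cluster is found, but since $q$ strictly decreases on every recursive call the depth bound is automatic.
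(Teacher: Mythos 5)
Your proposal is correct and follows essentially the same route as the paper, which simply reuses the ApproximateMSR accounting (Lemma~\ref{lem:approximate-msr}) and observes that the only changes are enumerating $2+q^\star=O(1)$ points instead of $2$ per ball and choosing $z$ from a set of at most two candidates, neither of which affects the asymptotics. One tiny imprecision: the per-call cost of Refine is polynomial in the net size $\left(\frac{\aspectopt}{\eps}\right)^{O(d)}$ rather than $\poly(k)$ (clusters are subsets of $X^{(T)}$), but this is absorbed exactly as the per-call ball-construction cost is in the MSR analysis, so the bound stands.
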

\begin{proof}
The run time analysis of ApproximateMSD is the similar to the analysis performed for ApproximateMSR in Lemma~\ref{lem:approximate-msr}.
The first difference is that after creating an initial ball around a point $z$, instead of iterating over choices of $2$ points from the ball we iterate over choices of $2 + q^\star$ points. Since $q^\star = O(1)$, this doesn't change the run time.
The second change is that we might iterate over $2$ points, instead of $1$.
This as well doesn't change the asymptotic number of recursive calls the subroutine performs.
\end{proof}

\section{Extension to other variants}

\subsection{Fair-MSR}

In Fair-MSR as introduced by Chen et al.~\cite{CXXZ24}, we are given two additional inputs: the first is a disjoint partition $Y_1,...,Y_m$ of $X$, and the second is a set of integers $k_1,...,k_m$, such that $\sum_{i=1}^m k_i = k$.
The cost function is the same as in MSR, but a solution is feasible only if at most $k_i$ of its center points are from $Y_i$ for every $i \in [m]$.

From Lemma~\ref{lem:decomp-all} we know that Fair-MSR is $O(k^2)$ decomposable in time $\min\{O(kn), 2^{O(d)n\log n}\} + \poly(k)n$.
We use this decomposition method, and run the same method presented in Section~\ref{sec:approx-msr}, with a few changes.
Instead of calculating the approximation for every $q \in [k]$, we call it with every combination of $q_1 \in [k_1],..., q_m \in [k_m]$, and set the initial $q$ to be $\sum_{i=1}^mq_i$.
When merging the solutions obtained on two different components, for each choice of $q_1 \in [k_1],..., q_m \in [k_m]$ we choose the best solution for each component such that the sum of their respective assignment of centers to each demographic group $Y_i$ amounts to $q_i$.

Finally, each time we obtain an approximate solution by solving a bipartite matching problem, in a manner inspired by \cite{CXXZ24}.
This problem is similar to the matching problem presented in Lemma 60 of \cite{BBGH24}, in which the decomposition for fair MSR is provided.
On one side of the bipartite graph, we create $q_i$ points for each demographic group $Y_i$. On the other side, we create one point for each ball $B(x,r)$ in the solution. If there is a point $y \in Y_i$ which is mapped to $x$ in $X^{(T)}$, we connect the point corresponding to $B(x,r)$ to all the points corresponding to $Y_i$.
We then run the Hopcroft-Karp algorithm, and if there is a matching in which each vertex corresponding to a ball $B(x,r)$ has a matching vertex labeled $Y_j$, we replace each the ball $B(x,r)$ with the ball $B\left(y,r + \netdist{T}{k}\right)$, and then return the solution.
If there is no maximal matching, we return no solution.

Let $\opt'$ be the optimal solution induced on a component $X'$, 
and let $q_1,...,q_m$ be the number of centers from each demographic group in $\opt'$.
Let $\opt^\star$ be the solution obtained by moving the center points of $\opt'$ to $X^\star$ and increasing their radii accordingly.
We have the following theorem, corresponding to Lemma~\ref{lem:msr-comp-correctness}:

\begin{lemma}\label{lem:fair-msr-comp-correctness}
When the initial call to MSRSubroutine with the changes stated above is performed with the $q_1,...,q_m$ values of $\opt'$ and $T = T^\star$, it produces a solution to fair-MSR on $X'$ with $\leq q$ balls and cost $\leq \cost(\opt') + 2|\opt'|\netdist{T^\star}{k}$.
\end{lemma}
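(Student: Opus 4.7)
The plan is to build on the proof of Lemma~\ref{lem:msr-comp-correctness} and augment it with a Hall's-theorem argument handling the bipartite matching step for fairness.

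First, I would argue that the modified subroutine, when invoked with the demographic counts $(q_1,\ldots,q_m)$ matching those of $\opt'$ and with $T = T^\star$, constructs the shifted solution $\opt^\star$ as one of the candidates explored during its recursion. Here $\opt^\star$ is the solution obtained from $\opt'$ by moving each center $y_j \in Y_{i(j)}$ to its net representative $x_j \in X^\star$ and enlarging each radius by $\netdist{T^\star}{k}$. The inductive argument of Lemma~\ref{lem:msr-comp-correctness} carries over essentially unchanged, since restricting the recursion to use per-demographic budgets $(q_1,\ldots,q_m)$ only refines the choices that originally reconstructed $\opt^\star$; in particular, $\opt^\star$ uses exactly $q_i$ centers in $Y_i$ (inherited from $\opt'$), so the budgeting is consistent with this reconstruction.

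Next, I would verify that $\opt^\star$ admits a perfect matching in the bipartite graph constructed by the algorithm. Label each ball $B_j^\star \in \opt^\star$ with the demographic $Y_{i(j)}$ of its originating $\opt'$-center $y_j$. Since $y_j \in Y_{i(j)}$ is mapped to $x_j$ in $X^\star$, the ball-node for $B_j^\star$ is adjacent to every demographic-node corresponding to $Y_{i(j)}$. For an arbitrary subset $S$ of demographic-nodes, write $S_i$ for its intersection with the $Y_i$-nodes, so $|S_i| \leq q_i$. Whenever $S_i \neq \emptyset$, all $q_i$ ball-nodes labeled $Y_i$ lie in $N(S)$, giving
\[
|N(S)| \;\geq\; \sum_{i\,:\,S_i \neq \emptyset} q_i \;\geq\; \sum_i |S_i| \;=\; |S|.
\]
Hall's condition is satisfied, so Hopcroft--Karp returns a perfect matching, and the algorithm replaces each ball $B(x_j, r_j + \netdist{T^\star}{k})$ with $B(y_j', r_j + 2\netdist{T^\star}{k})$ for some $y_j' \in Y_{i(j)}$ that maps to $x_j$. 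The resulting cover of $X'$ has at most $q$ balls and at most $q_i$ centers in each $Y_i$, hence is feasible for Fair-MSR.

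The cost accounting then yields the claim: $\cost(\opt^\star) \leq \cost(\opt') + |\opt'|\netdist{T^\star}{k}$ from the initial center-shift onto the net, and the matching step adds at most $\netdist{T^\star}{k}$ to each of the $|\opt'|$ ball radii, so the total cost is at most $\cost(\opt') + 2|\opt'|\netdist{T^\star}{k}$.

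The main obstacle I anticipate is bookkeeping: I need to ensure that the matching is actually tested on the candidate $\opt^\star$, rather than on some cheaper subroutine-returned candidate that might be infeasible for the fairness constraints. This is resolved by applying the matching check to every candidate solution encountered during the recursion (equivalently, restricting the per-component DP to combinations whose matchings succeed). Because $\opt^\star$ is among the candidates visited and its matching succeeds via the Hall's-condition argument, the returned solution's cost is bounded by that of the matched $\opt^\star$, yielding the stated bound.
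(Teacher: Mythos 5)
Your proposal is correct and follows essentially the same route as the paper's proof: reuse the inductive reconstruction of $\opt^\star$ from Lemma~\ref{lem:msr-comp-correctness}, observe that the bipartite matching for $\opt^\star$ succeeds because each of its balls inherits a center from a distinct $\opt'$-center of the appropriate demographic group, and account for the two additive $\netdist{T^\star}{k}$ terms (net shift plus center replacement). The only differences are presentational: you verify the matching via an explicit Hall's-condition computation where the paper simply asserts its existence, and you make explicit the (correct) bookkeeping point that the matching check must be applied to each candidate rather than only to the cheapest one.
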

\begin{proof}
The proof for the creation of $\opt^\star$ is the same as in Lemma~\ref{lem:msr-comp-correctness}.
Since each center point in $\opt^\star$ can be matched with a center point of $\opt'$, we know that there is a maximum matching in the created graph.
The cost of the solution created using this matching is $\leq \cost(\opt') + 2|\opt'|\netdist{T^\star}{k}$.
\end{proof}

We also have the following lemma:
\begin{lemma}\label{lem:approximate-fair-msr}
The run-time of the approximation for Fair-MSR on a single component, with a specific choice of $k_1,...,k_m$, is $\left(\frac{1}{\eps}\right)^{O(k\ddim)}$.
\end{lemma}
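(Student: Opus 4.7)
The plan is to reduce the analysis to that of Lemma~\ref{lem:approximate-msr}, observing that fixing the tuple $(k_1,\ldots,k_m)$ changes essentially nothing about the recursive structure of MSRSubroutine, and that the extra matching step adds only a polynomial overhead per candidate solution. The main observation driving this is that MSRSubroutine's branching depends only on the total number of clusters $q = \sum_{i=1}^m q_i \le k$ it is asked to produce, not on how they are split across the color classes. Consequently, the two ingredients bounded in the proof of Lemma~\ref{lem:approximate-msr} --- namely the $(2/\eps)^{O(k)}$ candidate radius sequences (via Lemma~\ref{lem:radii-choice-bound}) and the $(1/\eps)^{O(kd)}$ combined choices for the $(x,y)$ pairs along each sequence --- carry over verbatim, as does the outer iteration over $T \in \mathcal{T}$ whose $O(\log \aspectopt)$ size is absorbed into the exponent.

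Next I would estimate the cost of the Hopcroft--Karp step that is appended whenever the subroutine produces a candidate solution. Any such solution has at most $k$ balls, so the bipartite graph has at most $k$ vertices on the ball side and $\sum_i q_i \le k$ on the slot side. For each ball $B(x,r)$ we must look up which color classes are represented in $\tau_T(x)$, which costs at most $O(|X^{(T)}|)$; this yields at most $O(k \cdot |X^{(T)}|)$ edges and a matching time of $O(k^{2.5})$. Plugging in the net size bound $|X^{(T)}| = (\aspectopt/\eps)^{O(d)}$ from the proof of Lemma~\ref{lem:approximate-msr} and recalling $\aspectopt = O(k^2)$, this per-candidate overhead is $\poly(k)\cdot(1/\eps)^{O(d)}$.

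Finally, I would combine the two counts. The number of candidate solutions the subroutine ever submits to the matching procedure is at most the number of leaves in its recursion tree, which is precisely what Lemma~\ref{lem:approximate-msr} multiplies to obtain its $(1/\eps)^{O(kd)}$ bound; multiplying this by the per-leaf matching cost keeps the final expression at $(1/\eps)^{O(kd)}$. The only obstacle, which is minor, is just verifying that per-leaf bookkeeping for feasibility (tracking how many centers of each color have been committed along the current recursion branch) can be maintained in $\poly(k)$ amortized time so that infeasible prefixes can be pruned without affecting the asymptotics; this is routine and does not change the bound.
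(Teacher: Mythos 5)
Your proposal is correct and follows essentially the same route as the paper: the recursion of MSRSubroutine is analyzed exactly as in Lemma~\ref{lem:approximate-msr}, and the only addition is the matching step, which costs $\poly(k)$ (up to the negligible net-lookup factor you note) per candidate solution and is absorbed into the $\left(\frac{1}{\eps}\right)^{O(k\ddim)}$ bound.
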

\begin{proof}
The analysis is the same as in Lemma~\ref{lem:approximate-msr}, where the only addition to the run time is from the matching algorithm which runs on $\poly(k)$ for every candidate solution.
This addition to the runtime is insignificant compared to the current run time, yielding the result above.
\end{proof}

Finally, we obtain the following theorem:

\begin{restatable}{theorem}{fairmsrthm}\label{thm:approx-fair-msr}
A ($1 + \eps$)-approximation for Fair MSR can be obtained in $\min\{2^{O(d)}n \log n,\poly(k)n\} + poly(k) + \left(\frac{1}{\eps}\right)^{O(k\ddim)}$ time.
\end{restatable}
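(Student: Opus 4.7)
The plan is to assemble the three ingredients already established in this subsection: the Fair-MSR decomposition in Lemma~\ref{lem:decomp-all}, the per-component correctness of the modified subroutine in Lemma~\ref{lem:fair-msr-comp-correctness}, and the per-component running-time bound in Lemma~\ref{lem:approximate-fair-msr}.

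First, I would invoke Lemma~\ref{lem:decomp-all} to partition $X$ into components $X_1,\ldots,X_{k'}$ with $k' \leq k$ in time $\min\{2^{O(\ddim)}n\log n, O(kn)\} + \poly(k)$. By the cluster-partition property, some optimal Fair-MSR solution has every cluster contained in a single component, so it suffices to solve each component separately and to combine under the constraint that the per-color center counts across components sum to $(k_1,\ldots,k_m)$.

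Next, on each component $X'$ I would run the modified MSRSubroutine (with the appended bipartite-matching step) once for every valid color-budget tuple $(q_1,\ldots,q_m)$ with $q_j \in \{0,\ldots,k_j\}$, setting the initial cluster count to $q = \sum_j q_j$ and iterating over the $O(\log \aspectopt)$ candidate values of $T$. For the tuple matching the per-color counts of the optimal solution $\opt'$ restricted to $X'$ and for $T = T^\star$, Lemma~\ref{lem:fair-msr-comp-correctness} guarantees a feasible solution of cost at most $\cost(\opt') + 2|\opt'|\netdist{T^\star}{k}$. Summing across components yields $\cost(\opt) + 2k\netdist{T^\star}{k} = (1+O(\eps))\cost(\opt)$, giving the claimed approximation after rescaling $\eps$.

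Finally, I would merge the per-component candidates with the dynamic program described in the preamble: DP states are tuples in $\prod_j\{0,\ldots,k_j\}$, components are processed one at a time, and for each residual tuple we retain the best combined cost. The number of valid tuples per component is $\prod_j(k_j+1) \leq 2^k$, so by Lemma~\ref{lem:approximate-fair-msr} the total per-component cost is $2^k \cdot (1/\eps)^{O(k\ddim)} = (1/\eps)^{O(k\ddim)}$, and the DP itself contributes at most $k \cdot 4^k$, again absorbed into $(1/\eps)^{O(k\ddim)}$. The step I expect to require the most care is verifying that the matching-based replacement of each output ball $B(x,r)$ with $B(y, r + \netdist{T}{k})$, for a matched pre-image $y \in Y_j$, simultaneously preserves coverage (via the triangle inequality and the net property $d(x,y) < \netdist{T}{k}$) and fairness (directly from the matching structure); without both, the passage from the $X^\star$-solution to the $X'$-solution in Lemma~\ref{lem:fair-msr-comp-correctness} would fail, and the additive $2|\opt'|\netdist{T^\star}{k}$ slack that drives the approximation bound would not be available.
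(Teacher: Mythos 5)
Your proposal is correct and follows essentially the same route as the paper: decompose via Lemma~\ref{lem:decomp-all}, run the modified subroutine on each component for every per-color budget tuple (of which there are $2^{O(k)}$), invoke Lemmas~\ref{lem:fair-msr-comp-correctness} and~\ref{lem:approximate-fair-msr} for correctness and per-run time, and merge across components with the dynamic program. Your bound $\prod_j(k_j+1)\leq 2^k$ is just a slightly cleaner way of getting the same $2^{O(k)}$ factor the paper obtains by maximizing $\prod_i k_i$.
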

\begin{proof}
Given Lemmas~\ref{lem:decomp-all} and~\ref{lem:approximate-fair-msr}, there is only one change from the run time of the approximation for the standard MSR:
Instead of running the approximation algorithm $k$ times for each component, we run it $\prod_{i=1}^mk_i$ times. Since $\sum_{i=1}^mk_i = k$, this term is maximized for a specific $m$ if $k_i = \frac{k}{m}$, yielding $\left(\frac{k}{m}\right)^m$, and this term is maximized when $m = \frac{k}{2}$, yielding an upper bound of $2^{O(k)}$. This means that all the runs on a specific component still amount to a total run time of $2^{O(k)}\left(\frac{1}{\eps}\right)^{O(k\ddim)}=\left(\frac{1}{\eps}\right)^{O(k\ddim)}$.
Also, the run time of merging the solutions on two different components is also $2^{O(k)}$ by the same considerations.
\end{proof}

\subsection{Mergeable MSD}\label{sec:mergeable-msd}

We address the notion of \emph{mergeable} clustering of \cite{AS21}.
Recall that a clustering problem is \emph{mergeable} if for any feasible solution, a solution obtained by merging two clusters is still feasible. In \cite{DHLSW23} it was shown that many clustering constraints, including several fairness constrains are mergeable.

We note that all the structural properties we use while solving both exact and approximate MSD rely solely on uniting clusters, and hence they apply for mergeable MSD problems.
This implies that our exact algorithm for MSD also works for mergeable MSD problems, with additional run-time for checking the feasibility of each solution. We obtain the following theorem:

\begin{theorem}\label{thm:exact-mergeable-msd}
An exact solution for mergeable MSD can be obtained in time $n^{O(k)}f(n)$ time, where $f(n)$ is the solution validation time. In particular, an exact solution for Fair MSD can be obtained in time $n^{O(k)}$.
\end{theorem}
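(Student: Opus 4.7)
\begin{proofof}{Theorem~\ref{thm:exact-mergeable-msd}}
The plan is to adapt the exact MSD algorithm from Theorem~\ref{thm:exact-msd} by inserting a feasibility check against the mergeable constraint before accepting any candidate solution. The essential observation is that the entire structural argument underlying Theorem~\ref{thm:exact-msd} --- in particular Lemma~\ref{lem:msd_intersection_bound}, which lets us fix the witness-set cardinality at $q^\star = 4$ --- uses only the operation of \emph{merging} clusters in a solution. By definition, a mergeable clustering problem is closed under this operation, and merging cannot increase the MSD cost, so we may assume without loss of generality that the optimal feasible solution $\opt$ satisfies $|\neigh^{\opt}(C)| \le 4$ for every $C \in \opt$.

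The algorithm itself is unchanged: enumerate every $q \in [k]$, every non-decreasing sequence $D_1 \le \dots \le D_q$ drawn from pairwise distances, and every choice of witness sets $S_1,\ldots,S_q$ with $|S_i| \le q^\star = 4$, build the clusters $V_i$ via the restricted-witness rule from Theorem~\ref{thm:exact-msd}, and then invoke the feasibility subroutine (running in time $f(n)$) on the resulting partition. We retain the minimum-cost partition that passes the feasibility check.

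Correctness follows from the induction already established in the proof of Theorem~\ref{thm:exact-msd}: for the (merged) optimum $\opt$ --- which remains feasible by mergeability --- there is an iteration of the algorithm where the choices $D_i = \diam(C_i)$ and $S_i$ equal to the witness set of $C_i$ against $\neigh^{\opt}(C_i)$ cause the reconstruction to produce exactly $\opt$; this candidate then passes the feasibility check and its cost is returned (or improved upon by a feasible one of equal or smaller cost). The running-time bound follows by multiplying the $n^{O(k)}$ enumeration bound of Theorem~\ref{thm:exact-msd} by the $f(n)$ cost per feasibility query. For Fair MSD the feasibility constraint is $|\{i : \text{center}(C_i) \in Y_j\}| \le k_j$ for each $j$, which can be verified in $f(n) = O(n)$ time per partition, yielding the claimed $n^{O(k)}$ bound.

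The main obstacle is conceptual rather than technical: it is the verification that Lemma~\ref{lem:msd_intersection_bound} --- which is the lynchpin allowing a constant witness-set size --- transfers cleanly to the mergeable setting. This hinges on the fact that the normalization step in its proof performs nothing but cluster unions, which preserve both feasibility (by mergeability) and optimality (since unions can only decrease the MSD objective), so no modification to that lemma is required.
\end{proofof}
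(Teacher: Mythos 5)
Your proposal takes essentially the same route as the paper: the paper's own argument is exactly the observation that all structural properties used by the exact MSD algorithm (in particular Lemma~\ref{lem:msd_intersection_bound} and the witness existence argument) rely solely on uniting clusters, which preserves feasibility by the definition of mergeability and never increases the cost, so the enumeration of Theorem~\ref{thm:exact-msd} carries over verbatim with an added $f(n)$-time feasibility check per candidate solution, giving $n^{O(k)}f(n)$.

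One correction to the ``in particular'' part: the Fair MSD constraint you state, bounding how many cluster \emph{centers} come from each color class $Y_j$, is the Fair MSR notion and does not apply to MSD, where clusters are not centered at points. In this paper Fair MSD refers to mergeable fairness constraints such as $b$-balanced clusters (each cluster $C$ must satisfy $\min\{|C\cap X_2|/|C\cap X_1|,\,|C\cap X_1|/|C\cap X_2|\}\geq b$), which are mergeable and validated in $f(n)=O(n)$; with that definition your argument yields the claimed $n^{O(k)}$ bound for Fair MSD.
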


For the approximation algorithm, we have a respective decomposition method, given in Theorem~\ref{lem:decomp-all}.
Since we solve our approximation algorithm on each component separately, for the approximation algorithm we consider only mergeable constraints for which the validation process can be applied to each cluster separately.
This includes fair MSD.

In the approximation for regular MSD, we find an exact solution on a net of the component, and extend it to the whole component.
For this extension to comply with the mergeable constraints when using an $\netdist{T}{k}$-net, it is required that the minimal distance between two clusters in the solution is greater than $\netdist{T}{k}$.
Let $\C$ be a solution to MSD on a component $X'$, and consider some $T$.

\begin{lemma}\label{lem:msd-minimal-cluster-distance}
There is a solution $\C'$ to MSD on $X'$ such that for every $C_1,C_2 \in \C'$, $d(C_1,C_2) > 2\netdist{T}{k}$, the clusters of $\C'$ are unions of clusters of $\C$, and $\cost(\C') \leq \cost(\C) + 2|\C|\netdist{T}{k}$.
\end{lemma}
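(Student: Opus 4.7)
The plan is to construct $\C'$ from $\C$ by an iterative merging procedure: as long as there exist two clusters $C_1, C_2$ in the current collection with $d(C_1, C_2) \leq 2\netdist{T}{k}$, replace them with their union $C_1 \cup C_2$. Each such merge preserves the property that every cluster in the collection is a union of original clusters from $\C$, and strictly reduces the number of clusters by one, so the process terminates after at most $|\C| - 1$ steps. When it terminates, all pairwise inter-cluster distances exceed $2\netdist{T}{k}$, giving the first property of the lemma, and the second property is immediate from the construction.

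The main quantitative step is to bound the cost increase of a single merge. For a pair $C_1, C_2$ with $d(C_1, C_2) \leq 2\netdist{T}{k}$, the triangle inequality gives
\[
\diam(C_1 \cup C_2) \leq \diam(C_1) + \diam(C_2) + d(C_1, C_2) \leq \diam(C_1) + \diam(C_2) + 2\netdist{T}{k}.
\]
Thus replacing the two contributions $\diam(C_1) + \diam(C_2)$ in the cost by $\diam(C_1 \cup C_2)$ raises the total cost by at most $2\netdist{T}{k}$.

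Summing over the at most $|\C| - 1$ merges performed during the entire process, the total cost increase is at most $2(|\C| - 1)\netdist{T}{k} \leq 2|\C|\netdist{T}{k}$, which yields $\cost(\C') \leq \cost(\C) + 2|\C|\netdist{T}{k}$ as required.

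There is no real obstacle here; the only subtlety is to confirm that merging does not undo the separation property in a way that would cause the procedure to fail to terminate. This is handled simply by the monotone decrease of the number of clusters: every iteration strictly reduces $|\C|$, so the process terminates in finitely many steps, and when no more violating pairs exist the first property holds.
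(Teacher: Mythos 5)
Your proposal is correct and follows essentially the same approach as the paper: the paper's proof likewise forms $\C'$ by uniting clusters of $\C$ at distance at most $2\netdist{T}{k}$ and charges the at most $|\C|$ merges, each costing at most $2\netdist{T}{k}$ by the triangle inequality. Your write-up simply makes the iterative merging, the per-merge diameter bound, and the termination argument explicit.
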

\begin{proof}
$\C'$ is obtained by simply uniting clusters from $\C$ with distance $\leq 2\netdist{T}{k}$. The bound on the cost is obtained since this  there are at most $|\C|$ clusters to unite.
\end{proof}

Note that after applying this lemma, its conditions are still met even after applying Lemmas~\ref{lem:msd_intersection_bound} and~\ref{lem:msd-packed-solution}, which also only unite clusters and don't increase the cost of the solution.
Denote by $\C'$ the solution obtained by applying these lemmas to $\opt'$.
Denote the solution induced by $\C'$ on $\X^\star$ by $\C^\star$.
We note that by Lemma~\ref{lem:msd-minimal-cluster-distance},
for every $x \in X^\star$, the points of $\tau_{T^\star}(x)$ are contained within the same cluster in $\C'$, hence the extension of $\C^\star$ to $X'$ is exactly $\C'$.
$\C^\star$ is created when processing $X^\star$ by the same logic applied in regular MSD.
As before, the combination of these solutions $\C'$ for every component $X'$ yield a $(1+O(\eps))$ approximation of $\opt$.

Since our net distance is proportional to the optimal solution's cost, there is an approximate solution which satisfies this condition, and this is the solution we aim to find in the algorithm.
We obtain the following theorem:

\begin{theorem}\label{thm:approx-fair-msd}
A ($1 + \eps$)-approximation for mergeable MSD can be obtained in $\left(\frac{1}{\eps}\right)^{O(kd)}f(n) + 2^{O(d)}n \log n + O(k)f(n)$ time.
In particular, fair MSD can be solved in time $\left(\frac{1}{\eps}\right)^{O(kd)}n + 2^{O(d)}n \log n + O(kn)$
\end{theorem}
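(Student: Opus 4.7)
The plan is to lift the MSD approximation machinery of Theorem \ref{thm:msd-approx} to the mergeable setting, using the decomposition of Lemma \ref{lem:decomp-all}(4) for the outer step and the structural Lemma \ref{lem:msd-minimal-cluster-distance} to certify that net-based exact solutions extend correctly.

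First I would invoke the $O(k)$-decomposition for mergeable MSD from Lemma \ref{lem:decomp-all}(4), which runs in $2^{O(d)} n \log n + O(k) f(n)$ time and guarantees that some optimal solution respects the components. For each component $X'$ I would then run a modified version of Algorithm \ref{alg:algorithmMSD}: the only change is that every candidate partial solution $\A'$ is tested for feasibility (calling the validation oracle once per created cluster, at cost $f(n)$), and infeasible candidates are discarded. Because the feasibility predicate is assumed to be clusterwise-checkable for the variants we treat (in particular for Fair MSD), inserting this test into the recursion MSDSubroutine does not disrupt the recursive enumeration structure.

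Correctness reduces to exhibiting, for each component, a near-optimal feasible target solution that the algorithm is guaranteed to reconstruct on the net $X^\star$. Starting from $\opt'$, I would apply in alternation Lemma \ref{lem:msd_intersection_bound}, Lemma \ref{lem:msd-packed-solution}, and Lemma \ref{lem:msd-minimal-cluster-distance} -- all three operations only merge clusters, so by mergeability they preserve feasibility, and Lemma \ref{lem:msd-minimal-cluster-distance} guarantees the resulting $\C'$ has pairwise cluster distance strictly greater than $2\netdist{T^\star}{k}$ at a total cost increase of only $2|\opt'|\netdist{T^\star}{k} = O(\eps)\cdot \cost(\opt)$. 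The strict separation implies that for every $x \in X^\star$ the entire preimage $\tau_{T^\star}(x)$ lies in a single cluster of $\C'$, so the restriction $\C^\star$ of $\C'$ to $X^\star$ extends back to exactly $\C'$ -- and in particular, the extension is feasible. The analysis of Lemmas \ref{lem:msd-proper-call}--\ref{lem:msd-proper-is-exact} then applies verbatim (none of it relies on the objective itself, only on merging and the neighborhood bound) and guarantees that the algorithm constructs $\C^\star$ exactly on $X^\star$ when invoked with $q = |\opt'|$, $T = T^\star$. A dynamic program over components, as in Section \ref{sec:approx-msr}, then assembles per-component approximations into a global $(1+O(\eps))$-approximation of $\cost(\opt)$.

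For the run-time, I would reuse the calculation of Lemma \ref{lem:approximate-msd}: the number of recursive branches, net sizes, and witness enumeration bounds all carry over, giving $\left(1/\eps\right)^{O(kd)}$ recursive calls per component. Each call now additionally invokes the feasibility oracle $O(1)$ times at cost $f(n)$, yielding $\left(1/\eps\right)^{O(kd)} f(n)$ total for the approximation phase; adding the decomposition cost gives the first bound. For Fair MSD, the $b$-balanced or colored constraint is clusterwise checkable in $O(n)$ time, so $f(n) = O(n)$, and the final bound $\left(1/\eps\right)^{O(kd)} n + 2^{O(d)} n \log n + O(kn)$ follows. The only delicate point -- and the one I would verify most carefully -- is the interaction between the inserted feasibility tests and the exact-reconstruction argument on $X^\star$: we must ensure that $\C^\star$ itself passes the per-cluster test, which holds because each cluster of $\C^\star$ has the same coloring (or, generally, the same per-cluster signature) as the corresponding feasible cluster of $\C'$ by virtue of $\tau_{T^\star}(\cdot)$ being contained entirely within one cluster of $\C'$.
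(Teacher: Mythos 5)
Your overall route is the same as the paper's: decompose via Lemma \ref{lem:decomp-all}(4), observe that Lemmas \ref{lem:msd_intersection_bound}, \ref{lem:msd-packed-solution} and \ref{lem:msd-minimal-cluster-distance} only merge clusters so mergeability preserves feasibility, use the $2\netdist{T^\star}{k}$ separation to conclude that each $\tau_{T^\star}(x)$ lies in a single cluster of $\C'$ and hence that the extension of $\C^\star$ is exactly $\C'$, reconstruct $\C^\star$ on the net by the MSD machinery, and combine components by the dynamic program with the run-time of Lemma \ref{lem:approximate-msd} multiplied by $f(n)$.

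The one genuine problem is where you place the feasibility oracle. You test "every candidate partial solution" by validating each cluster at creation time and discard infeasible branches. But on the very branch the correctness argument relies on -- the proper call tree that eventually yields $\C^\star$ -- the clusters created along the way are in general \emph{not} clusters of $\C^\star$: a proper cluster contains its matching cluster $C^z$ but may also swallow fragments of smaller clusters that have not yet been created, and even after refinement only the \emph{final} refined solution is guaranteed to coincide with $\C^\star$ (Lemma \ref{lem:msd-proper-is-exact}). Mergeability guarantees feasibility only for unions of whole clusters of a feasible solution; it says nothing about a whole cluster plus fragments of others, and for, say, the $b$-balanced constraint such an enlarged cluster can easily be infeasible. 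So your mid-recursion pruning can discard exactly the branch needed, breaking the exact-reconstruction argument. The fix is to defer validation to complete candidate solutions (after refinement and extension to $X'$ via $\tau_{T^\star}$), which still fits in the $\left(\frac{1}{\eps}\right)^{O(kd)}f(n)$ budget; relatedly, the check must be run on the extended clusters, not on the net clusters as subsets of $X^\star$ -- your closing remark about $\C^\star$ having "the same coloring" as $\C'$ is only meaningful under that reading, and it is precisely the fact that the extension of $\C^\star$ equals the feasible $\C'$ that makes the validated candidate pass.
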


\subsection{$\alpha$-MSR}\label{sec:alpha}

For any $\alpha > 1$, $\alpha$-MSR is $O(k^2)$-decomposable,
Using the same method used in MSR, in which every found radius is at most $(1+\eps)$ of a radius of a ball from the original solution, we obtain a solution which is a $(1+O(\eps))^\alpha$. For a small enough $\eps$, $(1+O(\eps))^\alpha \leq (1+O(\alpha\eps))$, hence by running the algorithm with $\eps' = \frac{\eps}{\alpha}$ we obtain a $(1+O(\eps))$ approximation.
Since the decomposition is also the same as the decomposition in MSR, we have the following theorem:

\begin{theorem}\label{thm:approx-alpha-msr}
A $(1+\eps)$-approximation for $\alpha$-MSR can be obtained in $\left(\frac{\alpha}{\eps}\right)^{O(kd)} + \min\{O(kn),2^{O(d)}n\log n\}$ time.
\end{theorem}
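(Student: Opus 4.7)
The plan is to rerun the MSR pipeline of Theorem \ref{thm:msr-approx} essentially unchanged, tuning only the precision parameter to $\eps' = \eps/\alpha$, and swapping linear radius sums for $\alpha$-power sums wherever cost comparisons appear in the algorithm and in the merging step. First I would invoke Lemma \ref{lem:decomp-all}, which gives $O(k^2)$-decomposability for $\alpha$-MSR in time $\min\{O(kn), 2^{O(d)}n\log n\}$, producing components $X_1,\dots,X_{k'}$ together with the same lower/upper bounds $L$ and $\aspectopt L$ on $\cost(\opt)$ used by ApproximateMSR. I would then run ApproximateMSR from Section \ref{sec:approx-msr} on each component for every $q \in [k]$ with precision $\eps'$, and combine per-component solutions via the same $\poly(k)$-time dynamic program, but minimizing the sum of $\alpha$-powers of radii rather than the sum of radii.

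For correctness, I would invoke Lemma \ref{lem:msr-comp-correctness} with $\eps'$ in place of $\eps$: on each component $X'$ with $q = |\opt'|$, the routine returns a cover whose radii lift to $X'$ as $r_i' \le (1 + O(\eps')) r_i^\star$, where $r_i^\star$ is the matching radius of $\opt'$ (the additive $\netdist{T^\star}{k}$ slack is absorbed into a multiplicative $(1+O(\eps'))$ factor exactly as in the MSR analysis, using $k \netdist{T^\star}{k} \le O(\eps') T^\star$). Raising to the $\alpha$-th power and using the elementary inequality $(1+x)^\alpha \le 1 + 2\alpha x$ valid whenever $\alpha x$ is bounded by an absolute constant gives
\[
(r_i')^\alpha \le (1+O(\eps'))^\alpha (r_i^\star)^\alpha \le (1+O(\alpha\eps'))(r_i^\star)^\alpha = (1+O(\eps))(r_i^\star)^\alpha.
\]
Summing across components and using the dynamic-program guarantee (now optimizing $\alpha$-power sums) yields a $(1 + O(\eps))$-approximation, which becomes $(1+\eps)$ by rescaling $\eps$ by an absolute constant.

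The run-time is dominated by ApproximateMSR instantiated with parameter $\eps'$: by Lemma \ref{lem:approximate-msr} each component-$q$ call costs $(1/\eps')^{O(kd)} = (\alpha/\eps)^{O(kd)}$, and summing over the $O(k^2)$ such calls is absorbed into the exponent. Adding the decomposition cost yields the claimed bound. The main obstacle, and essentially the only place the argument differs qualitatively from MSR, is verifying that the additive $\netdist{T^\star}{k}$ net-slack is absorbable into the multiplicative $(1+O(\eps'))$ factor under exponentiation by $\alpha$: this requires a mild case split on whether $r_i^\star$ exceeds the net resolution (multiplicative factor directly) or falls below it (in which case the aggregate $\alpha$-cost introduced over at most $k$ such small balls is bounded by $k(\netdist{T^\star}{k})^\alpha$, absorbed into the global $\eps\cdot\cost(\opt)$ budget using $T^\star = O(\cost(\opt))$ and a further constant shrinkage of $\eps'$).
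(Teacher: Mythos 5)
Your proposal matches the paper's own proof: the paper likewise reuses the MSR decomposition and algorithm with precision $\eps' = \eps/\alpha$ and concludes via $(1+O(\eps'))^\alpha \le 1+O(\alpha\eps') = 1+O(\eps)$, yielding the $\left(\frac{\alpha}{\eps}\right)^{O(kd)}$ term plus the unchanged decomposition time. Your additional case split absorbing the additive $\netdist{T^\star}{k}$ net-slack under exponentiation is a detail the paper glosses over, and it is a sound (indeed more careful) way to justify the same step.
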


\subsection{$k$-center}
In the $k$-center problem we are required to cover a metric space using $k$ balls such that the maximal radius among the radii of these balls is minimized.
Given an $c$-approximation for $k$-center, we may use an algorithm similar to the one used for MSR.
This problem is much simpler, and doesn't even require a decomposition:

\begin{theorem}\label{thm:approx-k-center}
A $(1+\eps)$-approximation algorithm for $k$-center can be obtained in $\min\{O(kn),2^{O(d)}n\log n\} + \left(\frac{1}{\eps}\right)^{O(kd)}$.
\end{theorem}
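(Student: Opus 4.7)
My plan is to directly adapt the recursive framework of Section~\ref{sec:approx-msr}, exploiting two simplifications specific to $k$-center: the objective is a maximum rather than a sum, and a constant-factor approximation already produces a covering of $X$ by $k$ balls of radius $O(R^\star)$, which removes the need for a separate decomposition step (here $R^\star$ denotes the optimal $k$-center cost).

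First I would compute a constant-factor approximation $R_{apx}$ with $R^\star \le R_{apx} \le c R^\star$ for some absolute constant $c$. In a general metric this takes $O(kn)$ via the Gonzalez farthest-point heuristic; in a doubling metric, a constant-factor approximation in time $2^{O(d)} n \log n$ is standard from the point-hierarchy constructions recalled in Section~\ref{sec: def-notation}. The returned $k$ balls cover $X$, so taking an $\uu{\eps R_{apx}}$-net inside each yields a single global net $N$ of size $k \cdot (1/\eps)^{O(d)}$ with the property that every point of $X$ is within $\eps R_{apx}$ of $N$. It then suffices to solve $k$-center on $N$: perturbing any optimal solution by moving each center to its nearest point of $N$ increases every radius by at most $\eps R_{apx} = O(\eps R^\star)$, so the optimum on $N$ is at most $(1+O(\eps)) R^\star$.

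The core routine then mirrors Algorithm~\ref{alg:subroutineMSR}, but without an outer enumeration of per-cluster radii. I iterate over candidate target radii $R \in \{(1+\eps)^i R_{apx}/c : i \ge 0\}$, a set of $O(1/\eps)$ values spanning $[R^\star, (1+\eps)R_{apx}]$. For each $R$ I run a recursive enumeration on $N$: pick any uncovered pivot $z$; since the optimal ball containing $z$ has center within distance $R$ of $z$, enumerate candidate centers $c'$ from an $\eps R$-sub-net of $B(z,R) \cap N$, which by the doubling property has size $(1/\eps)^{O(d)}$; for each candidate remove $B(c',(1+\eps)R) \cap N$ from the uncovered set and recurse with budget $k-1$. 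Correctness follows inductively: for the first iterate $R$ with $R \ge R^\star$ (within a $(1+\eps)$ factor), every optimal center has an approximating net point within distance $\eps R$, and a $(1+\eps)R$ ball around that net point covers the entire optimal ball, so some branch of the enumeration constructs a feasible solution of cost at most $(1+\eps)R \le (1+\eps)^2 R^\star$; rescaling $\eps$ yields a $(1+\eps)$-approximation.

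For the running time, the recursion tree has depth $k$ and branching $(1/\eps)^{O(d)}$ at each level, giving $(1/\eps)^{O(kd)}$ per value of $R$; the $O(1/\eps)$ guesses of $R$ are absorbed. Adding the cost of the initial constant-factor approximation yields the claimed $\min\{O(kn), 2^{O(d)} n \log n\} + (1/\eps)^{O(kd)}$ bound. The only point requiring careful verification is that working on the reduced set $N$ does not lose more than a $(1+\eps)$ factor; this is the mildest step of the argument because all radii are charged an additive $O(\eps R^\star)$ which contributes only to the single max, bypassing the per-cluster bookkeeping that made Lemma~\ref{lem:approximate-msr} more delicate for MSR.
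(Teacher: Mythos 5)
Your proposal is correct and follows essentially the same route as the paper: obtain a constant-factor estimate of the optimal radius in $\min\{O(kn),2^{O(d)}n\log n\}$ time (the paper gets this from the $O(k)$-approximation used for the MSR decomposition, which for the max objective is an $O(1)$-approximation, while you invoke Gonzalez or the hierarchy directly), then enumerate $O(\log_{1+\eps}c)$ geometrically spaced candidate radii and recursively pick an uncovered pivot, trying all centers from an $\eps r$-net of its neighborhood, giving depth $k$ and branching $(1/\eps)^{O(d)}$, hence $(1/\eps)^{O(kd)}$. The only caveat is a routine slack adjustment (the net point approximating an optimal center may lie just outside $B(z,R)$, so the enumeration ball and covering radius should be inflated by $O(\eps R)$), which does not affect the bound.
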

\begin{proof}
First, we note that in \cite{BBGH24}, in order to show the decomposition bounds, an $O(k)$-approximation algorithm for MSR in time $\min\{O(kn),2^{O(d)}n\log n\}$.
For both run times, the proof of the approximation factor is through a lower bound on the maximal radius on any set of at most $k$ balls covering the space.
While for MSR this yields a $O(k)$-approximation, for $k$-center this yields a $O(1)$-approximation, since the cost of a set of balls is exactly the maximal radius among the balls.

Now, given a $c$-approximation to $k$-center with cost $r^\star$, we may iterate over all possible approximate solutions in the set $\mathcal{R}=\{2^{-s}r^\star|s \in \mathbb{N}\cup\{0\},s\leq \lceil\log_{1+\eps}c\rceil\}$.
For each candidate radius $r \in \mathcal{R}$, we may call the a subroutine similar to the MSR subroutine, which chooses an arbitrary uncovered point and iterates over possible centers for a solution ball containing this point from an $\eps r$ net.
For each possible ball, the subroutine makes a matching recursive call.
Due to the depth of the recursion which is at most $k$, since each call to the subroutine makes $\left(\frac{1}{\eps}\right)^{O(d)}$ recursive calls, and since there are $O(\log_{1+\eps}c)$ choices for $r$, the total run time is as required.
\end{proof}

Note that this largely improved the running time of the discrete version of the $k$-center problem when the input space is constant doubling dimension over the previous best~\cite{ABBC23, FM20}.
Further note, when the dimension is not constant then one cannot get EPAS for the discrete $k$-center problem even for the Euclidean metric~\cite{ABBC24}.

We also note that the algorithm presented above for $\alpha$-MSR, can be viewed as a $(1+\eps)$ approximation algorithm in time $\left(\frac{1}{\eps}\right)^{O(kd)}$ to a problem with an augmented cost function of the $l_\alpha$ norm: $\left(\sum_{i=1}^kr_i^\alpha\right)^{1/\alpha}$.
Given a metric space $X$ and some $\eps > 0$, for a large enough $\alpha$ the results of this algorithm yield an approximation algorithm to the $k$-center problem in the same time presented above.

\subsection{Outliers}\label{sec:outliers}
\subsubsection{MSR with outliers}

In order to adapt the algorithm to handle outliers, we perform the following changes:

\begin{itemize}
    \item We use the $O((k+\olr)^2)$-decomposition from Lemma \ref{lem:decomp-all} instead of the ordinary $k^2$-decomposition, and run ApproximateMSR with $\aspectopt = O((k+\olr)^2)$ accordingly.
    \item For each component, we run the approximation with every possible combination of $k' \in [k]$ and $\olr' \in [\olr]$. When merging the results for each cluster to the overall result, we maintain the total number of of outliers accordingly.
    \item When a solution is obtained, we extend it from $X^{(T)}$ to $X$, and discard it if there are more than $\olr'$ outliers.
    \item When choosing a point $z$ to be covered in MSRSubroutine, if $|\tau_T(z)|$ is smaller than the number of remaining outliers, we also perform a call to MSRSubroutine in which $z$ is an outlier, and in which the number of remaining outliers is decreased bu $|\tau_T(z)|$.
\end{itemize}

We then have the following theorem, which can be proved exactly as the Lemma~\ref{lem:msr-comp-correctness}:

\begin{lemma}\label{lem:msr-comp-correctness-outliers}
When the initial call to MSRSubroutine with the changes stated above is performed with $q = |\opt'|$, the number of outliers in $\opt'$, and $T = T^\star$, it produces a solution to MSR on $X^\star$ with at most $q$ balls, at most the same number of outliers as in $\opt'$. at m $\olr$, and cost $\leq \cost(\opt^\star)$.
\end{lemma}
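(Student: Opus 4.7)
The plan is to follow the inductive template of Lemma~\ref{lem:msr-comp-correctness} while strengthening the invariant to carry outlier bookkeeping. Concretely, at each recursive invocation I will maintain that (i) every ball already placed in the partial solution is a ball of $\opt^\star$, produced with the candidate radius $r' = \max\{\uu{r},\netdist{T^\star}{k}\}$ exactly as in the non-outlier argument, and (ii) every net point already declared an outlier has its preimage $\tau_{T^\star}(z)$ contained in the outlier set $O'$ of $\opt'$, so the remaining outlier counter in $X$ is at least $|O'|$ minus the total size of the preimages of the already-declared outliers. The base case (empty partial solution, full budget) trivially satisfies both.

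For the inductive step on a chosen point $z \in Y$, I split on $z$'s role in $\opt^\star$. If $z$ lies in a ball $B(x^\star,r) \in \opt^\star$, I argue as in the original proof: the cumulative approximate-radius budget plus the reserved outlier contribution is at most $3T^\star$, so the loop over $\mathcal{R}$ still reaches $r' = \max\{\uu{r},\netdist{T^\star}{k}\} \leq T'$; then the triangle inequality forces $B(z,2r')$ to contain both the net-snapped center of $B(x^\star,r)$ and a diameter witness, so some pair iteration reconstructs this ball exactly, and the recursion proceeds along the desired branch while preserving invariant~(i).

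If instead $\tau_{T^\star}(z) \subseteq O'$, invariant~(ii) implies that the remaining outlier budget in $X$ is at least $|\tau_{T^\star}(z)|$, which is precisely the condition under which the modified subroutine launches the outlier branch, decrementing the counter by $|\tau_{T^\star}(z)|$ and preserving invariant~(ii). Hence the branch consistent with $\opt^\star$ is always explored. Once $Y$ is exhausted, the invariant shows that one returned solution coincides with $\opt^\star$ on $X^\star$, uses at most $|\opt'|$ balls, and declares outliers whose preimages in $X'$ are contained in $O'$, so the overall cost is at most $\cost(\opt^\star)$. The only subtlety beyond the original proof is verifying that the outlier branch is simultaneously available (budget-wise) and sufficient (its preimage is genuinely outlier) exactly when $\opt^\star$ dictates it; this follows immediately from the strengthened invariant, so no new obstacle arises beyond the bookkeeping.
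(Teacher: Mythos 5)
Your overall strategy---rerunning the inductive reconstruction of Lemma~\ref{lem:msr-comp-correctness} with an extra invariant that tracks the outlier budget---is exactly the route the paper takes (its proof is stated as ``can be proved exactly as Lemma~\ref{lem:msr-comp-correctness}''). However, your case analysis has a genuine gap: you split on ``$z$ lies in a ball of $\opt^\star$'' versus ``$\tau_{T^\star}(z)\subseteq O'$'', and these two cases are not exhaustive when $\opt^\star$ is taken as defined for the non-outlier setting (centers snapped to the net, each radius increased by $\netdist{T^\star}{k}$). A net point $z$ can be an outlier of $\opt'$---and hence lie in no ball of $\opt^\star$, since for a covered point $p\in\tau_{T^\star}(z)$ in $B(x,r)\in\opt'$ one only gets $d(z,x^\star)< r+2\netdist{T^\star}{k}$, which may exceed $r+\netdist{T^\star}{k}$---while $\tau_{T^\star}(z)$ also contains points that $\opt'$ covers. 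For such a $z$ neither branch of your induction applies: faithfully reconstructing $\opt^\star$ never removes $z$ from $Y$, and declaring $z$ an outlier charges $|\tau_{T^\star}(z)|$ (possibly far larger than the number of genuine outliers inside $\tau_{T^\star}(z)$) against a budget of only $|O'|$; indeed the modified subroutine is not even allowed to open the outlier branch when $|\tau_{T^\star}(z)|$ exceeds the remaining count. So the invariant you state cannot be maintained through this case, and the induction breaks.

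The repair is the one the paper makes explicit in the analogous MSD-with-outliers discussion: change the benchmark solution on the net. Replace each ball $B(x,r)\in\opt'$ by a net ball enlarged so that it contains \emph{every} net point whose preimage meets $B(x,r)$ (radius $r+2\netdist{T^\star}{k}$ suffices), and declare a net point of $X^\star$ an outlier only when its entire preimage lies in $O'$. With this benchmark your dichotomy becomes exhaustive by construction, disjointness of the preimages of distinct outlier net points gives the budget bound $\sum_z|\tau_{T^\star}(z)|\le|O'|$, and the benchmark's cost is at most $\cost(\opt')+2|\opt'|\netdist{T^\star}{k}$, which is all the downstream $(1+O(\eps))$ argument requires. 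Your write-up, by keeping the unmodified $\opt^\star$, misses exactly this adjustment.
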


\begin{lemma}\label{lem:approximate-msr-outliers}
The run-time of $ApproximateMSR$ with the changes described above is $\olr^{O(d)}\binom{k + \olr}{\olr}\left(\frac{1}{\eps}\right)^{O(k\ddim)}$.
\end{lemma}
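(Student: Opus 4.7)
The plan is to adapt the run-time analysis of Lemma~\ref{lem:approximate-msr}, tracking the effects of each outlier modification and verifying they produce precisely the extra factors $\olr^{O(d)} \binom{k+\olr}{\olr}$ in the bound.

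First I would handle the enlarged aspect-ratio parameter. Replacing the $O(k^2)$-decomposition by the $O((k+\olr)^2)$-decomposition raises $\aspectopt$ to $O((k+\olr)^2)$, so the outer enumeration over $\mathcal{T}$ contributes an absorbed $O(\log(k+\olr))$ factor and the net size becomes $|X^{(T)}| = \left(\frac{\aspectopt}{\eps}\right)^{O(\ddim)} = k^{O(\ddim)}\,\olr^{O(\ddim)}\left(\frac{1}{\eps}\right)^{O(\ddim)}$. The $k^{O(\ddim)}\left(\frac{1}{\eps}\right)^{O(\ddim)}$ piece is absorbed into $\left(\frac{1}{\eps}\right)^{O(k\ddim)}$ exactly as in Lemma~\ref{lem:approximate-msr}, while the surviving $\olr^{O(\ddim)}$ produces the leading factor in the claim. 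The additional outer enumeration over $(k', \olr') \in [k] \times [\olr]$ contributes only polynomial overhead that is also absorbed.

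Next I would analyze the modified recursion tree. At every call, MSRSubroutine now branches at the chosen point $z$ into at most two children: a ``cover $z$'' child, identical to the original algorithm and performing the same per-ball enumeration, and an ``outlier $z$'' child, which removes $\tau_T(z)$ from the remaining point set and makes a single recursive call with no additional enumeration. Any root-to-leaf path consists of at most $k$ cover steps interleaved with at most $\olr$ outlier steps, so the number of such interleavings is bounded by $\binom{k+\olr}{\olr}$, producing the second factor. For each fixed interleaving the cover steps still obey $\sum_i r_i' \le O(T)$ (outlier steps do not consume any of the radius budget), so Lemma~\ref{lem:radii-choice-bound} continues to bound the number of radius sequences by $\left(\frac{1}{\eps}\right)^{O(k)}$ and the per-ball enumeration cost remains $\left(\frac{1}{\eps}\right)^{O(k\ddim)}$, exactly as in the proof of Lemma~\ref{lem:approximate-msr}.

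The final modification --- extending each candidate solution from $X^{(T)}$ to $X$ and discarding those with more than $\olr'$ actual outliers --- adds only polynomial per-solution work. Multiplying the surviving factors yields $\olr^{O(d)} \binom{k+\olr}{\olr} \left(\frac{1}{\eps}\right)^{O(k\ddim)}$. The main subtlety is verifying that outlier branches do not multiply the per-ball enumeration cost inside a single root-to-leaf path; this relies on the observation that each outlier step spawns only one recursive call and consumes no radius budget, so the radius-sequence argument of Lemma~\ref{lem:radii-choice-bound} applies verbatim to the restricted subsequence of cover steps, and all outlier-induced branching is already accounted for by the interleaving count $\binom{k+\olr}{\olr}$.
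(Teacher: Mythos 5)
Your proposal is correct and follows essentially the same route as the paper's proof: it reduces to the analysis of Lemma~\ref{lem:approximate-msr} with exactly the two changes the paper identifies, namely $\aspectopt = O((k+\olr)^2)$ (yielding the $\olr^{O(d)}$ factor via the net size) and the interleaving of at most $\olr$ zero-budget outlier steps among the cover steps (yielding the $\binom{k+\olr}{\olr}$ factor). Your recursion-tree phrasing of the interleaving count is just a more detailed restatement of the paper's "add $\olr$ zero-diameter instances to each radius sequence" argument.
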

\begin{proof}
The analysis is the same as in Lemma \ref{lem:approximate-msr}, with two changes:
\begin{itemize}
    \item $\aspectopt= O((k+\olr)^2)$ instead of $O(k^2)$.
    \item When considering possible sequences of candidate radii, we should add $\olr$ instances of a zero diameter to each possible sequence, adding $\binom{k + \olr}{\olr}$ sequences for each previous sequence.
    \end{itemize}
The total run time is hence $k^2\log(\aspectopt)\binom{k + \olr}{\olr}\left(\frac{\aspectopt}{\eps}\right) ^{O(d)}\left(\frac{2}{\eps}\right)^{O(\ddim k)} = \olr^{O(d)}\binom{k + \olr}{\olr}\left(\frac{1}{\eps}\right)^{O(\ddim k)}$.
\end{proof}

Since the total number of components is bounded by $k + \olr - 1$, and merging the solution takes $\poly(k, \olr)$, we get the following conclusion:

\begin{theorem}\label{thm:approx-msr-outliers}
A $(1 + \eps)$ approximation of MSR with $k$ clusters and $\olr$ outliers can be obtained in $\min\{O((k+\olr)n),2^{O(\ddim)}n\log n\} + \olr^{O(d)}\left(\frac{1}{\eps}\right)^{O(k\ddim)}$.
\end{theorem}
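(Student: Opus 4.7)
\begin{proofof}{Theorem~\ref{thm:approx-msr-outliers}}
The plan is to assemble the result from the three ingredients developed earlier: the outlier-aware decomposition (Lemma~\ref{lem:decomp-all}, part 2), the per-component correctness statement (Lemma~\ref{lem:msr-comp-correctness-outliers}), and the per-component run-time bound (Lemma~\ref{lem:approximate-msr-outliers}). First I would invoke the decomposition on $X$, producing a family of components $X_1, \ldots, X_{k'}$ with $k' \le k + \olr - 1$ in time $\min\{O((k+\olr)n), 2^{O(d)} n \log n\}$. By property~2 of decomposability (adapted to the outlier setting), there is an optimal outlier solution in which every cluster is a subset of some component and every outlier sits inside some component, and the total diameter of the components gives the bounds $L$ and $\aspectopt L$ with $\aspectopt = O((k+\olr)^2)$ needed as input to the subroutine.

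Next, for each component $X_i$ I would run the modified ApproximateMSR once for every pair $(q', g') \in [k] \times [\{0\} \cup [\olr]]$, using the four modifications listed just above the theorem (enlarged $\aspectopt$, the outer loop over $(q', g')$, the post-hoc check of outlier count after extension from $X^{(T)}$ to $X$, and the additional ``$z$-is-an-outlier'' branch in MSRSubroutine whenever $|\tau_T(z)| \le g'_{\mathrm{remaining}}$). By Lemma~\ref{lem:msr-comp-correctness-outliers}, the run with $q' = |\opt'|$ and $g'$ equal to the number of outliers charged to this component by $\opt$ produces a cover of $X_i^\star$ whose extension to $X_i$ has cost at most $\cost(\opt'_i) + 2|\opt'_i|\, \netdist{T^\star}{k}$ and uses at most $g'$ outliers; for incorrect guesses we either get a feasible but possibly worse solution or no solution at all, which is harmless because only the tables indexed by $(q', g')$ that the DP actually selects matter.

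To combine the per-component solutions, I would run a two-dimensional dynamic program over components, where the state is the pair (clusters used, outliers used). Processing one component at a time and convolving its table of best per-$(q', g')$ costs with the running table takes $O(k^2 \olr^2)$ per component, so the merge totals $\poly(k, \olr)$, which is absorbed by the other terms. Summing over all $(q', g')$ with $q' \le |\opt'_i|$ and $g' \le g_i$, the DP picks out a global solution of cost at most $\cost(\opt) + 2k\, \netdist{T^\star}{k} = (1 + O(\eps))\cost(\opt)$; rescaling $\eps$ by a constant gives the claimed $(1+\eps)$-approximation.

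Finally for the run-time: the decomposition contributes the first term. Each component calls ApproximateMSR $k \cdot (\olr+1)$ times, each costing $\olr^{O(d)} \binom{k+\olr}{\olr}(1/\eps)^{O(kd)}$ by Lemma~\ref{lem:approximate-msr-outliers}; since there are at most $k + \olr$ components and the number of $(q', g')$ pairs across all components telescopes into the same bound (the factor $k\olr$ is dominated by $\binom{k+\olr}{\olr}$ up to the leading constants), the total algorithmic cost is $\olr^{O(d)} \binom{k+\olr}{\olr}(1/\eps)^{O(kd)}$, plus $\poly(k, \olr)$ for the merging DP, which is absorbed. The main obstacle in writing out the details rigorously is bookkeeping the outlier count through the extension step: one must verify that discarding solutions which produce more than $g'$ outliers after the $\tau_T$ expansion does not eliminate the ``correct'' guess, and conversely that the cost estimate used inside the recursion (which is computed on $X^{(T)}$) still upper-bounds the cost on $X$. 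Both follow because the optimal outlier pattern on $X$ projects to an outlier pattern on $X^{(T)}$ of cardinality at most $g$, and the ``$z$-is-an-outlier'' branch in MSRSubroutine precisely mirrors this projection, with the $|\tau_T(z)|$ check ensuring that we never over-commit outliers.
\end{proofof}
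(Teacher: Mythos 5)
Your proposal is correct and follows essentially the same route as the paper: apply the outlier-aware decomposition of Lemma~\ref{lem:decomp-all}, run the modified ApproximateMSR on each component over all guesses of cluster and outlier budgets, invoke Lemma~\ref{lem:msr-comp-correctness-outliers} for correctness and Lemma~\ref{lem:approximate-msr-outliers} for the per-component run-time, and merge the at most $k+\olr-1$ components in $\poly(k,\olr)$ time (your explicit two-dimensional dynamic program over clusters and outliers merely spells out a step the paper leaves implicit). The only discrepancy is that the bound you derive retains the $\binom{k+\olr}{\olr}$ factor, exactly as in Lemma~\ref{lem:approximate-msr-outliers} and Table~\ref{tbl:results}, while the theorem statement as printed omits it; that inconsistency lies in the paper, not in your argument.
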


\paragraph{$\alpha$-MSR}
We also note that these changes may be applied to $\alpha$-MSR as well. For $\alpha$-MSR with outliers the decomposition is the same as in MSR with outliers:

\begin{theorem}\label{thm:approx-alpha-msr-outliers}
A $1 + \eps$ approximation of $\alpha$-MSR with $k$ clusters and $\olr$ outliers can be obtained in $\min\{O((k+\olr)n),2^{O(\ddim)}n\log n\} + \olr^{O(d)}\binom{k + \olr}{\olr}\left(\frac{\alpha}{\eps}\right)^{O(k\ddim)}$.
\end{theorem}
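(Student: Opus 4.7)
The plan is to combine the two modifications already developed in the paper: the $\alpha$-reparametrization of Theorem~\ref{thm:approx-alpha-msr} and the outlier-handling of Theorem~\ref{thm:approx-msr-outliers}. First I would invoke part (2) of Lemma~\ref{lem:decomp-all}, which decomposes any instance of $\alpha$-MSR with outliers into $O((k+\olr)^2)$-diameter components in time $\min\{O((k+\olr)n), 2^{O(\ddim)}n \log n\}$, exactly as in the MSR-with-outliers case; the same diameter bound applies since $\alpha$-MSR and MSR have the same notion of covering and differ only in the objective.

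Next, on each component $X'$ I would run the outlier-aware version of $ApproximateMSR$ used in Theorem~\ref{thm:approx-msr-outliers}, with $\aspectopt = O((k+\olr)^2)$, but with $\eps$ everywhere replaced by $\eps' = \eps/\alpha$. Concretely, for every pair $(q, \olr') \in [k]\times[\olr]$ we iterate over guesses $T$ of the $\alpha$-MSR optimum, run $MSRSubroutine$ on the $\netdist{T}{k}$-net $X^{(T)}$, and additionally at each recursive call allow the chosen point $z$ to be declared an outlier (decreasing the outlier budget by $|\tau_T(z)|$) whenever this is feasible. The merging of per-component solutions via a dynamic program is the natural two-dimensional extension: for each component we store, for every $(q',\olr')$, the best found cost, and combine components by convolving over both parameters in $\poly(k,\olr)$ time.

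For correctness I would mimic Lemma~\ref{lem:msr-comp-correctness-outliers}: when the guesses coincide with the $\alpha$-optimal counts and $T=T^\star$, the subroutine reconstructs the ``lifted'' optimum $\opt^\star$ (with each optimal radius $r_i$ inflated to $\uu{r_i}$ or $\netdist{T^\star}{k}$, whichever is larger). This yields radii of the form $r_i' \le (1+\eps')r_i + \netdist{T^\star}{k}$. The $\alpha$-power of each radius then satisfies $(r_i')^\alpha \le (1+O(\eps'))^\alpha r_i^\alpha \le (1+O(\alpha \eps'))r_i^\alpha = (1+O(\eps))r_i^\alpha$ for sufficiently small $\eps'$, and the additive $\netdist{T^\star}{k}$ slack, summed over the at most $k$ clusters per component and across components, is absorbed into the $(1+O(\eps))$ factor exactly as in the proof of Theorem~\ref{thm:msr-approx}. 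The outlier count in the extended solution is at most the guessed $\olr'$ since $\tau_T$ only collapses points within distance $\netdist{T}{k}$.

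For the running time I would directly combine Lemmas~\ref{lem:approximate-msr} and~\ref{lem:approximate-msr-outliers} with $\eps \mapsto \eps/\alpha$: the number of radius sequences becomes $\left(\frac{\alpha}{\eps}\right)^{O(k)}$, the per-sequence enumeration becomes $\left(\frac{\alpha}{\eps}\right)^{O(kd)}$, the outlier bookkeeping contributes $\binom{k+\olr}{\olr}$ extra sequences, and the net sizes contribute $\olr^{O(d)}$ through $\aspectopt = O((k+\olr)^2)$. Summing over components and adding the decomposition cost yields the stated bound. The only mildly delicate step is confirming that the $(1+\eps')^\alpha \le 1+O(\alpha\eps')$ approximation identity is legitimate without restricting $\alpha$; this is handled, as in Theorem~\ref{thm:approx-alpha-msr}, by assuming $\eps \le 1$ (and thus $\eps' \le 1/\alpha$) so that the Taylor expansion is valid, which is the only point where the proof requires care beyond plugging together the two earlier arguments.
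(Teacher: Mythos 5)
Your proposal is correct and follows essentially the same route as the paper, which obtains this theorem by applying the outlier modifications of Theorem~\ref{thm:approx-msr-outliers} (with the $O((k+\olr)^2)$-decomposition of Lemma~\ref{lem:decomp-all}) together with the $\eps' = \eps/\alpha$ reparametrization of Theorem~\ref{thm:approx-alpha-msr}. Your write-up simply fills in these combination details more explicitly than the paper's brief remark does.
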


\subsubsection{Approximation of MSD with outliers}

There are a few changes which are required in order to allow outliers in the MSD algorithm, and incur a run time exponential in $\olr$:

\begin{itemize}
    \item During the run of MSDSubroutine we also maintain a count of the number of outliers left to be used.
    \item During the refinement process, we consider outliers as clusters with candidate diameter $0$, thus removing them from newly created clusters.
    \item When extending the solution from $X^{(T)}$ to $X$, if a point $x \in X^{(T)}$ is chosen to be an outlier, and it isn't covered by any solution cluster, we set all the points in $\tau_T(x)$ to be outliers.
    \item When choosing a point $z$ to be covered in MSDSubroutine, if $|\tau_T(z)|$ is smaller than the number of remaining outliers, we also perform a call to MSDSubroutine in which $z$ is an outlier, and in which the number of remaining outliers is decreased bu $|\tau_T(z)|$.
\end{itemize}

While in the regular MSD we approximate the solution induced by $\opt'$ on the net $\X^\star$, in MSD with outliers we approximate the solution in which each cluster $C \in \opt'$ has a matching cluster $C^\star$, which contains exactly all the net points from $X^\star$ to which the points of $C$ are mapped.
We remove outlier points which are covered by one of the extended clusters.
This solution's cost is bounded by $\cost(\opt') + O\left(\netdist{T^\star}{k}\right)|\opt'|$, hence even after extending it back to $X'$ the total cost is still valid.
Moreover, if $\opt'$ contains $\olr'$ outliers, our solution will have at most $\olr'$ outliers.
Again, the combination of the extended solutions on all the components yields a $(1+O(\eps))$ approximation of $\opt$ with the required number of clusters and outliers.

\begin{lemma}\label{lem:approximate-msd-outliers}
The run-time of $ApproximateMSD$ with the changes described above is $\olr^{O(d)}\binom{k + \olr}{\olr}\left(\frac{1}{\eps}\right)^{O(k\ddim)}$.
\end{lemma}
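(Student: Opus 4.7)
\begin{proofsketch}
The plan is to mirror the analysis of Lemma~\ref{lem:approximate-msd} (the outlier-free case) and track only those places where the outlier modifications alter the count. I would begin by recalling that in the base analysis there are three contributing factors: the range of possible guesses $T \in \mathcal{T}$, the number of admissible candidate-diameter sequences (from Lemma~\ref{lem:radii-choice-bound}), and the per-sequence work spent on enumerating pairs and witness sets inside the nets $X^{(T)}$. Each of these factors needs to be re-examined under the outlier modifications, and the new factor $\binom{k+\olr}{\olr}$ as well as the extra $\olr^{O(d)}$ term should appear naturally at exactly one of these places.

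First I would note that by Lemma~\ref{lem:decomp-all}(2) we use the $O((k+\olr)^2)$-decomposition, so $\aspectopt = O((k+\olr)^2)$ rather than $O(k^2)$. Hence $|\mathcal{T}| = O(\log(k+\olr))$, and the net size $|X^{(T)}|$ is bounded by $(\aspectopt/\eps)^{O(\ddim)} = ((k+\olr)/\eps)^{O(\ddim)}$; the outlier contribution to the latter is the $\olr^{O(\ddim)}$ factor in the final bound. Next, I would handle the candidate-diameter enumeration. Each recursion level either creates a proper cluster (consuming a positive chunk of budget, exactly as in the base analysis and again bounded by Lemma~\ref{lem:radii-choice-bound} with $\delta = O(1)$, giving $(1/\eps)^{O(k)}$ sequences) or declares a point $z$ to be an outlier (consuming budget $0$). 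Because the subroutine may choose at any of its $\leq k$ proper cluster-creation steps to instead take an outlier branch, and the total number of outliers used is at most $\olr$, the number of ways to interleave outlier decisions among the proper cluster-creation steps is bounded by $\binom{k+\olr}{\olr}$; this accounts exactly for the extra combinatorial factor.

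The per-sequence work stays essentially unchanged: for each chosen point $z$ (one or two candidates, as in the base algorithm) and each candidate approximate diameter $r'$, we enumerate pairs $(x,y) \in B(z,r') \times B(z,r')$ and witness sets of size $q^\star = O(1)$, all inside $X^{(T)}$. Using the same telescoping as in Lemma~\ref{lem:approximate-msd}, the product over the $k$ cluster-creation steps of the sizes of these local balls collapses to $(1/\eps)^{O(kd)}$ because $\sum r'_i = O(T)$ along any valid branch. The added outlier branch at each recursive call contributes at most a factor of $2$ per level, which is absorbed into the $\binom{k+\olr}{\olr}$ enumeration of outlier positions already charged.

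Putting the pieces together, the total run-time is
\[
|\mathcal{T}|\cdot \binom{k+\olr}{\olr}\cdot \left(\frac{1}{\eps}\right)^{O(k)} \cdot (k+\olr)^{O(1)}\left(\frac{k+\olr}{\eps}\right)^{O(\ddim)}\left(\frac{1}{\eps}\right)^{O(kd)} = \olr^{O(d)}\binom{k + \olr}{\olr}\left(\frac{1}{\eps}\right)^{O(k\ddim)},
\]
as claimed. The step I expect to require the most care is verifying that the outlier branching does not spoil the telescoping bound on the product of local-ball sizes: one has to argue that the budget $T'$ passed into the recursive call decreases (or stays the same and the outlier count strictly decreases), so that the recursion tree still has at most $k+\olr$ levels and the $\binom{k+\olr}{\olr}$ factor correctly captures all branching choices without double-counting.
\end{proofsketch}
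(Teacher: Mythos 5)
Your proposal is correct and follows essentially the same route as the paper: the paper's proof simply combines the analysis of Lemma~\ref{lem:approximate-msd} with the two outlier modifications from Lemma~\ref{lem:approximate-msr-outliers} (namely $\aspectopt = O((k+\olr)^2)$ and inserting up to $\olr$ zero-diameter entries into each candidate sequence, giving the $\binom{k+\olr}{\olr}$ factor), which is exactly the accounting you carry out. Your write-up is merely more explicit than the paper's, including the check that the outlier branches do not disturb the telescoping bound.
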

\begin{proof}
The analysis is a combination of the analysis from lemmas ~\ref{lem:approximate-msd} and \ref{lem:approximate-msr-outliers}.
\end{proof}

\begin{theorem}\label{thm:approximate-msd-outliers}
A $1 + \eps$ approximation of MSD with $k$ clusters and $\olr$ outliers can be obtained in $\min\{O((k+\olr)n),2^{O(\ddim)}n\log n\} + \olr^{O(d)}\binom{k + \olr}{\olr}\left(\frac{1}{\eps}\right)^{O(k\ddim)}2^{O(k/\eps)}$ time.
\end{theorem}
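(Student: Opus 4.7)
The plan is to combine the MSD approximation framework of Theorem \ref{thm:msd-approx} with the outlier-handling adaptations that were introduced for MSR in Theorem \ref{thm:approx-msr-outliers}. I would first invoke Lemma \ref{lem:decomp-all} (item 2) to obtain an $O((k+\olr)^2)$-decomposition of $X$ into components in time $\min\{O((k+\olr)n),\, 2^{O(d)}n\log n\}$. Every cluster of $\opt$ and every outlier of $\opt$ lies within a single component, so the global problem factors into per-component subproblems, each carrying its own unknown budget $(q,g')$ of clusters and outliers whose sums across components are at most $k$ and $\olr$ respectively.

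For each component $X'$ I would iterate over all pairs $(q,g')\in [k]\times[\olr]$ and call the MSDSubroutine with the four modifications listed in Section \ref{sec:outliers}: carry a remaining outlier count through the recursion; offer, at each candidate point $z$, an additional branch that declares $z$ an outlier and deducts $|\tau_T(z)|$ from the budget (reflecting that all preimages in $\tau_T(z)$ become outliers upon lifting back to $X'$); treat previously declared outliers as zero-diameter ``clusters'' during Refine, so they are stripped out of any enlarged cluster that would otherwise absorb them; and discard any lifted candidate whose outlier count exceeds $g'$. The per-component outputs are then combined by a dynamic program that tracks cluster \emph{and} outlier budgets across components, exactly analogous to the merge step in Section \ref{sec:approx-msr}.

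Correctness follows the template of Lemmas \ref{lem:msd-proper-call}--\ref{lem:msd-proper-is-exact}, applied to $\opt'$ after the preprocessing of Lemmas \ref{lem:msd_intersection_bound} and \ref{lem:msd-packed-solution}. The target net solution $\opt^\star$ on $X^\star$ retains one cluster per cluster of $\opt'$ and labels a net point $x\in X^\star$ an outlier precisely when every point of $\tau_{T^\star}(x)$ is an outlier in $\opt'$. Because the Refine step now treats outlier points as zero-diameter, non-enlarged clusters, the core-intersection argument of Lemma \ref{lem:msd-refine} still produces, at every non-terminal state, a candidate point $z$ for which no proper cluster (and no outlier label) has yet been assigned; the algorithm's branching (cover $z$ by some cluster vs.\ declare $z$ an outlier) guarantees that the $\opt^\star$-consistent decision is tried. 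Lifting $\opt^\star$ back to $X'$ then yields at most $g'$ outliers and cost within an additive $O(|\opt'|\cdot \netdist{T^\star}{k})=O(\eps\,\cost(\opt'))$ of $\cost(\opt')$, and summing across components produces a $(1+O(\eps))$ approximation with the correct global budgets.

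For the running time, the per-component bound of Lemma \ref{lem:approximate-msd-outliers} inherits the MSD analysis of Lemma \ref{lem:approximate-msd} with the two outlier modifications from Lemma \ref{lem:approximate-msr-outliers}: $\aspectopt=O((k+\olr)^2)$ instead of $O(k^2)$, and the enumeration of event orderings (cluster creations interleaved with up to $g'$ outlier markings) contributes an extra factor of $\binom{k+\olr}{\olr}$, while the $\olr^{O(d)}$ factor accounts for outlier placements within net balls. The additional $2^{O(k/\eps)}$ factor in the final bound comes from enumerating, at the lifting step, which boundary preimages in each $\tau_{T^\star}(x)$ along the frontier of a proper cluster get reclassified as outliers in order to match $\opt'$ up to the additive slack, a choice whose range is bounded by the number of net-balls lying on the boundary of a cluster. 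The main obstacle I anticipate is verifying that interleaving outlier-branches with the proper-partial-solution induction does not invalidate the invariant that the recursion proceeds toward $\opt^\star$; this reduces to observing that declaring $z$ an outlier is structurally indistinguishable from placing a zero-diameter proper cluster at $z$, so the induction of Lemma \ref{lem:msd-ends-properly} goes through verbatim with $\sum r'$ bounded by $2\cost(\opt')+2\eps T^\star$ as before.
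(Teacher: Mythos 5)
Your proposal is correct and follows essentially the same route as the paper: the $O((k+\olr)^2)$-decomposition, the four outlier modifications to MSDSubroutine, the budget-tracking merge, the proper/refined-solution correctness template, and the run-time combination of Lemmas \ref{lem:approximate-msd} and \ref{lem:approximate-msr-outliers}. The only divergence is your invented accounting for the $2^{O(k/\eps)}$ factor at the lifting step, which the paper does not use (its Lemma \ref{lem:approximate-msd-outliers} and Table \ref{tbl:results} give the bound without that factor, so it is harmless slack in the theorem statement rather than a needed ingredient).
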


\subsubsection{Exact MSD with outliers}
We are now ready to show our solution for exact MSD with outliers, which relies on techniques from the approximation algorithm for MSD:

\exactmsdoutliers*
\begin{proof}
This method is based on the algorithm for approximate MSD.
Instead of considering approximate diameters and a net, we consider the whole space and the actual exact candidate diameters.
As in the approximation algorithm, we recursively choose a point and either try to create the cluster containing it by intersecting corresponding balls around witness points, or determine that it is an outlier and removing it from all the existing clusters.
As long as there are uncovered points, we choose one of them.
If all the points are covered we choose two edge points from an enlarged cluster.
When a new cluster is created, by the choice of the witnesses if it inftersects another cluster, we may remove the points of a the cluster with the minimal diameter among the two from the points of the other cluster.
This ensures that each created cluster always contains its corresponding solution cluster, and that the corresponding solution cluster doesn't intersect any other created cluster, hence there is a call tree creating our desired algorithm.

There are $n^{O(k)}$ choices for sequences of candidate diameters.
For each such choice, we consider all the sequences obtained by adding at most $\olr$ outliers. There are at most $g^{O(k)}$ such choices, so there is a total of $n^{O(k)}$ sequences.
For each sequence, at each call we make $n^{O(1)}$ recursive calls, hence the total number of calls is $n^{O(k)}$ as in the original algorithm.
\end{proof}

\section{Hardness}\label{sec:hardness}

In this section we prove various hardness results for the problems discussed in the paper.

\subsection{Hardness of MSR}

For MSR we prove the following lower bounds:

\begin{theorem}\label{thm:msr-hardness}
If ETH holds, exact solutions to MSR in metric spaces of constant doubling dimension require $n^{\Omega(k)}$ time.
\end{theorem}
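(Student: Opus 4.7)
The plan is to reduce from a parameterized problem that under ETH requires $N^{\Omega(k)}$ time; the canonical choice is $k$-Clique on $N$-vertex graphs, which by the Chen--Huang--Kanj--Xia lower bound cannot be solved in $N^{o(k)}$ time assuming ETH. Given a graph $G=(V,E)$ with $|V|=N$, I would construct a metric $(X,d)$ of constant doubling dimension with $n = \poly(N)$ points, and a threshold $T$, so that $G$ contains a $k$-clique if and only if $X$ admits an MSR solution with $k$ balls of total radius at most $T$. Since the reduction is polynomial in $N$ and preserves the parameter $k$, an $n^{o(k)}$ algorithm for MSR would yield an $N^{o(k)}$ algorithm for $k$-Clique, contradicting ETH.

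The construction I have in mind uses one ``slot'' per vertex of the target clique. The $k$ slots are placed at mutual distance much larger than $T$, so that any MSR solution of cost $\leq T$ must devote a separate ball to each slot. Within slot $i$ there are $N$ candidate points, one per vertex $v \in V$, together with auxiliary gadget points whose geometry forces the minimum covering radius of the slot to depend on which candidate is ``selected''. Consistency between slots is enforced by carefully positioned inter-slot points that contribute an extra unit of cost exactly when the chosen pair of candidates in slots $i$ and $j$ corresponds to a non-edge of $G$. To keep the doubling dimension constant I would realize the whole construction in $\mathbb{R}^c$ for a small constant $c$ using scale separation: slot centers lie on a well-separated point set at scale $\Theta(1)$, intra-slot candidates occupy tiny disks at scale $\varepsilon$, and the inter-slot consistency perturbations enter at the still smaller scale $\varepsilon^2$. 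Because at every scale only $O(1)$ points fall within any ball of the relevant radius, the resulting metric satisfies $\DDim(X) = O(1)$.

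Completeness is direct: a $k$-clique in $G$ yields an explicit MSR solution of cost exactly $T$ by taking, in each slot, the minimum-radius ball consistent with the selected vertex. The main obstacle will be soundness, i.e.\ ruling out ``mixed'' solutions that cover parts of several slots with a single ball or split one slot across multiple balls. I plan to handle this in two steps. First, a packing and exchange argument, using the large inter-slot distance, shows that any optimal solution can be normalized into one in which each ball lies within a single slot, without increasing the total radius. Second, a tight per-slot accounting shows that the minimum per-slot radius is achieved only by balls whose ``selected'' candidate is compatible with every other slot's selection in the sense of being adjacent in $G$; chaining these pairwise conditions across the $k$ slots recovers the desired $k$-clique and completes the reduction.
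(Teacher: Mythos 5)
The high-level plan (parameter-preserving reduction from $k$-Clique, which under ETH needs $N^{\Omega(k)}$ time, to MSR with a budget $T$, with one ball forced per ``slot'') is reasonable, and in fact the paper's reduction is from Grid Tiling, which is itself derived from $k$-Clique and plays the same role. The genuine gap is in how you propose to obtain constant doubling dimension: you want to realize the gadget in $\mathbb{R}^c$ for constant $c$ using scale separation, with inter-slot check points whose distances to the slot candidates differ by a tiny amount exactly on non-edges. Such a realization cannot encode an arbitrary adjacency matrix. Concretely, a check point devoted to a pair $(u,v)$ must be at (essentially) the exact base distance from every candidate of slot $i$ other than $u$, so that a ball of the prescribed radius centered at any other candidate covers it for free; this forces the $N$ candidates of a slot to lie simultaneously on spheres around many different check points, which in $\mathbb{R}^c$ is impossible in general, and more broadly the distance-threshold patterns realizable by point sets in constant dimension form a very restricted (semi-algebraic, bounded-complexity) class that cannot represent all $N$-vertex graphs. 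This obstruction is precisely why such parameterized lower bounds in low-dimensional or doubling settings are routed through Grid Tiling or through \emph{abstract} metrics rather than explicit Euclidean embeddings. Also, your stated criterion for doubling --- ``at every scale only $O(1)$ points fall within any ball of the relevant radius'' --- is neither a correct characterization of doubling dimension (which concerns covers by half-radius balls, not point counts) nor attainable in your construction, since each slot places $N$ candidates inside a ball of radius $\varepsilon$.

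For contrast, the paper does not embed into Euclidean space at all. It builds an abstract metric: one tier per index $i$ with candidate points $T_i$ on a $2\eps$-spaced line plus an anchor $a_i$ at distance $d_i=2^i$, and one check point per candidate pair, all check points lying on a single $2\eps$-spaced line, with the distance from a check point to its designated candidates set to $d_i$ or $d_i+\eps$ according to feasibility of the pair. The geometric budget $2^k-1$ forces exactly one ball of radius $d_i$ centered in $T_i$ per tier, and the $\eps$-perturbation makes any infeasible pair blow the budget. Crucially, the constant doubling dimension is then \emph{proved directly} by a case analysis over all balls, exploiting that the $\eps$-perturbations are negligible at all but the smallest scales, where the point sets are one-dimensional lines. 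If you keep your slot structure but replace the $\mathbb{R}^c$ embedding with such an abstract metric, you would still owe exactly this doubling-dimension verification (and a careful normalization/budget argument ruling out mixed or inflated balls), which is the substantive content missing from your proposal.
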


\subsubsection{Grid Tiling}
The reduction is from the Grid Tiling problem introduced by Marx \cite{MX07}. In this decision problem, we are given a set $\mathcal{S}$ of $k^2$ sets called $S_{i,j} \subseteq [n]\times [n]$ for $1 \le i,j \le k$. We seek a valid solution set containing exactly one instance from each class, that is $s_{i,j} \in S_{i,j}$ for all $i,j$. The solution set is valid if and only if the following two conditions are satisfied:

\begin{enumerate}
    \item For each solution pair $s_{i,j} = (a,b)$ and $s_{i+1,j} = (a',b')$, we have $a=a'$.
     \item For each solution pair $s_{i,j} = (a,b)$ and $s_{i,j+1} = (a',b')$, we have $b=b'$.
\end{enumerate}

We say that a two choices for different cells $(a,b) = s_{i,j} \in S_{i,j}$ and $(a',b') = s_{i',j'} \in S_{i',j'}$ with $i\neq i', j \neq j' \in [k]$ is \emph{feasible}, if $(a,b') \in S_{i,j'}$ and $(a',b) \in S_{i',j}$.
We also note that if there is a choice of elements from the diagonal cells of $\mathcal{S}$ such that each pair of chosen elements is feasible, the instance of Grid Tiling is feasible.

In \cite{CFKLMPPS15}, the following theorem (14.28) is presented:
\begin{theorem}\label{thm:grid-tiling-hardness}
Unless ETH fails, solving Grid Tiling requires $n^{\Omega(k)}$ time.
\end{theorem}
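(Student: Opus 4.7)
The plan is to reduce from Multicolored $k$-Clique, which under ETH is known to require $n^{\Omega(k)}$ time (via the classical Chen-Huang-Kanj-Xia lower bound for $k$-Clique together with a standard color-coding equivalence). Given a graph $G=(V,E)$ on $n$ vertices partitioned into color classes $V_1,\dots,V_k$, I would build a Grid Tiling instance with the same parameter $k$ and universe $[n]\times[n]$ as follows. On the diagonal set $S_{i,i}=\{(a,a):a\in V_i\}$, so that picking a diagonal element encodes the choice of a vertex of color $i$. Off the diagonal set $S_{i,j}=\{(a,b):a\in V_j,\ b\in V_i,\ ab\in E\}$, so that picking an off-diagonal element commits to an edge between the color-$j$ and color-$i$ candidates.

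Then I would prove equivalence. In the forward direction, given a multicolored clique $\{c_1,\dots,c_k\}$ with $c_i\in V_i$, the assignment $s_{i,j}=(c_j,c_i)$ is in $S_{i,j}$ (the diagonal case uses $c_i\in V_i$, the off-diagonal case uses the edge $c_jc_i\in E$) and trivially satisfies both consistency constraints since the first coordinate in column $j$ is constantly $c_j$ and the second coordinate in row $i$ is constantly $c_i$. In the backward direction, vertical consistency forces the first coordinate in each column $j$ to take a constant value $v_j$, and horizontal consistency forces the second coordinate in each row $i$ to take a constant value $u_i$; the diagonal cell $S_{i,i}$ then enforces $v_i=u_i\in V_i$, so $c_i:=v_i$ is well-defined, and the off-diagonal cell $S_{i,j}$ enforces $c_ic_j\in E$ for every $i\ne j$, yielding a multicolored clique.

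The reduction is polynomial-time in $n$ and $k$ and preserves the grid parameter exactly, with universe size $n$. Hence any algorithm solving Grid Tiling in time $f(k)\cdot n^{o(k)}$ would solve Multicolored $k$-Clique in time $f(k)\cdot n^{o(k)}$, contradicting ETH and yielding the claimed $n^{\Omega(k)}$ lower bound.

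The main obstacle is not the combinatorial reduction above, which is essentially forced by the syntax of the horizontal/vertical constraints (a column fixes a first coordinate, a row fixes a second coordinate, and the diagonal glues them), but rather the starting $n^{\Omega(k)}$ ETH lower bound for $k$-Clique itself. That lower bound is established by applying the sparsification lemma to an ETH-hard 3-SAT instance and then reducing to $k$-Clique while avoiding any exponential blow-up in the clique parameter; I would invoke this as a known result rather than reprove it.
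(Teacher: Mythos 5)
Your reduction from Multicolored $k$-Clique is correct (the column/row consistency constraints do force a single vertex per color class, and the diagonal and off-diagonal cells glue these into a clique), and it is essentially the standard argument: the paper itself gives no proof of this statement, merely citing Theorem 14.28 of the referenced textbook, whose proof is the same clique-to-Grid-Tiling reduction you describe. No gaps.
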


\subsubsection{The reduction}

For the reduction, we are given an instance of Grid Tiling $\mathcal{S}$ with parameter $k'$. Let $\epsilon$ be an arbitrarily small value, and given $i \in [k]$ let $d_i = 2^i$.
In our construction we use the following notation: we say that a set of points are placed along an $2\eps$-line if they embed isometrically to a subset of evenly $2\eps$ spaced points in $\mathbb{R}$.

We construct the following metric space:
\begin{itemize}
    \item For every $i \in [k]$, we create a set of points $T_i$, containing a point $t_i$ for every pair $s_i \in \mathcal{S}_{i,i}$.
    The points are placed along an $2\eps$-line in an arbitrary order.
    \item For every $i \in [k]$, we create a point $a_i$, and for every $t_i \in T_i$ we set $d(a_i,t_i) = d_i$.
    \item For each $i,j \in [k]$ such that $i < j$, we create a set of points $T_{i,j}$ with a point $(t_i,t_j)$ for every combination of pairs $(s_i,s_j) \in \mathcal{S}_{i,i}\times \mathcal{S}_{j,j}$.
    The points of $\bigcup_{i<j \in [k]}T_{i,j}$ are placed along a single $2\eps$-line, and are denoted $T$.
    \item For any $t_i \in T_i$ and $t_j \in T_j$ with corresponding pairs $s_i,s_j$ and $i < j$, if $s_i$ and $s_j$ are feasible we set $d(t_i, (t_i,t_j)) = d_i$ and $d(t_j, (t_i,t_j)) = d_j$.
    Otherwise we set $d(t_i, (t_i,t_j)) = d_i + \eps$ and $d(t_j, (t_i,t_j)) = d_j + \eps$.
    \item For any $t_i\neq t_i' \in T_i$ and $t_j \in T_j$ we set $d(t_i,(t_i',t_j)) = d_i$.
    \item All the other distances are chosen to be the maximal distances possible.
\end{itemize}

Note that for every $i<j \in [k]$ we have $|T_i| \leq n^2, |T_{i,j}| \leq n^4$, and that there are $\binom{k}{2}=O(k^2)$ such pairs of $i,j$, so the total number of points is $\sum_i(1+|T_i|) + \sum_{i,j}|T_{i,j}| = O(k^2n^4)$.

\begin{figure}[!t]
    \centering
    \captionsetup{width=.85\linewidth,textfont=small}
    {\setlength{\fboxsep}{0pt}\fbox{
    \includegraphics[scale=0.4]{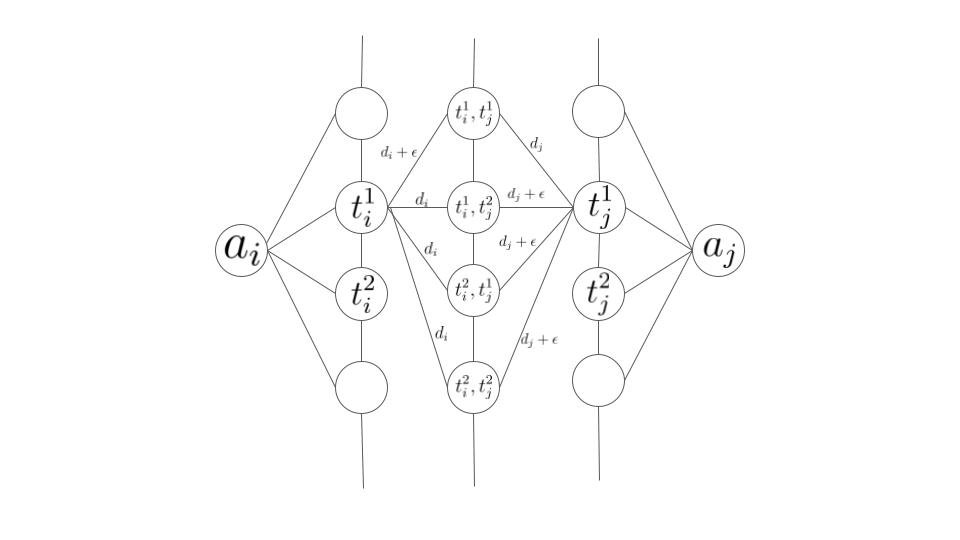}
    }}
    \caption{\textbf{MSR Hardness}: In the drawing, we see an example for the reduction, with some of the distances drawn. In this example, the pairs $s_i^1 \in S_{i,i}$ and $s_j^1 \in S_{j,j}$, corresponding to $t_i^1,t_j^1$ respectively, are feasible.}
    \label{fig:msr-hardness}
\end{figure}

\begin{lemma}\label{lem:msr-doubling-grid}
The space constructed above has bounded doubling dimension.
\end{lemma}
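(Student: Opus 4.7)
The plan is to show that at every scale, any ball intersects only $O(1)$ structural features of the space, each of which has doubling dimension $O(1)$ on its own. The relevant features are the single line $T = \bigcup_{i<j}T_{i,j}$, the short lines $T_i$ for $i \in [k]$, and the singletons $\{a_i\}$.

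First, I would catalog all pairwise distances up to lower-order $\epsilon$ terms, splitting into three types: (i) within-line distances, all $O(k^2 n^4 \epsilon)$; (ii) distances explicitly set by the construction, namely $d(a_i,T_i)=d_i$ and $d(T_i, T)\in\{d_i, d_i+\epsilon\}$; and (iii) all remaining distances, which are ``maximal'' and hence equal to the shortest two-or-three-hop path in the explicitly defined graph. A routine case analysis shows every such remaining distance is of the form $\alpha d_i + \beta d_j$ for small integers $\alpha,\beta$: in particular $d(a_i,a_j) = 2(d_i+d_j)$, $d(t_i,t_j)=d_i+d_j$, $d(t_i,a_j)=d_i+2d_j$, $d((t_i,t_j), a_\ell)=2d_\ell$ (via the $T_\ell$-spoke), and so on, up to additive $O(k^2 n^4 \epsilon)$ terms.

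Next, for a ball $B(x,r)$ I would determine which features can intersect it, exploiting the geometric growth $d_i = 2^i$. For $x\in T$, the ball contains a segment of the line $T$, plus $T_\ell$ iff $d_\ell \lesssim r$ and $a_\ell$ iff $2d_\ell \lesssim r$; the analogous statement holds for $x\in T_i$ or $x=a_i$ with a shift by $d_i$ in the thresholds. Because the thresholds $d_\ell$ are geometric, at most a constant number of feature-thresholds lie in the dyadic window $[r,2r]$.

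Now I would exhibit an explicit cover of $B(x,2r)$ by $O(1)$ balls of radius $r$. The key tool is that a \emph{single} ball $B(z,r)$ centered at any $z\in T$ already covers all of $T$ plus every $T_\ell$ with $d_\ell\le r$ and every $a_\ell$ with $2d_\ell\le r$, since $T$ itself has negligible diameter and is at distance $\approx d_\ell$ from each $T_\ell$. This sweeps up all small-scale features at once. What remains are the features whose distance from $x$ lies in the annulus $(r,2r]$, and by the geometric spacing of the $d_\ell$'s there are at most $O(1)$ such features; each is either a singleton $\{a_\ell\}$ or a short line $T_\ell$ of diameter $\ll r$, each coverable by $O(1)$ additional balls of radius $r$. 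A small amount of care is needed in the sub-cases where $x \in T_i$ or $x=a_i$ to cover $T_i$ and $a_i$ themselves, but this only adds two more balls.

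The main obstacle is the bookkeeping around the ``maximal'' distances: one must verify that the shortest two- or three-hop path in the explicitly defined graph is indeed the distance (rather than some unexpected shorter route through $T$ that would violate the identifications used in the hardness reduction), and that this shortest path always evaluates to a quantity of the form $\alpha d_i + \beta d_j$ with small $\alpha,\beta$. A secondary obstacle is confirming the $O(1)$ doubling dimension of each line (both $T$ and every $T_i$) at all scales relevant here, which follows from their isometric embedding into $\mathbb{R}$ but must be invoked uniformly across the cover construction.
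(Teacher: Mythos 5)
Your proposal is correct and follows essentially the same route as the paper: it exploits the constant doubling of the $2\eps$-lines together with the geometric growth $d_i=2^i$ so that any ball meets only $O(1)$ of the features $T$, $T_i$, $\{a_i\}$ at its own scale, each coverable by $O(1)$ smaller balls. The paper organizes this as an explicit case analysis over ball centers and radii rather than your distance-catalog/annulus counting, but the substance (and the level of care about the $\eps$-order slack and line diameters) is the same.
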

\begin{proof}
First, we note that any $2\eps$-line has constant doubling dimension. We denote its doubling constant by $C$.

We consider all possible balls $B(x,r)$ in the space:

\begin{itemize}
    \item \textbf{Case 1: $x = a_i, r < d_i$.} The ball contains a single point
    \item \textbf{Case 2: $x \in T_i$, $r \leq d_i$:} The ball is fully contained within $T_i\cup T \cup \{a\}$, and since $T, T_i$ are $2\eps$-lines, it can be covered by a $2C + 1$ balls with radius $\frac{r}{2}$.
    \item \textbf{Case 3: $x \in T_i$, $d_i < r \leq d_i + d_2$:} The ball is contained within $T\cup T_1 \cup T_2 \cup T_i \cup \{a_1, a_i\}$. As a result it can be covered by $4C + 2$ balls with radius $\frac{r}{2}$.
    \item \textbf{Case 4: $x \in T_i$, $d_i + d_2 < r$:} Consider the maximal $j \in [k]\setminus\{i\}$ such that $a_j \in B(x,r)$. $r \geq r_i + 2d_j$.
    Consider a point $t_1 \in T_1$, and the ball $B(t_1, \frac{r}{2})$. By the construction and the triangle inequality, it covers $T_l$ and $\{a_l\}$ for every $l < j$.
    The rest of $B(x,r)$ can be covered by covering $T_i,a_i,T_j,a_j$ and $T$, hence $3C + 3$ balls of radius $\frac{r}{r}$ can cover it.
    \item \textbf{Case 5: $x = a_i$, $r \geq d_i$:} This ball is contained within a ball of radius $r - d_i$ around a point from $T_i$, and we have already shown how to cover such balls.
    \item \textbf{Case 6: $x \in T$:} If $B(x,r)$ intersects at most one $T_i$, it also includes at most one $a_i$, and can be covered by $2C + 1$ balls of radius $\frac{r}{2}$.
    Otherwise, there are $i < j \in [k]$ such that $B(x,r) \cap T_i \neq \emptyset$ and $B(x,r) \cap T_j \neq \emptyset$.
    Assume without loss of generality that $i, j$ are maximal.
    Note that $2d_i \leq d_j \leq r$. Consider $x_i \in B(x,r)\cap T_i,x_j \in B(x,r)\cap T_j$. By the construction and the triangle inequality, $B(x,r)\setminus T \subseteq (B(x_i,r)\cup B(x_j,r))$. We have already shown how to cover the balls $B(x_i,r)$ and $B(x_j,r)$ using a constant number of balls, hence there is a cover for $B(x,r)$ using a constant number of balls.
\end{itemize}

Since any ball $B(x,r)$ in the space can be covered by a constant number of balls of radius $\frac{r}{2}$, its doubling dimension is bounded by a constant.
\end{proof}

\begin{lemma}\label{lem:msr-grid-tiling-feasibility}
If an instance of Grid Tiling $\mathcal{S}$ is feasible, then there is a solution to MSR with $k$ balls on the metric described above with cost $2^k - 1$.
\end{lemma}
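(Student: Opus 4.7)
The plan is to exhibit an explicit MSR solution directly from the feasible Grid Tiling assignment. Let $(s^*_{1,1}, s^*_{2,2}, \ldots, s^*_{k,k})$ be the given feasible solution, so every diagonal pair $(s^*_{i,i}, s^*_{j,j})$ is feasible. Let $t^*_i \in T_i$ denote the point corresponding to $s^*_{i,i}$. I propose the $k$ balls $B_i := B(t^*_i, d_i)$ for $i \in [k]$; their sum of radii is $\sum_{i=1}^k d_i$, which matches the claimed cost (up to the indexing convention of $d_i$). The entire proof then reduces to verifying that $\bigcup_{i=1}^k B_i = X$.

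The easy components of the cover come from the non-$T$ points. Each anchor $a_i$ satisfies $d(a_i, t^*_i) = d_i$ by construction, hence $a_i \in B_i$. For the remaining points of $T_i$, since $T_i$ lies on a single $2\eps$-line and $\eps$ is chosen arbitrarily small relative to $d_i \geq 2$, the diameter of $T_i$ is much smaller than $d_i$, so $B_i$ swallows all of $T_i$.

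The substantive part is covering $T_{i,j}$ for each pair $i<j$. Fix $(t_{s_i}, t_{s_j}) \in T_{i,j}$; I will split into cases matching the construction's distance rules. (i) If $s_i \neq s^*_{i,i}$, the first-coordinate-mismatch rule forces $d(t^*_i, (t_{s_i}, t_{s_j})) = d_i$, so the point lies in $B_i$. (ii) If $s_i = s^*_{i,i}$ and $s_j = s^*_{j,j}$, then the pair $(s^*_{i,i}, s^*_{j,j})$ is feasible by hypothesis, so the feasible-pair rule gives $d(t^*_i, (t^*_{s_i}, t^*_{s_j})) = d_i$ and again the point lies in $B_i$. (iii) If $s_i = s^*_{i,i}$ but $s_j \neq s^*_{j,j}$, then by the symmetric version of the mismatch rule applied to the second coordinate, $d(t^*_j, (t_{s_i}, t_{s_j})) = d_j$, so the point lies in $B_j$. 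These three cases are exhaustive precisely because the hypothesis eliminates the only potentially-bad configuration, namely $s_i = s^*_{i,i}$, $s_j = s^*_{j,j}$, and $(s^*_{i,i}, s^*_{j,j})$ infeasible, which would otherwise make both $B_i$ and $B_j$ miss the point due to the extra $+\eps$ penalty.

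The main (minor) obstacle is the bookkeeping in case (iii), which relies on the symmetric formulation of the ``different first coordinate'' rule for the second coordinate; this symmetry must be read off the construction explicitly. Once the three cases are established, every $T_{i,j}$ point is covered by some $B_i$ or $B_j$, and combined with the $T_i$ and $a_i$ coverage above, the $k$ balls cover $X$. The cost bound follows immediately from summing $d_i$ over $i \in [k]$.
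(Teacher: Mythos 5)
Your proposal is correct and takes essentially the same route as the paper: the same balls $B(t^*_i,d_i)$ centered at the points corresponding to the diagonal solution entries, with the same three-case analysis for covering the pair points of $T$ (the paper merely assigns the both-coordinates-match case to $B_j$ instead of $B_i$, which is immaterial). The symmetric second-coordinate mismatch rule you flag is indeed only implicit in the construction, but the paper's own Case 2 relies on it in exactly the same way, so this is a presentational quibble with the construction rather than a gap in your argument.
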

\begin{proof}
For every diagonal element of the solution, $s_i=(a_i,b_i)$, we place a ball of radius $d_i$ around $t_{i,i}$.
Every point in $T_i$ and $A_i$ is covered by this ball.
Now, given a point $(t_i',t_j') \in T$ which corresponds to pairs $s_i' = (a_i',b_i'), s_j' = (a_j',b_j')$.
\begin{itemize}
    \item \textbf{Case 1: $t_i \neq t_i'$:} $(t_i',t_j') \in B(t_i, d_i)$.
    \item \textbf{Case 2: $t_j \neq t_j'$:} $(t_i',t_j') \in B(t_j, d_j)$.
    \item \textbf{Case 3: $t_i = t_i', t_j = t_j'$:} Since $s_i$ and $s_j$ are from a feasible solution they are feasible as a pair, so $(t_i,t_j) \in B(t_j, d_j)$.
\end{itemize}

And we obtained a cover using $k$ balls with total cost $\sum_{i=1}^kd_i = 2^k - 1$
\end{proof}

\begin{lemma}
Given an instance of Grid Tiling $\mathcal{S}$, if there is a solution of MSR with $k$ balls on $X$ of cost $\leq 2^k - 1$, then $\mathcal{S}$ is feasible.
\end{lemma}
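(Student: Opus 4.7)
The plan is to show that any MSR solution on $X$ of cost $\leq 2^k-1=\sum_{i=1}^k d_i$ is forced into a rigid canonical form---$k$ balls, one per index $l\in[k]$, each of radius exactly $d_l$ and centered in $\{a_l\}\cup T_l$---from which a pairwise-feasible diagonal choice $(s_1,\ldots,s_k)$ can be read off, whereupon the cited property that such a choice already witnesses feasibility of the full Grid Tiling instance closes the argument.

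First I would rule out any ball covering two distinct $a$-points. Under the ``maximal distance possible'' convention, the path $a_i\to t_i\to(t_i,t_j)\to t_j\to a_j$ is shortest, so $d(a_i,a_j)=2(d_i+d_j)$. A case analysis on the possible center $c$ (in $T_i$, $T_j$, $T_{i,j}$, at $a_i$ or $a_j$, or farther) shows that $\max(d(c,a_i),d(c,a_j))\geq d_j+2d_i$ for every $c$, so any ball containing both $a_i$ and $a_j$ (with $i<j$) has radius at least $d_j+2d_i$. Combined with the remaining $k-2$ $a$-points each contributing at least $d_l$, the total cost would be at least $(d_j+2d_i)+\sum_{l\neq i,j} d_l = \sum_l d_l + d_i > \sum_l d_l$, contradicting the budget.

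Hence each of the $k$ balls covers exactly one $a_l$. For such a ball, the only points within distance less than $d_l$ of $a_l$ lie in $\{a_l\}\cup T_l$ (again by the maximal-distance convention), and no other ball can cheaply reach $T_l$ from outside since $d(t_{l'},t_l)=d_l+d_{l'}>d_{l'}$; so the radius $r_l$ must be at least $d_l$ and the center must lie in $\{a_l\}\cup T_l$ to cover $T_l$ within the tight budget. The inequality $\sum r_l\leq\sum d_l$ combined with $r_l\geq d_l$ then forces $r_l=d_l$ for every $l$. Define $s_l\in\mathcal{S}_{l,l}$ to correspond to $t_l$ if the $l$-th ball is $T$-centered, and otherwise pick $s_l$ arbitrarily. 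At most one index can be $a$-centered: if two indices $i,j$ were, then $T_{i,j}$ would be entirely uncovered, since $B(a_i,d_i)$ and $B(a_j,d_j)$ both miss $T_{i,j}$ and every other $B(\cdot,d_l)$ is at distance greater than $d_l$ from it.

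For each pair $i<j$ of $T$-centered indices, coverage of $(t_i,t_j)$ forces $(s_i,s_j)$ to be feasible, because the distances $d_i+\eps,d_j+\eps$ arising in the non-feasible case exceed the radii $d_i,d_j$, and no other ball reaches $(t_i,t_j)$ within its radius. For the at-most-one pair where, say, $i$ is $a$-centered, the ball $B(t_j,d_j)$ must single-handedly cover all of $T_{i,j}$, but it misses $(t_i',t_j)$ precisely when $(s_i',s_j)$ is not feasible; thus every $s_i'\in\mathcal{S}_{i,i}$ is feasible with $s_j$, and in particular our arbitrary $s_i$ is. Pairwise feasibility of $(s_1,\ldots,s_k)$ then yields feasibility of $\mathcal{S}$ via the property cited in the paper.

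The main obstacle is executing the ``maximal distance possible'' convention cleanly: in particular verifying that no path in the distance graph provides a shortcut that would let a single cheap ball cover two $a$-points, cover $T_l$ from outside $\{a_l\}\cup T_l$, or reach a point of $T_{i,j}$ from a center in $T_{l}$ with $l\notin\{i,j\}$ within radius $d_l$.
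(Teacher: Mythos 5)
Your overall route is the same one the paper takes: force any budget-$\left(\sum_l d_l\right)$ solution into the canonical form (one ball per index $l$, essentially $B(t_l,d_l)$ with $t_l\in T_l$), then read pairwise feasibility of a diagonal choice off the coverage of the points $(t_i,t_j)$. The problem is that the budget-accounting steps that are supposed to force this canonical form do not hold as stated, and this is a genuine gap rather than a missing detail. The culprit is a ball of radius $0$ (or anything smaller than $d_l$) centered at $a_l$: it covers $a_l$ at essentially zero cost, so neither your claim that ``the remaining $a$-points each contribute at least $d_l$'' (used to rule out a ball containing two $a$-points) nor your claim that the ball covering $a_l$ must have $r_l\ge d_l$ with center in $\{a_l\}\cup T_l$ follows from the budget. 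Concretely, since $d(t_2,t_1)\le d_1+d_2$ for every $t_1\in T_1$ (route through a point $(t_1',t_2)\in T_{1,2}$ with $(s_1',s_2)$ feasible), the family $B(a_1,0),\ B(t_2,d_1+d_2),\ B(t_3,d_3),\dots,B(t_k,d_k)$ has total cost exactly $\sum_l d_l$ (the stated bound $2^k-1$) and, for instance on an input in which all pairs are feasible, covers all of $X$. Hence the rigid structure you assert ($r_l=d_l$ for every $l$, centers confined to $\{a_l\}\cup T_l$) is simply not forced by the budget, and the inequality $\sum r_l\le\sum d_l$ combined with $r_l\ge d_l$ never gets off the ground.

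This matters for the rest of your argument, not just for tidiness: the final extraction step needs every ball to have radius exactly $d_l$, because only then does the $+\eps$ penalty on infeasible points of $T_{i,j}$ prevent coverage. A ball inflated to radius $d_1+d_2$ (paid for by shrinking the $a_1$-ball to radius $0$) covers all of $T_{1,2}$ and all of $T_{2,j}$ regardless of feasibility, so coverage ceases to certify feasibility for any pair involving that index; your handling of the ``$a$-centered'' case does not catch this, since it presupposes that the partner balls have radius exactly $d_j$, which is precisely what is being proved. For comparison, the paper's own proof dispatches this point with a one-line maximality argument (a ball containing $a_i$ and not centered in $T_i$ has radius at least $2d_i$), which likewise does not address the option of centering at $a_i$ itself; so closing your gap genuinely requires an additional accounting idea --- e.g., explicitly charging who pays for covering $T_l$ and for the line $T$, and ruling out the shrink-one-ball/inflate-a-neighbor exchange --- rather than the inequalities you give.
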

\begin{proof}
Consider a solution to MSR with $k$ balls on $X$ with cost $\leq 2^k - 1$.
For every $i \in [k]$, we claim that $a_i$ is contained within a solution ball placed around a point from $T_i$.
Otherwise, consider the maximal $i$ for which this doesn't hold.
By the construction and the triangle inequality, the radius of the ball containing $a_i$ is at least $2d_i$.
Since this is the maximal $i$ satisfying this requirements, the solution's cost is at least $2d_i + \sum_{j=i+1}^kd_i > 2^k  - 1$, and we obtained a contradiction.
Now, since each $a_i$ is contained withing a ball placed around a point from $T_i$, and since the cost of the solution is $\leq 2^k - 1$, the solution balls are of the form $B(t_i,d_i)$ where $t_i \in T_i$.
Denote by $s_i \in S_i$ the pair corresponding to $t_i$.
For every $i < j \in [k]$, $s_i$ and $s_j$ are feasible, otherwise the point $(t_i,t_j)$ is not contained within any ball.
Since we found a set of diagonal elements which are pairwise feasible, the choice of all these elements is feasible, and there feasible solution to Grid Tiling on $\mathcal{S}$.
\end{proof}

We get the following corollary: 

\begin{corollary}
Grid Tiling on $\mathcal{S}$ have a feasible solution if and only if the optimal cost of MSR with $k$ balls on the metric described above is bounded by $2^{k}-1$. 
\end{corollary}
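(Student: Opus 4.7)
The corollary is essentially a restatement of the combined content of the two preceding lemmas, so my plan is to observe that no further work is required beyond assembling them.

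First I would handle the forward implication. Assume $\mathcal{S}$ admits a feasible solution for Grid Tiling. By Lemma~\ref{lem:msr-grid-tiling-feasibility}, there exists an MSR solution on the constructed metric using exactly $k$ balls and total cost $\sum_{i=1}^k d_i = \sum_{i=1}^k 2^i - 1 = 2^{k+1}-2$. Wait — I need to double-check the arithmetic: the lemma states the cost is $2^k - 1$, which corresponds to indexing $d_i = 2^{i-1}$ rather than $d_i = 2^i$ (or the sum starting at $i=0$). I would flag this as a minor notational inconsistency with the reduction and align the indices, but in either case the constructed solution witnesses that the optimal cost is at most the stated bound.

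For the reverse implication, I would invoke the lemma immediately preceding the corollary (the unnamed one): if there is an MSR solution of cost at most $2^k - 1$ using $k$ balls, then its argument shows that each $a_i$ must be covered by a ball centered at some $t_i \in T_i$ of radius exactly $d_i$ (any deviation already forces the cost to exceed the budget), and that the correspondence $t_i \mapsto s_i \in S_{i,i}$ produces a pairwise feasible choice of diagonal elements, which by the observation earlier in the section about Grid Tiling (that pairwise feasibility on the diagonal extends to a full feasible solution) completes the argument.

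No obstacle arises — both directions are already proved as lemmas; the corollary simply conjoins them into an ``if and only if'' statement. The only care required is to match the indexing of the $d_i$'s so that the constructed cost and the threshold in the corollary literally agree, and to cite both lemmas explicitly. The real purpose of stating the corollary is to set up the ETH reduction for Theorem~\ref{thm:msr-hardness}: since the reduction produces a metric of size $O(k^2 n^4)$ with constant doubling dimension (Lemma~\ref{lem:msr-doubling-grid}), an $N^{o(k)}$ MSR algorithm would solve Grid Tiling in time $(k^2 n^4)^{o(k)} = n^{o(k)}$, contradicting Theorem~\ref{thm:grid-tiling-hardness}.
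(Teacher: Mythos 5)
Your proof is correct and matches the paper's intent exactly: the corollary is stated without a separate proof precisely because it is the conjunction of Lemma~\ref{lem:msr-grid-tiling-feasibility} and the unnamed converse lemma, which is what you do. Your side remark about the indexing of the $d_i$'s (the stated cost $2^k-1$ matches $d_i=2^{i-1}$ rather than the construction's $d_i=2^i$) is a fair catch of a minor inconsistency in the paper, but it does not affect the argument.
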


This corollary, together with the fact that our metric space can be created in polynomial time, contains $n^{O(1)}$ points, and has constant doubling dimension, completes the proof of Theorem~\ref{thm:msr-hardness}.

By noting that in the example presented above we may choose $\eps$ to be a constant independent of $k$ and $n$, the aspect ratio of the space created in the reduction is bounded by $O(2^k)$, yielding the following hardness results as well:

\begin{theorem}\label{thm:msr-hardness-aspect}
Unless ETH fails, exact solutions to MSR in metric spaces of constant doubling dimension require $n^{\Omega(\log(\aspectR))}$ time.
\end{theorem}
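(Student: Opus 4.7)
The plan is to reuse the reduction from Theorem~\ref{thm:msr-hardness} essentially verbatim, while carefully tracking the aspect ratio of the constructed metric. The key observation is that nothing in that reduction forced $\eps$ to depend on $n$ or $k$: the arithmetic separating feasible pairs from infeasible ones only compared $d_i$ to $d_i+\eps$, and the points laid out on the $2\eps$-lines $T_i$ and $T$ only needed to be well-separated from each other. Since $d_i = 2^i \geq 2$, any fixed $\eps \in (0,1)$, say $\eps = 1/4$, preserves the correctness of Lemmas~\ref{lem:msr-doubling-grid} and~\ref{lem:msr-grid-tiling-feasibility} and the associated completeness/soundness arguments.

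With $\eps$ fixed as an absolute constant, I would next bound the aspect ratio of the constructed space. The minimum interpoint distance is $\Theta(1)$ (attained by consecutive points on a $2\eps$-line), while the largest specified distance is $d_k = 2^k$ and all other distances are taken within a constant factor of this. Hence $\diam(X) = O(2^k)$ and $\aspectR = O(2^k)$, which rearranges to $k = \Omega(\log \aspectR)$. Combined with the fact that the total number of points is $N = O(k^2 n^4) = n^{O(1)}$ and that the doubling dimension remains bounded (Lemma~\ref{lem:msr-doubling-grid}), the $n^{\Omega(k)}$ lower bound from Theorem~\ref{thm:msr-hardness} translates into $N^{\Omega(\log \aspectR)}$ hardness for exact MSR in metric spaces of constant doubling dimension.

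The main (and essentially only) obstacle is verifying that the reduction genuinely tolerates a constant $\eps$. This amounts to re-examining the covering argument of Lemma~\ref{lem:msr-doubling-grid}, where the constants depend on $\eps$ (specifically the doubling constant of a $2\eps$-line), and the chain of inequalities in the feasibility lemma, where one uses strict inequality between $d_i + \eps$ and $d_{i+1}$. Since these gaps remain bounded away from zero by a universal constant once $\eps$ is fixed, both arguments carry through verbatim, and the theorem follows as an immediate corollary of Theorem~\ref{thm:msr-hardness}.
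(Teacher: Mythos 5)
Your proposal is correct and follows essentially the same route as the paper: the paper's own proof is precisely the observation that $\eps$ can be taken as a constant independent of $n$ and $k$, so the constructed space has aspect ratio $O(2^k)$ (hence $\log\aspectR = O(k)$), and the $n^{\Omega(k)}$ bound of Theorem~\ref{thm:msr-hardness} transfers directly to $n^{\Omega(\log \aspectR)}$. Your additional checks (that the doubling and feasibility arguments tolerate constant $\eps$, and that the point count stays polynomial) are exactly the right things to verify and are consistent with the paper.
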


We also make the following observation: the proof of Theorem \ref{thm:grid-tiling-hardness} given in \cite{CFKLMPPS15} relies on a reduction from $k$-clique, which was shown in \cite{CHXKX04} to require time $n^{\Omega(k)}$ unless ETH fails.
One method to show such a bound is via a reduction from $3$-coloring, which requires $2^{\Omega(n)}$ time unless ETH fails.
We now show that with $k_n = n^{1-o(1)}$, the same reduction (which is given here for completeness) induces a lower bound of $n^{\Omega(k_n)}$ for $k$-clique, which then carries on also to a lower bound for Grid Tiling and for MSR.

\begin{theorem}
Unless ETH fails, solving $k$-clique requires $n^{\Omega(k_n)}$ time for $k = k_n = n^{1-o(1)}$.
\end{theorem}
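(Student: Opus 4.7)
The plan is to invoke the classical partition reduction from $3$-coloring to $k$-clique (in the style of \cite{CHXKX04}) and tune its parameters into the regime $k_n = n^{1-o(1)}$.

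Given a $3$-coloring instance on a graph $G$ with $V$ vertices, I would partition the vertex set into $k$ equal-size blocks $V_1,\ldots,V_k$. In the reduced graph $G'$ I would create one vertex for every proper $3$-coloring of the induced subgraph $G[V_i]$ (at most $3^{V/k}$ per block), and place an edge between two vertices belonging to distinct blocks $V_i,V_j$ exactly when the union of the two block-colorings is a proper coloring of $G[V_i\cup V_j]$. A standard argument shows that $G$ is $3$-colorable iff $G'$ contains a $k$-clique, and the reduction runs in time polynomial in $n = k \cdot 3^{V/k}$.

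For a target sequence $k = k_n = n^{1-\epsilon_n}$ with $\epsilon_n = o(1)$, inverting the relation $n = k_n \cdot 3^{V/k_n}$ yields $V = \epsilon_n k_n \log_3 n = \Theta(\epsilon_n k_n \log n)$, so the quantity $k_n \log n$ exceeds $V$ by a factor of exactly $1/\epsilon_n$. An algorithm solving $k$-clique in time $n^{ck_n}$ would via the reduction solve $3$-coloring on $V$-vertex instances in time $2^{O(ck_n \log n)} = 2^{O(cV/\epsilon_n)}$. Taking $c$ strictly below $c_0 \epsilon_n$, where $c_0>0$ is the constant furnished by ETH for $3$-coloring, drives this running time below the ETH threshold $2^{c_0 V}$, a contradiction. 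Consequently, $k$-clique on $n$-vertex graphs cannot be solved in time $n^{ck_n}$ for any $c<c_0\epsilon_n$, which is precisely the desired $n^{\Omega(k_n)}$ lower bound (with the implicit constant of order $\epsilon_n$).

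The main obstacle is exactly this last subtlety: the absolute constant hidden in $\Omega(k_n)$ must shrink as $\epsilon_n \to 0$, so what the partition reduction actually certifies is an $n^{\Omega_{\epsilon_n}(k_n)}$ bound whose leading constant vanishes in the limit. The compact formulation $k_n = n^{1-o(1)}$ should therefore be read as the family of lower bounds obtained by letting $\epsilon$ be any fixed positive constant: for each such $\epsilon$ one gets $k$-clique hardness $n^{\Omega_\epsilon(k)}$ at $k = n^{1-\epsilon}$, and this family is closed under $\epsilon \to 0$, covering the stated range. Once this is in hand, the bound propagates to Grid Tiling via the reduction of \cite{CFKLMPPS15} and then onward to MSR via the reduction already given in Section~\ref{sec:hardness}, since both reductions preserve the parameter $k$ up to constant factors and inflate the instance size only polynomially.
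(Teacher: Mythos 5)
Your reduction is the same one the paper uses (partition the $3$-coloring instance into blocks, one clique-vertex per proper coloring of a block, edges between compatible colorings), and your quantitative analysis of it is correct; the gap is in the last step. What you actually establish is: for every \emph{fixed} $\epsilon>0$, $k$-clique with $k=n^{1-\epsilon}$ cannot be solved in time $n^{ck}$ once $c$ is below a constant multiple of $\epsilon$. The assertion that this family of bounds is ``closed under $\epsilon\to 0$'' and therefore covers $k_n=n^{1-o(1)}$ is not an argument: for a single sequence $k_n=n^{1-\epsilon_n}$ with $\epsilon_n\to 0$, your own computation only excludes running times $n^{ck_n}$ with $c=O(\epsilon_n)\to 0$, so it does not even rule out an $n^{k_n/100}$-time algorithm, let alone every $n^{o(k_n)}$-time algorithm, which is what the theorem (as the paper proves and then uses it) asserts. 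A family of statements, each holding with its own vanishing constant, does not pass to a uniform statement in the limiting regime.

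The paper's proof tunes the parameters the other way around, and that is the ingredient you are missing. It assumes, for contradiction, a single algorithm running in time $n^{k_n/s(n)}$ for some unbounded $s$, and chooses the number of blocks as $k_n=\Theta(N/\log N)$, i.e.\ block size $\Theta(\log N)$, where $N$ is the $3$-coloring instance size. Then $\log n=\Theta(\log N)$, hence $k_n\log n=\Theta(N)$, and the derived $3$-coloring time $2^{O(N/s(n))}\cdot k_n^{k_n/s(n)}=2^{O(N/s(n))}$ is $2^{o(N)}$, contradicting ETH with no constant that depends on the parameterization: the unbounded $s$ coming from the assumed $n^{o(k_n)}$ running time absorbs every constant. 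In other words, the quantitative target is $k_n\log n=O(N)$, not a prescribed exponent $1-\epsilon_n$; your inversion pins the block size at $\epsilon_n\log_3 n$, which forces $k_n\log n=\Theta(V/\epsilon_n)=\omega(V)$, and that is exactly why your constant collapses. (Your worry is not baseless --- making $k_n$ literally $n^{1-o(1)}$ while keeping $k_n\log n=O(N)$ hinges on the per-block vertex count being $N^{o(1)}$, a point the paper treats rather quickly --- but the paper's contradiction step is uniform in a way your limiting argument is not, and to prove the stated theorem you need that uniform step rather than a reinterpretation of the statement.)
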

\begin{proof}
Assume by contradiction that there is some monotonic increasing unbounded function $s(n)$ such that there is an algorithm for $k_n$-clique which runs in time $n^{k_n/s(n)}$.
We construct the graph for $k_n$-clique in the following manner: 
Consider an input graph for 3-coloring of size $N$. We divide its vertices into $k_n$ sets of roughly equal size, and create a new input graph for $k$-click in the following manner:
For each of the $k_n$ sets, for each valid $3$-coloring of its induced subgraph, we create a vertex.
For any two vertices associated with different sets, we create an edge between them if their respective colorings form a valid coloring of the subgraph induced by their union.
It is easy to see that the existence of a $k_n$-clique in the graph is equivalent to the existence of a valid $3$-coloring.
The number of vertices in the constructed graph is $n = k_n\cdot 3^{O(N/k_n)}$.
By our assumption the $k_n$-clique problem can be solved in time $n^{k_n /s(n)}$, we obtain a solution to $3$-coloring in time $2^{O(N/s(n))}\cdot k_n^{k_n/s(n)}$. 
    Let $k_n = \Theta(\frac{N}{\log N})$, then $n=\frac{N^{1+o(1)}}{\log N}$ so that $k_n = n^{1-o(1)}$, and the $3$-coloring time bound is $2^{O(N/s(n))}$, which is impossible if ETH holds.
\end{proof}

\begin{corollary}
Unless ETH fails, exact solutions to MSR in metric spaces of constant doubling dimension require  $n^{\Omega(k_n)}$ time for $k = k_n = n^{1 - o(1)}$.
\end{corollary}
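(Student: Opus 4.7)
The plan is to compose the reduction from Grid Tiling to MSR already built in the proof of Theorem~\ref{thm:msr-hardness} with the refined ETH lower bound for $k$-clique just established in the preceding theorem. The reduction in Theorem~\ref{thm:msr-hardness} is polynomial-time, produces a metric space of constant doubling dimension (by Lemma~\ref{lem:msr-doubling-grid}), and preserves the parameter $k$; it therefore transfers any ETH lower bound for $k$-clique to MSR.

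Concretely, given an $m$-vertex $k$-clique instance with $k=k_m=m^{1-o(1)}$, first apply the textbook reduction from $k$-clique to Grid Tiling: the $k\times k$ grid has diagonal cells encoding the vertex set as pairs $(v,v)$ and off-diagonal cells encoding the edge set as pairs $(u,v)$ with $uv\in E$, so a feasible Grid Tiling solution corresponds exactly to a $k$-clique. Then feed this Grid Tiling instance into the MSR reduction of Theorem~\ref{thm:msr-hardness}. The resulting MSR instance has $N=O(k_m^2 m^2)=m^{O(1)}$ points in a metric space of constant doubling dimension, and its cluster parameter equals $k_m$.

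Now suppose for contradiction that MSR admits an $n^{o(k_n)}$-time algorithm in the regime $k=k_n=n^{1-o(1)}$. Applying it through the composed reduction would yield a $k_m$-clique algorithm running in time $N^{o(k_m)}=m^{O(1)\cdot o(k_m)}=m^{o(k_m)}$, contradicting the preceding theorem. Hence MSR requires $n^{\Omega(k_n)}$ time for $k_n=n^{1-o(1)}$ under ETH, which is the claim.

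The main technical point to verify is the parameter-regime match: one must confirm that $k_m=m^{1-o(1)}$ translates to $k_m=N^{1-o(1)}$ in the MSR instance, despite the polynomial blow-up $N=m^{O(1)}$. Since the blow-up factor is a fixed constant in the exponent, $m^{1-o(1)}$ remains $N^{1-o(1)}$ once the constant factor is absorbed into the $o(1)$; this is the one place where a small calculation (or, if desired, a slight tightening of the Theorem~\ref{thm:msr-hardness} reduction to make $N$ near-linear in $m$) is needed, and once it is in place the rest of the argument is a mechanical chain of polynomial reductions.
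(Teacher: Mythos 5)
Your overall route is the same as the paper's: compose the strengthened ETH bound for $k$-clique with the standard clique-to-Grid-Tiling reduction and the Grid-Tiling-to-MSR reduction of Theorem~\ref{thm:msr-hardness}. However, the step you yourself flag as the ``main technical point'' is exactly where the argument breaks, and your proposed resolution is not valid. If $N=m^{c}$ for a constant $c>1$, then $k_m=m^{1-o(1)}$ becomes $k_m=N^{(1-o(1))/c}=N^{1/c-o(1)}$, not $N^{1-o(1)}$: a constant factor in the exponent rescales the exponent of $k$ and cannot be ``absorbed into the $o(1)$,'' which is an additive term. Concretely, with the clique-to-Grid-Tiling reduction you describe, $|S_{i,i}|=m$ and $|S_{i,j}|\le m^{2}$, so the metric of Theorem~\ref{thm:msr-hardness} has $N=\Theta(k^{2}m^{2})$ points, and the hard MSR instances you produce satisfy $k\approx N^{1/4}$. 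Moreover, the fallback you suggest (tightening the reduction so that $N$ is near-linear in $m$) is unavailable: the construction contains a group $T_{i,j}$ for every pair $i<j$, so $N\ge\binom{k}{2}$ and hence $k=O(\sqrt{N})$ for every instance it can produce; no instance of this form lies in the regime $k=N^{1-o(1)}$.

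The consequence is that your contradiction only goes through if the hypothetical fast MSR algorithm is assumed to run in time $n^{o(k)}$ on instances with $k=\Theta(n^{1/4})$ (equivalently, an algorithm of the uniform form $n^{k/s(n)}$); an algorithm assumed fast only in the literal regime $k=n^{1-o(1)}$ need not apply to the instances you construct, and no contradiction follows. So what your composition actually yields (and the paper's own one-line ``carries on'' remark is equally terse on this point) is: under ETH, MSR in constant doubling dimension admits no $n^{o(k)}$-time algorithm even when $k$ is polynomially large in $n$, e.g.\ $k=\Theta(n^{1/4})$ --- which is weaker than the corollary's literal parameter regime. To prove the statement as written you would need a genuinely denser reduction (not a slight tightening of this one), or a restatement of the regime; the constant-in-the-exponent absorption presented as a proof step is a genuine gap.
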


We note that any $(1+\eps)$-approximation algorithm for MSR yields exact solutions in the reduction presented above for $\eps \leq 2^{-k}$.

\subsubsection{$\alpha$-MSR}

While there is a trivial reduction from MSR to $\alpha$-MSR using the snowflake metric $(X,d^{1/\alpha})$, this transformation does not preserve the doubling dimension. However, the reduction from Grid Tiling to MSR can be used for $\alpha$-MSR as well, with the following lemma:

\begin{corollary}\label{lem:alpha-msr-grid-tilig}
Grid Tiling on $\mathcal{S}$ has a feasible solution if and only if the optimal cost of $\alpha$-MSR with $k$ balls on the metric described above is bounded by $\frac{2^{\alpha k} - 1}{2^\alpha - 1}$.
\end{corollary}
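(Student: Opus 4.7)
The plan is to follow the structure of the backward direction used for MSR verbatim, replacing each use of the sum of radii with its $\alpha$-power analog. Since the construction places the diameters $d_i$ in geometric progression with ratio $2$, their $\alpha$-powers form a geometric progression with ratio $2^\alpha$, and the analog of the linear benchmark $2^k-1 = \sum_{i=1}^k d_i$ is exactly $\sum_{i=1}^k d_i^\alpha = \frac{2^{\alpha k} - 1}{2^\alpha - 1}$, which matches the bound in the statement.

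For the ``only if'' direction I would reuse the exact same $k$ cover balls $B(t_{i,i}, d_i)$ constructed in Lemma~\ref{lem:msr-grid-tiling-feasibility}. The coverage case analysis there is purely about pairwise distances and is independent of how radii are aggregated, so it transfers with no change. Summing $\alpha$-powers of the chosen radii then gives cost $\sum_{i=1}^k d_i^\alpha = \frac{2^{\alpha k} - 1}{2^\alpha - 1}$.

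For the ``if'' direction, the first step is to establish the same structural claim as in MSR: in any solution of cost at most $\frac{2^{\alpha k} - 1}{2^\alpha - 1}$, every $a_i$ is covered by a ball centered in $T_i$ of radius exactly $d_i$. Arguing by contradiction, suppose $i$ is the maximal index where this fails; by the triangle inequality the ball covering $a_i$ then has radius at least $2 d_i$, contributing at least $(2d_i)^\alpha$ to the cost, while for $j>i$ the maximality of $i$ forces the respective balls to contribute at least $d_j^\alpha$. To derive the contradiction one needs
\[
(2 d_i)^\alpha + \sum_{j=i+1}^k d_j^\alpha > \sum_{j=1}^k d_j^\alpha,
\]
equivalently $(2 d_i)^\alpha > \sum_{j=1}^i d_j^\alpha$. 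Substituting the geometric progression reduces this to $2^{\alpha i}(2^\alpha - 1) > 2^{\alpha i} - 1$, i.e. $(2^\alpha - 2) \cdot 2^{\alpha i} > -1$, which holds for every $\alpha \geq 1$. Once the structural claim is in hand, the rest of the MSR argument carries over unchanged: every off-diagonal point $(t_i,t_j) \in T$ must lie in $B(t_i,d_i)$ or $B(t_j,d_j)$, and the case distinction in the distance construction then forces the chosen diagonal pair $(s_i,s_j)$ to be feasible for Grid Tiling.

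The main obstacle is the strict inequality $(2d_i)^\alpha > \sum_{j=1}^i d_j^\alpha$, the $\alpha$-analog of the ``doubling beats the geometric sum'' fact used in the linear MSR case. Since $2^\alpha - 2 \geq 0$ precisely when $\alpha \geq 1$, the inequality is in fact trivial in our regime, and the reduction --- together with Theorem~\ref{thm:grid-tiling-hardness} --- yields the same ETH-based lower bounds on $\alpha$-MSR as those proved for MSR.
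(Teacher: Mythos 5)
Your proposal is correct and matches the paper's (implicit) argument: the paper proves the corollary simply by observing that the Grid Tiling reduction for MSR carries over verbatim when radii are aggregated by $\alpha$-powers, which is exactly what you do, and your verification of the key budget inequality $(2d_i)^\alpha > \sum_{j\le i} d_j^\alpha$ (reducing to $(2^\alpha-2)2^{\alpha i} > -1$ for $\alpha\ge 1$) correctly supplies the only detail that changes from the linear case. Both directions, including the geometric-sum benchmark $\sum_i d_i^\alpha = \frac{2^{\alpha k}-1}{2^\alpha-1}$, are handled as intended.
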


Which implies the following theorem;

\begin{theorem}\label{thm:alpha-msr-hardness}
Unless ETH fails, $\alpha$-MSR in metric spaces of constant doubling dimension requires time $n^{\Omega(k)},n^{\Omega(\log \aspectR)}$ and $n^{\Omega(k_n)}$ for $k_n = n^{1-o(1)}$.
\end{theorem}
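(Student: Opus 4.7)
The plan is to mirror the proof strategy used for Theorems \ref{thm:msr-hardness} and \ref{thm:msr-hardness-aspect} and the $k_n$-scaling corollary for MSR, replacing Lemma \ref{lem:msr-grid-tiling-feasibility} and its converse with Corollary \ref{lem:alpha-msr-grid-tilig}. Since the corollary asserts an equivalence between feasibility of Grid Tiling on $\mathcal{S}$ and the existence of an $\alpha$-MSR solution of cost at most $\frac{2^{\alpha k}-1}{2^\alpha-1}$ on the reduction-constructed space, an algorithm for $\alpha$-MSR decides the Grid Tiling instance simply by comparing the optimal value to this threshold, and any lower bound on Grid Tiling transfers to $\alpha$-MSR.

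First I would verify that the reduction carries through with the $\alpha$-power cost function so that the corollary is established. The forward direction uses the same cover as in Lemma \ref{lem:msr-grid-tiling-feasibility} (a ball of radius $d_i$ at each chosen $t_i \in T_i$), whose $\alpha$-cost telescopes to $\sum_i d_i^\alpha = \sum_{i=0}^{k-1} 2^{i\alpha} = \frac{2^{\alpha k}-1}{2^\alpha-1}$. For the reverse direction, the same structural argument as in the MSR case applies: if for some index $i$ the point $a_i$ is not covered by a ball centered at a point of $T_i$, then the ball covering $a_i$ must have radius at least $2d_i$ (by the triangle inequality and the construction), contributing $(2d_i)^\alpha = 2^\alpha d_i^\alpha$ to the cost. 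Combining this with the minimum cost needed to cover the remaining $a_j$ with $j>i$ gives a total of at least $2^\alpha d_i^\alpha + \sum_{j>i} d_j^\alpha$, and the identity $2 \cdot 2^{i\alpha} > \sum_{j=0}^{i} 2^{j\alpha}$ (which holds for all $\alpha \ge 1$ because $(i+1)\alpha \ge i\alpha + 1$) shows that this already exceeds $\frac{2^{\alpha k}-1}{2^\alpha-1}$, contradicting the cost bound.

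Next I would invoke the three hardness sources in turn, exactly as was done for MSR. From Theorem \ref{thm:grid-tiling-hardness}, Grid Tiling requires $n^{\Omega(k)}$ time under ETH; combined with the polynomial-size reduction and the constant-doubling property of the constructed space (Lemma \ref{lem:msr-doubling-grid}), this yields $n^{\Omega(k)}$ for $\alpha$-MSR in constant doubling dimension. Choosing $\eps$ to be a constant independent of $n$ and $k$ (as is already done in the MSR reduction), the aspect ratio of the constructed space is $\Theta(2^k)$, so $\log \aspectR = \Theta(k)$ and the first bound translates to $n^{\Omega(\log \aspectR)}$. Finally, applying the reduction with $k = k_n = n^{1-o(1)}$ and combining with the stronger $n^{\Omega(k_n)}$ lower bound for $k$-clique established immediately before the theorem yields the $n^{\Omega(k_n)}$ bound for $\alpha$-MSR.

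The only non-routine step is verifying the geometric inequality in the reverse direction of the corollary. Because the prescribed distances grow geometrically with ratio $2$, and the ``penalty'' from a misplaced ball is a constant factor $2^\alpha$, one must confirm that a single doubled term of the geometric sum indeed dominates the prefix up to that term. This works precisely because the target threshold itself is a geometric series with the same ratio, and the factor $2^\alpha$ is exactly sharp enough for $\alpha \ge 1$; thus the computation reduces to the $\alpha=1$ case already implicit in the MSR proof, and no new machinery is required beyond raising radii to the $\alpha$ power.
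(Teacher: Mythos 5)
Your proposal matches the paper's own argument: the paper proves this theorem exactly by reusing the Grid Tiling reduction of Theorem \ref{thm:msr-hardness} verbatim and replacing the cost threshold $2^k-1$ with the geometric sum $\frac{2^{\alpha k}-1}{2^\alpha-1}$ (Corollary \ref{lem:alpha-msr-grid-tilig}), then invoking the same three hardness sources for the $n^{\Omega(k)}$, $n^{\Omega(\log\aspectR)}$ and $n^{\Omega(k_n)}$ bounds. Your verification of the reverse direction (that a misplaced ball of radius $2d_i$ raised to the $\alpha$ power still exceeds the prefix of the geometric series for $\alpha\ge 1$) is correct and in fact supplies detail the paper leaves implicit.
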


\subsection{Hardness of MSD}

In \cite{BBGH24}, a reduction is presented from $k$-Vertex Cover to Euclidean MSD with $k$ clusters. Since under ETH $k$-Vertex Cover cannot be solved in time $2^{o(k)}$, it follows immediately:

\begin{corollary}\label{cor:MSD-hardness}
Unless ETH fails, exact MSD requires time $2^{\Omega(k)}$ even in Euclidean spaces, and moreover, MSD cannot be approximated to within $\sqrt{\frac{4}{3}}-\delta$, for any $\delta>0$.
\end{corollary}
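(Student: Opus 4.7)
The plan is to derive Corollary~\ref{cor:MSD-hardness} by combining the ETH-based lower bound for $k$-Vertex Cover with the reduction from $k$-VC to Euclidean MSD attributed to \cite{BBGH24}. Under ETH, the sparsification lemma together with the standard $3$-SAT $\to$ Vertex Cover reduction produces $k$-VC instances in which $k = \Theta(n)$, ruling out $2^{o(k)}$-time algorithms for $k$-VC. It therefore suffices to exhibit a polynomial-time, gap-preserving reduction from $k$-Vertex Cover to Euclidean MSD with $k$ clusters whose soundness-to-completeness gap is at least $\sqrt{4/3}-\delta$; the corollary then follows since a $2^{o(k)}$-time algorithm for exact MSD (or a $(\sqrt{4/3}-\delta)$-approximation) would solve $k$-VC in the same time.

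The shape of the reduction I expect is as follows. Given $G=(V,E)$, associate each edge $e=\{u,v\}$ with the point $p_e = \tfrac{1}{\sqrt{2}}(\mathbf{e}_u+\mathbf{e}_v)$ on the unit sphere of $\mathbb{R}^{|V|}$. Two edge-points that share a vertex lie at distance $1$, while two edge-points of disjoint edges lie at distance $\sqrt{2}$. If $G$ has a vertex cover of size $k$, then grouping the edges of $G$ according to a covering vertex yields a $k$-partition into stars, each with diameter at most $1$, so the MSD optimum is at most $k$. Conversely, if $G$ has no $k$-vertex cover, then every $k$-partition of $E$ must place two disjoint edges into a common cluster, forcing some cluster diameter to be $\geq \sqrt{2}$.

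The main obstacle in recovering the \emph{constant-factor} inapproximability $\sqrt{4/3}-\delta$ is that this single-copy construction only yields an additive gap of $O(1)$, hence a multiplicative gap of $1+\Theta(1/k)$, which is insufficient. The nontrivial content of the reduction in \cite{BBGH24} is a gadget that forces several simultaneously ``bad'' pairs to coexist in every erring cluster---plausibly by replacing each edge with a small cluster of co-located edge-points arranged so that the cluster's diameter jumps to $2$ as soon as two edge-gadgets from disjoint edges are merged, while a ``correct'' star cluster only has diameter $\sqrt{3}$. This gives the ratio $2/\sqrt{3}=\sqrt{4/3}$ on a per-cluster basis and so amplifies to the same constant globally. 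Once such a gap-amplified reduction is in place, the exact $2^{\Omega(k)}$ lower bound and the $\sqrt{4/3}-\delta$ inapproximability follow together from the ETH bound for $k$-VC, completing the corollary.
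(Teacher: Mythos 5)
Your top-level plan coincides with the paper's: the paper proves this corollary in one line by citing the reduction from $k$-Vertex Cover to Euclidean MSD with $k$ clusters given in \cite{BBGH24} and the ETH-based $2^{\Omega(k)}$ lower bound for $k$-Vertex Cover. The gap is entirely in your attempt to supply that reduction yourself, and the construction you sketch does not work even for the exact $2^{\Omega(k)}$ part. With $p_e=\tfrac{1}{\sqrt2}(\mathbf{e}_u+\mathbf{e}_v)$ the two distance scales are $1$ and $\sqrt2$, so the all-in-one-cluster solution always has cost $\diam(\{p_e\}_{e\in E})\le\sqrt2$. Hence in the ``no'' case the MSD optimum is still at most $\sqrt2$, far below your completeness bound $k$; the statement ``some cluster has diameter $\ge\sqrt2$'' gives a lower bound of $\sqrt2$ on the cost, not of anything exceeding $k$, so there is no threshold separating yes- from no-instances. (This is exactly the obstruction that forces the paper's own $3$-Coloring reduction in Section 6.3 to use $\alpha$-MSD with $2^{\alpha}>3$: one must make a single ``bad'' cluster cost more than the entire intended solution, and with only two Euclidean-realizable distance scales of ratio $\sqrt2$ this does not happen for plain MSD without a more careful construction. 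There is also a secondary issue that pairwise-intersecting edge sets include triangles, not only stars, so diameter-$1$ clusters do not directly certify a size-$k$ vertex cover.)

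For the $\sqrt{4/3}-\delta$ inapproximability you explicitly defer to a hypothetical gadget (``plausibly by replacing each edge with a small cluster of co-located edge-points\dots''), and the amplification step you assert is also unjustified: a per-cluster ratio of $2/\sqrt3$ does not ``amplify to the same constant globally,'' because in the no-case only one cluster need be violating, so unless the construction forces the total yes-cost to be comparable to a single bad diameter (e.g.\ almost all clusters of diameter $0$) or forces a constant fraction of clusters to be bad, the global gap is only $1+\Theta(1/k)$. So as written the proposal establishes neither the $2^{\Omega(k)}$ bound nor the constant-factor hardness; to be faithful to the paper you would either cite the reduction of \cite{BBGH24} as a black box (as the paper does) or actually construct a gap reduction with a valid yes/no cost separation, which requires a genuinely different gadget from the one you describe.
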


\subsection{Hardness of $\alpha$-MSD}\label{sec:alpha-msd-hardness}

While we still not have an ETH hardness result for MSD, hardness results for $\alpha$-MSD can be obtained, and the problem is much harder than $\alpha$-MSR.
Our reduction is from the $3$-Coloring problem, in which the input is a graph $G=(V,E)$ and the task is to determine weather there is a coloring function $\mu:V \to [3]$ such that if $(u,v) \in E$ then $\mu(u) \neq \mu(v)$.
We rely on the following theorem from \cite{LDS11}:

\begin{theorem}
If ETH holds, there is no $2^{o(n)}$ algorithm for $3$-Coloring.
\end{theorem}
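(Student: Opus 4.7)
The plan is to derive this statement from ETH by chaining two standard ingredients: the Sparsification Lemma and a linear-blowup reduction from $3$-SAT to $3$-Coloring. The ETH as usually stated rules out $2^{o(n)}$-time algorithms for $3$-SAT on $n$ variables, and I would like to transport this lower bound into the $3$-Coloring setting while preserving the linear dependence in the exponent.

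First, I would invoke the Sparsification Lemma of Impagliazzo, Paturi and Zane, which implies that under ETH there is no algorithm solving $3$-SAT in time $2^{o(n+m)}$, where $m$ is the number of clauses. In particular, one may restrict attention to $3$-SAT instances with $m = O(n)$, since any instance can be reduced (in subexponential time) to a disjunction of polynomially many such sparse instances. Consequently an ETH-refuting algorithm for $3$-SAT on the sparse family would refute ETH in general.

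Second, I would apply the classical polynomial-time reduction from $3$-SAT to $3$-Coloring. The standard construction uses a small palette gadget (a triangle on vertices labeled $T$, $F$, $B$ forcing the three available colors), a variable gadget of two vertices per variable (for the literal and its negation, each adjacent to $B$ and to one another), and a constant-size OR-gadget per clause that propagates the truth values of its literals. The resulting graph $G$ is $3$-colorable if and only if the formula is satisfiable, and has $N = O(n+m)$ vertices. After sparsification $N = O(n)$, so any algorithm solving $3$-Coloring in time $2^{o(N)}$ would yield a $2^{o(n)}$-time algorithm for the sparsified $3$-SAT, and hence for $3$-SAT in general, contradicting ETH.

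The main obstacle is really only a bookkeeping one: verifying that the clause gadget adds only $O(1)$ vertices per clause (the standard OR-of-ORs tree satisfies this), so that $N = O(n + m)$ holds with an absolute constant independent of the formula. Once this is in place the proof is essentially a one-line contrapositive: a $2^{o(N)}$ algorithm for $3$-Coloring, composed with the reduction, gives $2^{o(n+m)}$ time for $3$-SAT on sparse instances, contradicting the ETH+Sparsification conclusion.
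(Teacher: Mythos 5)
Your proposal is correct: the paper does not prove this statement but cites it from the literature, and the standard proof behind that citation is exactly the route you take --- ETH plus the Sparsification Lemma to get sparse $3$-SAT instances with $m = O(n)$, followed by the classical linear-size reduction to $3$-Coloring producing a graph on $O(n+m)$ vertices. Your bookkeeping point (constant-size clause gadgets so the blowup is linear) is the only thing that needs checking, and it holds for the standard gadget construction.
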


Given a graph $G=(V,E)$ we may create the following metric space $X$ as follows:

\begin{itemize}
    \item For every $v \in V$ we create a corresponding point in $X$. For every $u,v \in V$, we set $d(u,v)$ to be $2$ if $(u,v) \in E$, and $d(u,v) = 1$ otherwise.
    \item For every $i \in [k - 3]$ we create a point $x_i$, such that $d(x_i,v) = 2$, and for $i \neq j \in [k - 3]$ we set $d(x_i,x_j) = 2$.
\end{itemize}

\begin{lemma}
If $G$ has $3$ coloring, then the optimal cost for $\alpha$-MSD with $k$ clusters on $X$ is $\leq 3$.
\end{lemma}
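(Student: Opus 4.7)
The plan is to exhibit an explicit clustering of $X$ into $k$ clusters whose $\alpha$-MSD cost is at most $3$, built directly from a proper $3$-coloring $\mu:V\to[3]$ of $G$. For each color $i\in[3]$, let $C_i = \mu^{-1}(i)$. Since $\mu$ is proper, $C_i$ is an independent set of $G$, so for any two distinct $u,v\in C_i$ we have $(u,v)\notin E$ and therefore $d(u,v)=1$ by the construction of the metric. Consequently $\diam(C_i)\leq 1$ (with equality when $|C_i|\ge 2$, and $\diam(C_i)=0$ otherwise).

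For the auxiliary points, I would place each $x_j$ ($j\in[k-3]$) in its own singleton cluster, contributing diameter $0$. Together this yields a partition of $X$ into at most $3+(k-3)=k$ non-empty clusters. If fewer than $k$ non-empty clusters arise (say a color class is empty, or $|C_i|=0$), one can split any non-singleton cluster into singletons until exactly $k$ clusters are obtained; this can only decrease the sum of $\alpha$-powers of diameters, since any refinement of a cluster produces sub-clusters of no greater diameter.

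Putting the pieces together, the total cost is
\[
\sum_{i=1}^{3}\diam(C_i)^\alpha \;+\; \sum_{j=1}^{k-3} 0^\alpha \;\leq\; 3\cdot 1^\alpha \;=\; 3,
\]
which is the desired bound. There is essentially no obstacle: the lemma is a direct consequence of the definition of a proper $3$-coloring together with the chosen distances (intra-color distances equal $1$, while the added points $x_j$ are handled trivially as singletons). The only bookkeeping point is the exactly-$k$-cluster requirement, dispatched by the refinement argument above.
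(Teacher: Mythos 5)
Your proposal is correct and follows essentially the same argument as the paper: cluster each color class (diameter $\leq 1$ since intra-class distances are $1$), place each auxiliary point $x_j$ in a singleton, and sum to get cost at most $3$. The extra remark about padding to exactly $k$ clusters is harmless bookkeeping the paper leaves implicit.
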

\begin{proof}
Let $\mu$ be a valid coloring function and set for every $i \in [3]$, $C_i = \mu^{-1}(i)$.
For every $i \in [3], u,v \in C_i$, by the validity of the coloring, $(u,v) \notin E$ and hence $d(u,v) = 1$. As a result, $\diam(C_i) = 1$.
Additionally, for every $i \in [k - 3]$ we set $C_{i + 3} = \{x_i\}$, and obtain $\diam(C_i) = 0$.
The total sum of this solution is hence $3$.
\end{proof}

\begin{lemma}
When $\Delta > 2$, and $\alpha > \log(3)$, if the optimal cost for $\alpha$-MSD with $k$ clusters on $X$ is $\leq 3$, then $G$ has a $3$-coloring.
\end{lemma}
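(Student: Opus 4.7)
The plan is to exploit the threshold $\alpha > \log_2 3$ (equivalently $2^{\alpha} > 3$) to argue that every cluster in a solution of cost at most $3$ must have diameter strictly less than $2$, which in this binary-distance metric forces it to be an independent set of $G$ or a singleton containing some $x_i$. From there, a counting argument produces exactly three independent sets covering $V$, which is precisely a $3$-coloring.

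First I would record the key numerical fact: since $\alpha > \log_2 3$, we have $2^\alpha > 3$. Now let $\C = \{C_1,\dots,C_k\}$ be an optimal partition with $\sum_i \diam(C_i)^\alpha \le 3$. If some $C_i$ had $\diam(C_i) \ge 2$, then $\diam(C_i)^\alpha \ge 2^\alpha > 3$, contradicting the cost bound (all other terms are non-negative). Hence every cluster satisfies $\diam(C_i) < 2$.

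Next I would use the structure of the metric: all pairwise distances in $X$ lie in $\{1,2\}$, so $\diam(C_i) < 2$ in fact means $\diam(C_i) \le 1$, i.e.\ $C_i$ is either a singleton or contains only pairs at distance exactly $1$. By construction, every pair involving an auxiliary point $x_j$ (with another $x_{j'}$ or with any $v \in V$) is at distance $2$, so any cluster containing some $x_j$ together with a second point would have diameter $2$. Therefore each $x_j$ must sit in its own singleton cluster, accounting for exactly $k-3$ of the $k$ clusters. The remaining three clusters $C_{i_1},C_{i_2},C_{i_3}$ (assuming $|V|\ge 3$; otherwise $3$-colorability is trivial) partition $V$, and since each of them has diameter $\le 1$, no pair inside any of them can be connected by an edge of $G$ (such a pair would be at distance $2$).

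Finally, I would define the coloring $\mu : V \to [3]$ by setting $\mu(v) = t$ whenever $v \in C_{i_t}$; by the previous step each color class is an independent set, so $\mu$ is a proper $3$-coloring. The main subtlety, and the place where the hypothesis $\alpha > \log_2 3$ is used sharply, is step one: without it a cluster of diameter $2$ could contribute exactly $3$ and the pigeonhole separating $V$ into independent sets would collapse. A minor bookkeeping step is to verify the edge case $|V| < 3$ or the possibility that fewer than three of the three ``$V$-clusters'' are nonempty, both of which only make $3$-colorability easier.
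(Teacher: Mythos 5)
Your proposal is correct and follows essentially the same route as the paper's proof: use $2^\alpha>3$ to rule out any cluster of diameter $\ge 2$, conclude each $x_i$ is a singleton cluster, so the vertices of $G$ fall into at most three clusters of diameter $\le 1$, which are independent sets and yield the $3$-coloring. Your explicit handling of the degenerate cases (fewer than three nonempty vertex-clusters) is a harmless extra bookkeeping step beyond what the paper writes.
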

\begin{proof}
Let $\Ccal$ be an optimal solution for $\alpha$-MSD with $k$ clusters on $X$.
Since $\alpha > \log(3)$, there is no cluster $C \in \Ccal$ with diameter $\geq 2$. This implies that for every $i \in [k - 3]$ the point $x_i$ we have $\{x_i\} \in \Ccal$.
As a result, the points corresponding to the vertices of the graph are split between at most $3$ clusters.
By the construction, the diameter of each of these clusters is from the set $\{0, 1, 2\}$.
Since the maximal diameter of a cluster in $\Ccal$ is $< 2$, the maximal diameter for these clusters is $1$.
We construct the coloring function to map vertices corresponding to points which are in the same cluster to the same color, and obtain a valid coloring by the construction.
\end{proof}

As a result, since the ratio between the potential solution costs is a function of $\alpha$ we obtain the following theorem:

\begin{theorem}\label{thm:hardness-alpha-msd1}
If ETH holds, for every $\alpha > \log 3$ and $k \geq 3$, there is no exact algorithm for $\alpha$-MSD with $k$ clusters which runs in time $2^{o(n)}$, nor even a $\frac{2^{\alpha}}{3}$ approximation algorithm for this problem.
\end{theorem}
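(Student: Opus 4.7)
The plan is to package the two preceding lemmas into an ETH-style reduction with a promise gap. The reduction already maps a 3-Coloring instance $G=(V,E)$ with $N=|V|$ vertices to an $\alpha$-MSD instance on a metric space $X$ of size $|X|=N+k-3$, and the two lemmas establish that YES-instances of 3-Coloring yield $\alpha$-MSD solutions of cost at most $3$, while any $\alpha$-MSD solution of cost at most $3$ yields a 3-coloring (this is where $\alpha>\log 3$ is needed, so that a single diameter-$2$ cluster already costs $2^\alpha>3$).

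First I would strengthen the contrapositive of the second lemma into a quantitative gap statement: if $G$ is not 3-colorable, then the optimum of $\alpha$-MSD on $X$ is at least $2^\alpha$. The argument is the same as in the lemma's proof: in any feasible $k$-clustering of $X$ with cost $<2^\alpha$, no cluster can have diameter $\geq 2$, so each auxiliary point $x_i$ must be a singleton (since $d(x_i,\cdot)=2$ to everything else), leaving exactly three clusters to partition $V$; all these clusters must then have diameter at most $1$, which produces a proper 3-coloring of $G$. Contrapositively, if no 3-coloring exists then every feasible solution contains at least one cluster of diameter $2$, contributing $2^\alpha$ to the cost.

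With this gap in hand, both claims of the theorem follow. For the exact lower bound, fix $k\geq 3$; the reduction is polynomial-time and produces an instance of size $|X|=N+k-3=\Theta(N)$, so any algorithm for exact $\alpha$-MSD running in time $2^{o(|X|)}$ would decide 3-Coloring in time $2^{o(N)}$, contradicting the cited ETH bound for 3-Coloring. For the inapproximability claim, suppose toward contradiction that there is a $c$-approximation algorithm for $\alpha$-MSD with $c<\tfrac{2^\alpha}{3}$ running in subexponential time (or in fact any time that beats 3-Coloring under ETH). On YES-instances its output value is at most $3c<2^\alpha$, while on NO-instances the optimum, and hence the algorithm's output, is at least $2^\alpha$; thresholding at $2^\alpha$ thus decides 3-Coloring, again contradicting ETH.

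I do not anticipate a real obstacle: the only subtle point is making the NO-case bound explicit at $2^\alpha$ (rather than the weaker "strictly greater than $3$" appearing in the proof of the second lemma), and checking that the instance size is linear in $N$ for every fixed $k\geq 3$ so that a $2^{o(n)}$ $\alpha$-MSD algorithm really does translate to a $2^{o(N)}$ 3-Coloring algorithm. Both are straightforward once the gap is stated explicitly.
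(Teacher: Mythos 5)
Your proposal is correct and follows essentially the same route as the paper: the paper's proof of this theorem is exactly the gap argument you describe, relying on the two preceding lemmas (YES-instances give cost at most $3$; any solution of cost below $2^\alpha$ forces the $x_i$ to be singletons and yields a $3$-coloring), with the explicit NO-case bound of $2^\alpha$ and the linear size of the reduction being the only details you spell out more carefully than the paper does.
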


The requirement that $k \geq 3$ is complemented by the existance of a polynomial exact algorithm for $\alpha$-MSD with $k = 2$, due to~\cite{HJ87}, in which a method for obtaining a partition of a space to two clusters with a given diameters if one exists is presented.

We also provide the following parametrized hardness result, which is based on the same reduction we used from Grid Tiling to MSR. In this case, the solution clusters remain the same, but we consider their diameters, which are twice their corresponding radii. We obtain the following lemma, analogous to Lemma~\ref{lem:alpha-msr-grid-tilig}:

\begin{corollary}
Grid Tiling on $\mathcal{S}$ have a feasible solution if and only if the optimal cost of $\alpha$-MSD with $k$ clusters on the metric described above is bounded by $2^\alpha\left(\frac{2^{\alpha k} - 1}{2^\alpha - 1}\right)$.
\end{corollary}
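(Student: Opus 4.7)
The plan is to mirror the proof of Lemma~\ref{lem:alpha-msr-grid-tilig} by converting ball covers into partitions and tracking diameters in place of radii, using the same metric space $X$ constructed in the MSR reduction.

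First I would handle the forward direction. Given a feasible Grid Tiling instance $\mathcal{S}$, I take the ball cover $\{B(t_i, d_i)\}_{i=1}^k$ produced by Lemma~\ref{lem:msr-grid-tiling-feasibility} and refine it into a partition by assigning each point of $X$ to any one covering ball. Every resulting cluster $C_i \subseteq B(t_i, d_i)$ has diameter at most $2 d_i = 2^i$ by the triangle inequality, so the total $\alpha$-MSD cost is bounded by $\sum_{i=1}^k (2 d_i)^\alpha = \sum_{i=1}^k 2^{\alpha i} = 2^\alpha \cdot \frac{2^{\alpha k} - 1}{2^\alpha - 1}$, matching the required bound.

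Next I would prove the converse. Given an $\alpha$-MSD clustering $\Ccal$ of cost at most the stated bound, I would adapt the maximality argument from the MSR backward direction to show that for every $i$ the point $a_i$ lies in a cluster $C^{(i)} \in \Ccal$ with $\diam(C^{(i)}) \le 2 d_i$ and $C^{(i)} \cap T_i \neq \emptyset$. Intuitively, any cluster pooling $a_i$ with a point outside $T_i$ inherits diameter at least $2 d_i$, and for the smallest violating scale the cost already eats up the tail of the geometric series. Having established the structure, I would pick $t_i \in C^{(i)} \cap T_i$, yielding a Grid Tiling candidate, and argue feasibility as in the MSR argument: each off-diagonal point $(t_i, t_j) \in T$ must lie in some $C^{(i')}$ with $i' = \max(i, j)$, and if the chosen pair $s_i, s_j$ were infeasible then $d(t_{i'}, (t_i, t_j)) = d_{i'} + \epsilon$ would force $\diam(C^{(i')}) > 2 d_{i'}$, contradicting the budget.

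The main obstacle will be rigorously bounding the cost from below when the structure is violated, especially for clusters that jointly cover $a_i$ and points of $T$ straddling two different diagonal scales. Unlike MSR (where a forbidden configuration directly inflates a single radius to at least $2 d_i$), the MSD cost is a sum of diameters coupled through the partition, so the accounting must combine the convexity of $x \mapsto x^\alpha$ (for $\alpha > 1$) with the geometric-series tightness of $2^\alpha \cdot \frac{2^{\alpha k}-1}{2^\alpha - 1}$ to ensure that any deviation from the prescribed scale assignment strictly exceeds the budget. Once this cost-gap estimate is established, the remainder of the reduction goes through in parallel with the MSR proof, and the corollary follows.
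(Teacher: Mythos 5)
Your forward direction is fine: refining the cover of Lemma~\ref{lem:msr-grid-tiling-feasibility} into a partition gives clusters of diameter at most $2d_i$, hence cost at most $\sum_{i=1}^k(2d_i)^\alpha$, which is the stated bound (under the normalization $d_i=2^{i-1}$ that makes the paper's arithmetic consistent). The problem is the converse, which you explicitly defer as ``the main obstacle'': the cost-gap estimate you hope to extract from convexity and the tightness of the geometric series does not exist for this construction. The reason is structural, not a matter of more careful accounting. In MSD a cluster need not look like a ball around a point of $T_i$: the set $\{a_i\}\cup T_i$ has diameter only $d_i$ (every point of $T_i$ is at distance exactly $d_i$ from $a_i$, and $T_i$ itself has tiny diameter), not $2d_i$. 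Consequently the partition $C_1=\{a_1\}\cup T_1\cup T$ and $C_i=\{a_i\}\cup T_i$ for $2\le i\le k$ uses exactly $k$ clusters and has cost roughly $(2d_1)^\alpha+\sum_{i=2}^k d_i^\alpha$, which is strictly below the budget $\sum_{i=1}^k(2d_i)^\alpha$ for every $\alpha\ge 1$ (the dominant term is smaller by a factor $2^\alpha$), and it exists whether or not the Grid Tiling instance is feasible. So the intermediate structure you want to force (``$a_i$'s cluster has diameter $\le 2d_i$, meets $T_i$, and the cluster covering each off-diagonal point of $T$ carries a selected witness'') is simply not implied by the budget: the $a_i$'s cluster can stop at $T_i$, all of $T$ can be dumped into the cheapest scale's cluster, and then the $+\eps$ infeasibility penalties are never paid. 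This is exactly the coupling that makes the MSR argument work (a ball that covers $a_i$ must be centered in $T_i$ and must also reach distance $d_i$ on the far side, which is where the off-diagonal points sit) and which has no analogue for arbitrary MSD clusters.

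Be aware also that the paper offers no detailed proof to mirror here: the corollary is justified only by the remark that ``the solution clusters remain the same, but we consider their diameters, which are twice their corresponding radii,'' which glosses over precisely the point above. To actually close the converse you would need to change the reduction rather than sharpen the accounting -- for instance, replace each single point $a_i$ by a pair $a_i,a_i'$ with $d(a_i,a_i')=2d_i$ and both at distance $d_i$ from all of $T_i$ (so that any cluster containing the pair already has diameter $2d_i$, and splitting the pair across clusters is expensive), and then redo both directions and the doubling-dimension bound for the modified space. Also, your claim that each off-diagonal point $(t_i,t_j)$ must lie in the cluster of index $\max(i,j)$ is not what happens even in the intended solution (it may lie in the cluster of index $i$ when the first coordinate differs from the selected $t_i$); this is minor compared to the missing cost-gap, but it signals that the bookkeeping in your last step would need to be redone as well.
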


And then we obtain the following lower bound:

\begin{theorem}\label{thm:hardness-alpha-msd2}
If ETH holds, there is no exact algorithm for $\alpha$-MSD with $k$ clusters which runs in time $n^{o(k)}$, even in spaces of constant doubling dimension.
\end{theorem}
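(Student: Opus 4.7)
The plan is to derive the theorem by composing the just-stated Corollary with the ETH-based $n^{\Omega(k)}$ lower bound for Grid Tiling (Theorem \ref{thm:grid-tiling-hardness}) and with Lemma \ref{lem:msr-doubling-grid}, which guarantees that the metric produced by the reduction has constant doubling dimension. Given an instance of Grid Tiling with parameter $k$, I would build exactly the same metric space $X$ used in the MSR reduction and then ask the corresponding $\alpha$-MSD decision problem with threshold $\tau := 2^{\alpha}(2^{\alpha k}-1)/(2^{\alpha}-1)$. An $\alpha$-MSD algorithm running in $n^{o(k)}$ would thereby decide Grid Tiling in the same time (note $|X| = O(k^2 n^4)$ is polynomial in $n$), contradicting ETH.

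The substantive content is the justification of the Corollary. For the forward direction, given a feasible Grid Tiling solution $\{s_{i,i}\}$, I would take the same cluster structure as in Lemma \ref{lem:msr-grid-tiling-feasibility}: for each $i$, let $C_i$ be the points that the MSR solution would assign to the ball $B(t_i, d_i)$ around the corresponding $t_i \in T_i$. Since $C_i \subseteq B(t_i, d_i)$, the triangle inequality yields $\diam(C_i) \leq 2 d_i$, and therefore $\sum_i \diam(C_i)^{\alpha} \leq \sum_i (2 d_i)^{\alpha}$, which evaluates to $\tau$ using the geometric-series identity $\sum_{i=1}^{k} 2^{\alpha i} = 2^\alpha(2^{\alpha k}-1)/(2^\alpha - 1)$.

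For the converse, assuming a solution $\C = \{C_1,\ldots,C_k\}$ of cost at most $\tau$, the plan is to mirror the MSR converse argument, upgraded to diameters. I would argue inductively, from $i = k$ down to $i = 1$, that there must be a distinct cluster $C_{\pi(i)}$ containing the apex $a_i$ whose diameter is at most $2 d_i$ and which contains at least one point of $T_i$. The key observation is that by the construction, every point other than those in $T_i$ lies at distance strictly greater than $d_i$ from $a_i$, so any cluster covering $a_i$ without meeting $T_i$ must have diameter $\geq 2 d_i + 2\eps$ (inflating that term in the cost by a factor of at least $(1+\eps/d_i)^\alpha$); likewise, merging two apex points $a_i, a_j$ with $i<j$ into one cluster forces diameter $\geq 2 d_j + d_i > 2 d_j$, which inflates that single term by more than $\tau$ can absorb. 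Once the inductive structural claim is proved, each $C_{\pi(i)}$ yields a choice of $s_{i,i}\in S_{i,i}$, and pairwise feasibility of these choices follows exactly as in the MSR proof (the point $(t_i, t_j) \in T_{i,j}$ must lie in one of $C_{\pi(i)}, C_{\pi(j)}$, and only the ``feasible'' placement keeps both diameters within budget).

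The main obstacle will be the accounting in the converse direction: ruling out rearrangements that split clusters finely at low levels in exchange for a single larger cluster at a high level. This requires exploiting the strict convexity of $x\mapsto x^\alpha$ for $\alpha>1$ together with the geometric growth $d_i = 2^{i-1}$, so that even an $\eps$-sized overshoot at the largest level $k$ outweighs the entire budget from levels $1,\ldots,k-1$. Once the Corollary is in hand, the theorem follows immediately by the reduction above, and constant doubling dimension is inherited directly from Lemma \ref{lem:msr-doubling-grid}.
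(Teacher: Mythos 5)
Your plan is the same as the paper's (reuse the Grid Tiling metric, keep the MSR clusters, use the threshold $\tau = 2^{\alpha}(2^{\alpha k}-1)/(2^{\alpha}-1)$), and your forward direction is fine. The genuine gap is the converse, exactly the step you defer to the end: with this construction and this threshold it is not merely hard to prove, it is false, so no accounting based on convexity of $x^{\alpha}$ and the geometric growth of the $d_i$ can establish it. Unlike MSR, a cluster of an MSD solution covering $a_i$ need not resemble a ball of radius $d_i$ around a point of $T_i$: the cluster $\{a_i\}\cup T_i$ has diameter exactly $d_i$ (every $t_i\in T_i$ is at distance exactly $d_i$ from $a_i$, and $T_i$ lies on a $2\eps$-line), hence it costs $d_i^{\alpha}$ rather than the intended $(2d_i)^{\alpha}$. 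Moreover, all of $T=\bigcup_{i<j}T_{i,j}$ lies on a single $2\eps$-line, and every point of $T$ is within $d_1+O(\eps|T|)$ of $T_1$ (route through some point $(t_1',t_j)$ and then along the line), so the single cluster $\{a_1\}\cup T_1\cup T$ has diameter $2d_1+O(\eps|T|)$. Consequently the $k$ clusters $\{a_1\}\cup T_1\cup T$ and $\{a_i\}\cup T_i$ for $2\le i\le k$ cover the whole space for every Grid Tiling instance, feasible or not, at cost roughly $\sum_i d_i^{\alpha}+O(1)$, which is below $\tau$ by about a factor of $2^{\alpha}$. In particular the key steps of your sketch fail: the point $(t_i,t_j)$ need not lie in the cluster of $a_i$ or of $a_j$, and the $\eps$-penalties never bite, because the relevant diameters are realized through other members of $T_i$ and through the $T$-line, while the budget has slack $(2^{\alpha}-1)\sum_i d_i^{\alpha}$ coming from the half-price apex clusters.

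To be fair, the paper itself proves this theorem only by asserting the analogous corollary (``the solution clusters remain the same, but we consider their diameters, which are twice their corresponding radii''), so you have correctly located the one step that needs real work; but mirroring the MSR converse cannot supply it, and you cannot appeal to the corollary as given either. A correct proof must change the reduction (or the target value) so that, for the diameter objective, covering $a_i$ genuinely forces a cost of order $(2d_i)^{\alpha}$ on the cluster responsible for the level-$i$ conflict points, and so that the gadget points of the sets $T_{i,j}$ cannot all be absorbed into one cheap low-level cluster; only after such a modification does an accounting argument of the type you describe have a chance to go through.
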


\bibliographystyle{alpha}
\bibliography{ref}
\end{document}